\providecommand{\tabularnewline}{\\}
\DeclareRobustCommand{\mklyxadded}[1]{\textcolor{lyxadded}\bgroup#1\egroup}
\DeclareRobustCommand{\mklyxdeleted}[1]{\textcolor{lyxdeleted}\bgroup\mklyxsout{#1}\egroup}
\DeclareRobustCommand{\mklyxsout}[1]{\ifx\\#1\else\sout{#1}\fi}
\theoremstyle{definition}
\newtheorem{defn}{\protect\definitionname}[section]
\theoremstyle{plain}
\newtheorem{assumption}{\protect\assumptionname}
\theoremstyle{plain}
\newtheorem{thm}{\protect\theoremname}[section]
\theoremstyle{plain}
\newtheorem{lem}{\protect\lemmaname}[section]
\providecommand{\assumptionname}{Assumption}
\providecommand{\definitionname}{Definition}
\providecommand{\lemmaname}{Lemma}
\providecommand{\theoremname}{Theorem}
\begin{document}
\title{Uniform Validity of the Subset Anderson-Rubin Test under Heteroskedasticity
and Nonlinearity\hspace{2pt}\thanks{We thank participants of the Econometrics Workshop at Notre Dame 2024
and seminar participants at the Federal Reserve Bank of New York for
helpful comments. The views expressed in this paper are the sole responsibility
of the authors and to not necessarily reflect the views of the Federal
Reserve Bank of San Francisco or the Federal Reserve System.}}
\author{Atsushi Inoue\hspace{0.5pt}\thanks{Department of Economics, Vanderbilt University (\protect\protect\href{mailto:atsushi.inoue@vaderbilt.edu}{atsushi.inoue@vaderbilt.edu}).}\and
\`{O}scar Jord\`{a}\hspace{0.5pt}\thanks{Federal Reserve Bank of San Francisco; Department of Economics, University
of California, Davis; and CEPR (\protect\protect\href{mailto:oscar.jorda@sf.frb.org}{oscar.jorda@sf.frb.org};
\protect\protect\href{mailto:ojorda@ucdavis.edu}{ojorda@ucdavis.edu}).}\and Guido M. Kuersteiner\hspace{2pt}\thanks{Department of Economics, University of Maryland (\protect\protect\href{mailto:gkuerste@umd.edu}{gkuerste@umd.edu}).}}
\maketitle
\begin{abstract}
{\normalsize We consider the Anderson-Rubin (AR) statistic for a general
set of nonlinear moment restrictions. The statistic is based on the
criterion function of the continuous updating estimator (CUE) for
a subset of parameters not constrained under the Null. We treat the
data distribution nonparametrically with parametric moment restrictions
imposed under the Null. We show that subset tests and confidence intervals
based on the AR statistic are uniformly valid over a wide range of
distributions that include moment restrictions with general forms
of heteroskedasticity. We show that the AR based tests have correct
asymptotic size when parameters are unidentified, partially identified,
weakly or strongly identified. We obtain these results by constructing
an upper bound that is using a novel perturbation and regularization
approach applied to the first order conditions of the CUE. Our theory
applies to both cross-sections and time series data and does not assume
stationarity in time series settings or homogeneity in cross-sectional
settings.}{\normalsize\par}
\end{abstract}
\bigskip{}
\noindent{} \emph{JEL classification codes:} C11, C12, C22, C32, C44.

\bigskip{}
\noindent\emph{Keywords:} uniformly valid inference, lack of identification,
local projection, impulse response. \bigskip{}
\noindent{}

\section{Introduction}

\global\long\def\diag{\operatorname{diag}}%
\global\long\def\tr{\operatorname{tr}}%
\global\long\def\vecd{\operatorname{vec_{D}}}%
\global\long\def\vec{\operatorname{vec}}%
\global\long\def\plim{\operatorname{plim}}%
\global\long\def\cov{\operatorname{Cov}}%
\global\long\def\var{\operatorname{Var}}%
\global\long\def\argmax{\operatorname{argmax}}%
\global\long\def\argmin{\operatorname{argmin}}%
\global\long\def\rank{\operatorname{rank}}%

We analyze Anderson-Rubin (AR) statistics based on the continuous
updating estimator (CUE) for parametric non-linear moment conditions
and under a null hypothesis that restricts only a subset of the parameters.
We extend results about the uniform validity of these statistics by
Guggenberger, Kleibergen, Mavroeidis and Chen \citeyearpar{GuggenbergerKleibergenMavroeidisChen2012},
GKMC12, who considered the case of linear models in iid settings and
under homoskedasticity. Our CUE framework is non-linear, allows for
heteroskedasticity, general dependence and heterogeneity. In time
series environments stationarity is not assumed but we do rule out
trending behavior including the presence of unit roots. We make no
parametric assumptions about the data generating process. The null
hypothesis only imposes a set of possibly non-linear moment restrictions
that are parameterized by a finite dimensional parameter. Under our
assumptions, estimated parameters may be partially or completely unidentified
and there may not be a valid limiting distribution for estimated parameters
even in situations where the parameters are identified. The setting
also includes classical weak and non-identified scenarios where the
limiting distribution of parameter estimates is non-standard and/or
depends on nuisance parameters that cannot be estimated. This means
that Wald type inference is potentially highly unreliable. Our inference
procedure using the Anderson-Rubin statistic is simple to implement
and only requires conventional chi-square based critical values. Our
conditions center around objects that the researcher observes and
controls. These include the data used in the analysis and the moment
conditions that are being specified for inference. On the other handwe
treat the unobserved data generating process non-parametrically.

Our theory contributes to the literature on inference when parameters
are unidentified, partially or weakly identified. \citet{Phillips1989}
is an early investigation of inference in partially identified models.
The weak instrument literature to which \citet{Dufour1997}, \citet{Staiger1997},
\citet{Stock2000}, \citet{Kleibergen2002}, \citet{Kleibergen2005}
and \citet{Moreira2003} made important early contributions focuses
on inference that is robust to a lack of identification. \citet{Dufour1997}
and \citet{Staiger1997} are the first to note that the Anderson-Rubin
(AR) statistic, introduced by \citet{Anderson1949}, can be used to
construct valid confidence intervals in weakly identified settings.
\citet{Kleibergen2002} develops a modified LM statistic for linear
instrumental variables models under homoskedasticity that reduces
to the AR statistic in the just identified case. \citet{Stock2000}
propose a generalized version of the AR statistic based on the CUE.
\citet{Stock2000} and \citet{Dufour2005} consider subvector tests
in settings where only parameters that are part of the null hypothesis
may be underidentified. \citet{Kleibergen2005} proposes the K-statistic
which is based on the first order conditions of CUE to account for
conditional heteroskedasticity and serial correlation. Of these papers,
\citet{Stock2000} consider moment conditions and test statistics
that are similar to our setting. Their theoretical analysis is limited
to weak instrument asymptotic sequences which are not sufficient to
establish uniform validity. \citet{Kleibergen2005} considers subvector
tests for a general set of moment conditions as we are but imposes
strong identification of the parameters not subject to the null hypothesis.
His asymptotic analysis includes cases where the parameters restricted
under the null are unidentified or weakly identified but is pointwise
rather than uniform. As GKMC12, we do not impose any restrictions
on the degree of identification of either set of parameters and provide
a uniform asymptotic analysis. There is a large subsequent literature
on inference robust to identification failure including Andrews, Moreira
and Stock (2006)\nocite{Andrews2006}, \citet{Chaudhuri2011}, \citet{Andrews2012a},
\citet{Andrews2016}, \citet{Andrews2019}, Andrews, Marmer and Yu
\citeyearpar{Andrews2019a}, Andrews, Cheng and Guggenberger \citeyearpar{Andrews2020}
and Guggenberger, Kleibergen and Mavroeidis \citeyearpar{Guggenberger2024}.

Our theoretical results are closest to Guggenberger, Kleibergen, Mavroeidis
and Chen \citeyearpar{GuggenbergerKleibergenMavroeidisChen2012},
GKMC12, who show that the AR statistic applied to a subset of the
parameters in linear instrumental variables models has uniformly valid
size over a class of distributions that include completely unidentified
scenarios. Their results are valid for iid data with homoskedastic
errors. Guggenberger, Kleibergen and Mavroeidis \citeyearpar{Guggenberger2019},
GKM19, develop a more powerful subvector test under homoskedasticity
but require numerically computed critical values. \citet{Andrews2017a}
proposes a two step subvector test that is robust to heteroskedasticity
but requires a data dependent significance level in the second step.
Guggenberger, Kleibergen and Mavroeidis \citeyearpar{Guggenberger2024},
GKM24, extend GKM19 to allow for conditional heteroskedasticity by
introducing a model selection procedure that selects between the AR
statistic for linear models analyzed in GMK19 and the AR/AR test of
\citet{Andrews2017a}. There does not seem to exist theoretical results
that show the uniform validity of a simple one step subvector test
robust to non-identification of the untested parameters for general
heteroskedasticity, temporal dependence and non-stationarity. We show
that a test based on the Anderson-Rubin testing principle formulated
for the CUE is uniformly valid. The AR test is based on conventional
critical values for the chi-square distribution and thus easy to implement.
Our assumptions are high level and do not assume a specific model,
nor identification of any of the estimated or tested parameters. \citet{Andrews2019}
consider tests under general identification failure and heteroskedasticity
but only consider subvector tests for the case when the parameters
not subject to the null are strongly identified. Of the four types
of regularity conditions that Andrews and Guggenberger (2019,p. 1709)
allow to fail, we account for two failures, namely that the moment
functions have multiple solutions and that the Jacobian of the moment
functions are column rank deficient. On the other hand, and unlike
\citet{Andrews2019}, we do maintain that the moment functions have
full rank covariance matrices and that parameters under the null are
in the interior of the parameter space. We also consider valid confidence
intervals that are based on inverting the AR statistic. Inverting
the AR statistic to construct a confidence interval dates back to
\citet{Anderson1949} and was more recently investigated in the context
of weakly identified or unidentified models by \citet{Dufour1997},
\citet{Staiger1997} and \citet{Stock2000}.

We illustrate the assumptions of our theory using examples of simultaneous
equations models with heteroskedasticity and temporal dependence as
well as a time series application with inference for impulse response
analysis using the local projections proposed by \citet{Jorda2005}.
To the extent that the robust testing literature considered time series
settings, stationarity has been maintained to the best of our knowledge.
We allow for general forms of non-stationarity for processes with
well defined second moments. While this rules out trending forms of
non-stationarity such as unit roots and non-stationary fractionally
integrated processes we do allow for time-changing relationships between
outcomes, policy variables and controls.

The paper is organized as follows. Section \ref{sec:Tests} defines
the inference problem and the test statistic we consider. Section
\ref{sec:Theory} discusses high level assumptions and presents the
main theoretical result establishing uniform validity. The section
also discusses the main steps in our proof strategy. Section \ref{sec:Examples}
presents two examples that illustrate the high level assumptions we
impose. Section \ref{sec:Monte-Carlo-Experiments} presents Monte
Carlo evidence using the New Keynesian Phillips curve model considered
by \citet{Kleibergen2009}. Proofs as well as some Lemmas are contained
in the Appendix.

\section{Test Statistics \label{sec:Tests}}

Consider a probability space $\left(\Omega,\mathcal{F},P\right)$
on which a vector valued double array\footnote{A double array does not restrict the index $t$ for a given $n,$
while a triangular array imposes the restriction $t\leq n$.} of random variables $\chi_{n,t}\in\mathbb{R}^{d_{\chi}}$ is defined
and where $n$ is an integer value representing sample size and $t$
is the observation index. We do not assume a parametric data generating
process but instead consider sequences of induced probability measures
$P_{\chi_{n,t}}$. We are interested in testing a null hypothesis
that imposes constraints on $P_{\chi_{n,t}}$ that can be expressed
in terms of moment conditions and that are parametrized by finite
dimensional parameters $\beta$ and $\gamma$. More specifically,
the null hypothesis imposes the constraint $H_{0}:\beta=\beta_{n,0}$
while $\gamma$ remains unconstrained under the null. Constraints
on moments are expressed using moment functions $g\left(\chi_{n,t};\beta,\gamma\right)$
where $g\left(.\right):\mathbb{R}^{d_{\chi}}\times\mathbb{R}^{d_{\beta}+d_{\gamma}}\rightarrow\mathbb{R}^{d}$.
For ease of notation we define $g_{t}\left(\beta,\gamma\right)\equiv g\left(\chi_{n,t};\beta,\gamma\right)$
and suppress the dependence on $\chi_{n,t}$. We do assume that the
moment condition
\begin{equation}
n^{-1}\sum_{t=1}^{n}E\left[g_{t}\left(\beta_{n,0},\gamma_{n,0}\right)\right]=0\label{eq:MomentCond}
\end{equation}
holds under $H_{0}$ for at least one value $\gamma_{n,0}$. In a
strongly identified setting $\beta_{n,0}$ and $\gamma_{n,0}$ are
the unique solutions to \eqref{eq:MomentCond}. However, we do allow
for partially, weakly and unidentified cases where the values $\beta_{n,0}$
and $\gamma_{n,0}$ may not be the only parameter values that satisfy
\eqref{eq:MomentCond}.

Our framework allows for general dependence patters, including time
series data. We distinguish between outcome variables $y_{n,t}$,
exogenous or endogenous variables $x_{n,t}$ associated with the parameters
$\beta$ in linear models, additional endogenous, exogenous or predetermined
variables $w_{n,t}$ associated with auxiliary parameters $\gamma$
and instruments $z_{n,t}$ that contain exogenous components of $x_{nt}$
and $w_{n,t}$. When moment conditions are based on non-linear functions
the association between parameters $\beta$, $\gamma$ and specific
covariates is often less strong and is not required for our results.
In linear models where all variables in $x_{n,t}$ and $w_{n,t}$
are exogenous $z_{n,t}$ may be composed only of these variables.
The vector $\chi_{n,t}$ contains all the distinct random variables
contained in $y_{n,t},x_{n,t},w_{n,t}$ and $z_{n,t}$. We define
$X_{n,t}:=\left(x_{n,t}',w_{n,t}'\right)'$ with $X_{n,t}\in\mathbb{R}^{d_{x}}$.
We assume that we observe a sample $\chi_{n,t}$ for $t=1,...,n.$

The empirical analog to the moment condition is based on sample averages
\begin{equation}
\hat{g}_{n}\left(\beta,\gamma\right)=n^{-1}\sum_{t=1}^{n}g_{t}\left(\beta,\gamma\right).\label{eq:gn_hat}
\end{equation}
When evaluated at $\theta_{n,0}=\left(\beta_{n,0}',\gamma_{n,0}'\right)'$
we use the shorthand notation $\hat{g}_{n,0}=\hat{g}_{n}\left(\beta_{n,0},\gamma_{n,0}\right).$
The moment conditions \eqref{eq:MomentCond} may be general scores
of $Z$-estimators or may be moment functions for a GMM estimator
with instruments $z_{t}$ such that 
\begin{equation}
g_{t}\left(\beta,\gamma\right)=q\left(y_{n,t},X_{n,t},\beta,\gamma\right)\otimes z_{n,t}\label{eq:GMM_Moments}
\end{equation}
and where $q\left(.\right):\mathbb{R}^{d_{x}+1}\times\mathbb{R}^{d_{\beta}+d_{\gamma}}\rightarrow\mathbb{R}^{d_{q}}$
is a $d_{q}$-dimensional vector of functions. The total number of
moment restrictions is $d=d_{q}d_{z}$.

Test statistics for $H_{0}$ are based on the CUE criterion function
evaluated under $H_{0}$. The CUE was proposed by \citet{Sargan1958}
and analyzed by Hansen, Heaton and Yaron \citeyearpar{HansenHeatonYaron1996}
and more recently \citet{Donald2000} and \citet{Newey2004}. \citet{Stock2000}
were the first to consider the CUE criterion function for hypothesis
testing in a weakly identified setting involving possibly non-linear
moment restrictions.

The CUE criterion function depends on an estimate of the variance
of $\hat{g}_{n}\left(\beta,\gamma\right)$. Following the literature
on heteroskedasticity and autocorrelation robust standard errors (HAC),
see Newey and West (1987, 1994)\citet{Newey1987}, \citet{Newey2004},
\citet{Andrews1991}, \citet{Andrews1992}, and in particular \citet{Jong2000},
we define the matrix 
\[
\Omega_{n}\left(\beta,\gamma\right)=n^{-1}\sum_{t=1}^{n}\sum_{s=1}^{n}E\left[g_{t}\left(\beta,\gamma\right)g_{s}\left(\beta,\gamma\right)'\right].
\]
In addition, when evaluated at a sequence of pseudo-true values $\beta_{n,0}$
and $\gamma_{n,0}$ the covariance matrix $\Omega_{n,0}$ is defined
as 
\[
\Omega_{n,0}=\Omega_{n}\left(\beta_{n,0},\gamma_{n,0}\right)=\var\left(n^{-1/2}\sum_{t=1}^{n}g_{t}\left(\beta_{n,0},\gamma_{n,0}\right)\right)
\]
 where the last equality holds because the moment conditions $n^{-1/2}\sum_{t=1}^{n}E\left[g_{t}\left(\beta_{n,0},\gamma_{n,0}\right)\right]=0$
are satisfied for $\beta_{n,0}$ and $\gamma_{n,0}$.

The matrix $\Omega_{n,0}$ can be replaced by a sample analog following
constructions in the HAC literature, in particular De Jong and Davidson
(2000) who explicitly account for triangular arrays that are central
to our theoretical discussion. To define the estimators of the GMM
weight matrix and long run variances, we adopt the following definition
of the class of kernel functions $k\left(.\right)$ with bandwidth
sequence $a_{n}$ from \citet{Jong2000}, see also \citet{Andrews1991}.
\begin{defn}
Let $\mathscr{\mathcal{\mathscr{K}}}$ be the class of kernel functions
$k\left(.\right):\mathbb{R}\rightarrow\left[-1,1\right]$ such that
$k\left(0\right)=0,$ $k\left(x\right)=k\left(-x\right)$ for all
$x\in\mathbb{R},$ $\int_{-\infty}^{\infty}\left|k\left(x\right)\right|dx<\infty$
and for $\psi\left(\xi\right)=\int_{-\infty}^{\infty}k\left(x\right)e^{i\xi x}dx$,
$\psi\left(\xi\right)\geq0.$
\end{defn}
Then 
\begin{equation}
\hat{\Omega}_{n}\left(\beta,\gamma\right)=n^{-1}\sum_{t=1}^{n}\sum_{s=1}^{n}k\left(\frac{t-s}{a_{n}}\right)g_{t}\left(\beta,\gamma\right)g_{s}\left(\beta,\gamma\right)'.\label{eq:Omega_hat}
\end{equation}
The continuous updating estimator for $\theta=\left(\beta',\gamma'\right)'$
is now given by 
\[
\hat{\theta}_{CUE}=\left(\hat{\beta}'_{CUE},\hat{\gamma}'_{CUE}\right)'=\textrm{argmin}_{\beta,\gamma}n^{-1}Q_{n}\left(\beta,\gamma\right)
\]
where $Q_{n}\left(\beta,\gamma\right)=n\hat{g}_{n}\left(\beta,\gamma\right)'\hat{\Omega}_{n}^{-1}\left(\beta,\gamma\right)\hat{g}_{n}\left(\beta,\gamma\right).$
The subset AR statistic for the null hypothesis $H_{0}$ based on
the CUE is given as 
\begin{equation}
\textrm{AR}_{\textrm{C}}\left(\beta_{n,0}\right)=\textrm{min}_{\gamma}Q_{n}\left(\beta_{n,0},\gamma\right).\label{eq:CUAR}
\end{equation}
Let $d_{\gamma}$ be the dimension of $\gamma$ and assume that $d-d_{\gamma}>0.$
Under regularity conditions, and when $\gamma$ is strongly identified
it is well known from \citet{Stock2000}, \citet{Newey2004} and \citet{Kleibergen2005}
that $\textrm{AR}\left(\beta_{n,0}\right)$ has a limiting $\chi_{d-d_{\gamma}}^{2}$distribution.
When the AR statistic is formulated based on the LIML estimator for
linear models and iid observations with homoskedastic errors GKMC12
show that \eqref{eq:CUAR} is bounded by a random variable with a
limiting $\chi_{d-d_{\gamma}}^{2}$ distribution uniformly over sequences
of data distributions where $\beta$ and $\gamma$ may not be identified.
A nominal size $\alpha$ test of $H_{0}$ then is based on rejecting
the null when 
\[
\textrm{AR}_{\textrm{C}}\left(\beta_{n,0}\right)>c_{1-\alpha,\chi_{d-d_{\gamma}}^{2}}
\]
where $c_{1-\alpha,\chi_{d-d_{\gamma}}^{2}}$ is the $1-\alpha$ quantile
of the $\chi_{d-d_{\gamma}}^{2}$ distribution. Based on Anderson
and Rubin (1949) we can construct a confidence region with coverage
rate of at least $1-\alpha$ by inverting the AR statistic 
\[
\textrm{CI}=\left\{ \beta\in\mathbb{R}^{d_{\beta}}|\mathit{\textrm{AR}_{\textrm{C}}}\left(\beta\right)\leq c_{1-\alpha,\chi_{d-d_{\gamma}}^{2}}\right\} .
\]
The confidence set CI is the set of all values $\beta$ for which
the AR statistic does not reject $H_{0}$. AR based confidence sets
were analyzed in the context of weakly identified or non-identified
linear IV methods by \citet{Dufour1997}, and for weakly identified
instrumental variables by \citet{Staiger1997} and \citet{Stock2000}.
GKMC12 are the first to show uniform validity of the subset AR test
over all sequences of linear models, including weakly identified,
partially identified and unidentified cases.

A key ingredient to uniform validity of the AR statistic in more general
settings than linear models with homoskedastic errors is that the
statistic is based on the CUE. The CUE implicitly orthogonalizes the
moment functions $\hat{g}_{n}\left(\theta\right)$ from the influence
of estimating nuisance parameters $\gamma$. Orthogonalization lies
at the core of the $C\left(\alpha\right)$ test of \citet{Neyman1959}
and was shown in \citet{Donald2000} to be a feature of the first
order conditions of the CUE. Let $\hat{\gamma}_{n}=\hat{\gamma}_{n}\left(\beta_{n,0}\right)=\argmin_{\gamma}Q_{n}\left(\beta_{n,0},\gamma\right).$
If $\hat{\gamma}_{n}$ is not unique then any value for $\gamma$
that minimizes $Q_{n}\left(\beta_{n,0},\gamma\right)$ can be chosen.
We show that the only effect of estimating $\gamma$ by $\hat{\gamma}_{n}$
on the asymptotic distribution $\textrm{AR}_{\textrm{C}}\left(\beta_{n,0}\right)=Q_{n}\left(\beta_{n,0},\hat{\gamma}_{n}\right)$
is a reduction in the degrees of freedom of the resulting $\chi^{2}$
distribution. This result does not hold when first stage estimators
for $\gamma$ such as two stage least squares or GMM are plugged into
$Q_{n}\left(\beta_{n,0},\gamma\right).$

\section{Theoretical Results\label{sec:Theory}}

We start by stating detailed regularity conditions we impose. Let
$\left(\Omega,\mathcal{F\mathit{,P}}\right)$ be a probability space.
Let $\chi_{n}=\left\{ \chi_{n,t}:t=1,2,...;n=1,2,...\right\} $ be
a double array of random vectors of dimension $d_{\chi}$. Let $y_{n,t}$
and $x_{n,t}$ be double arrays of random variables with elements
from $\chi_{n,t}.$ Let $w_{n,t}$ be a $d_{w}\times1$ vector and
$z_{n,t}$ a $d_{z}\times1$ vector of elements from $\chi_{n,t}$.
Assume that $x_{n,t}$ and $\text{\ensuremath{w_{n,t}} }$ contain
different elements, while $z_{n,t}$ may contain $x_{n,t}$ and $\text{\ensuremath{w_{n,t}} }$
as well as additional variables. The exact configuration of $z_{n,t}$
depends on whether $x_{n,t}$ and $\text{\ensuremath{w_{n,t}} }$
are believed to be endogenous or exogenous. Let $\lambda_{min}\left(A\right)$
denote the smallest eigenvalue of the matrix $A$, let $\left\Vert A\right\Vert _{2}=\tr\left(A'A\right)^{1/2}$
be the Frobenius norm and $\left\Vert A\right\Vert =\sup_{\left\Vert x\right\Vert =1}\left\Vert Ax\right\Vert $
be the operator norm where $\left\Vert .\right\Vert $ applied to
a vector in $\mathbb{R}^{k}$ is the usual Euclidean norm. Let $\chi_{n}:\Omega\rightarrow\mathbb{R}^{\infty}$
and denote by $\chi_{n}$$\rightsquigarrow$$\chi$ weak convergence
of $\chi_{n}$ to a random element $\chi$ in $\mathbb{R}^{\infty}$,
\citeauthor{vanderVaartWellner1996} (1996, Definition 1.3.3).

Let $\mathscr{X}$ be a collection of measurable processes $\chi$.
Each $\chi\in\mathscr{X}$ induces a probability distribution $P_{\chi}$.
We consider sequences of $\chi_{n}=\left(\chi_{n,1},\chi_{n,2},...,\chi_{n,n},...\right)$
of processes $\chi_{n}\in\mathscr{X}$ that contain subsequences that
converge to a limiting random element $\chi$ in $\mathbb{R}^{\infty}$.
This is formalized in Assumption \ref{assu:Asy_Tightness}.
\begin{assumption}
\label{assu:Asy_Tightness}\textup{Every sequence $\chi_{n}\in\mathscr{X}$
for $n\geq1$ is measurable and asymptotically tight.}\footnote{A sequence $\chi_{n}$ is asymptotically tight if for every $\varepsilon>0$
there exists a compact $K$ with $\delta$-enlargement $K^{\delta}=\left\{ \eta\in\mathbb{R}^{\infty}|d\left(\eta,K\right)<\delta\right\} $
for a metric $d\left(.,.\right)$ on $\mathbb{R}^{\infty}$ such that
$\liminf_{n}P\left(\chi_{n}\in K^{\delta}\right)\geq1-\varepsilon\text{ }$
for every $\delta>0$, see \citeauthor{vanderVaartWellner1996} (1996,
p.21). Note that inner probability is replaced by $P$ because we
only consider measurable sequences.}
\end{assumption}
\phantom{.}

Assumption \ref{assu:Asy_Tightness} and Prohorov's theorem (see \citeauthor{vanderVaartWellner1996},
1996, Theorem 1.3.9) imply that for every subsequence $n_{k}$ there
is a further subsequence of $n_{k}$ denoted by $n_{k_{l}}$ such
that $\chi_{n_{k_{l}}}$$\rightsquigarrow$$\chi$ and where the limit
$\chi$ may depend on the particular subsequence. In our notation
we suppress the dependence of limits $\chi$ on the subsequence. For
ease of notation we omit references to specific subsequences and refer
to converging subsubsequences $n_{k_{l}}$ simply as converging subsequences
$n_{k}$. Assumption \ref{assu:Asy_Tightness} implies that all finite
dimensional distributions of $\chi_{n_{k}}$ converge along converging
subsequences. To the extent that moments of $\chi_{n}$ exist, moments
related to finite dimensional distributions also converge along converging
subsequences. Assumption \ref{assu:Asy_Tightness} neither implies
that ordinary limits (as opposed to subsequential limits) exist, nor
does it rule out the existence of such ordinary limits. See \citet{Andrews2020}
for a related assumption that makes the dependence of the limit $\chi$
on the subsequence explicit.

Assumption \ref{assu:Asy_Tightness} is a nonparametric generalization
of semiparametric assumptions in GKMC12 and \citet{Andrews2020} and
is similar to assumptions imposed in \citet{Andrews2019}. These papers
define classes of data generating processes and sequences taking values
in these model classes that converge to limiting data generating processes
as the sample size increases. Special cases of such classes can be
found in the weak instrument literature that follows the approach
of \citet{Staiger1997} and where only particular sequences leading
to so called weak instrument limits are considered. In order to analyze
uniformity, the setting in GKMC12 is significantly more general as
it accounts for completely unidentified, partially identified as well
as fully identified data generating processes.

We do not assume that $\chi_{n}$ is generated by a particular parametric
or semiparametric model. We only assume that under the null hypothesis
a set of finite dimensional moment restrictions parameterized by a
set of finite dimensional parameters hold. The null hypothesis is
specified in terms of restrictions on a subset of the parameter vector.
We do not specifically consider the case where the null hypothesis
restricts the entire parameter vector since in that scenario the AR
statistic trivially has uniformly correct size under our assumptions
and results. The next assumption postulates that the moment conditions
\eqref{eq:MomentCond} hold along parameter sequences under the null
hypothesis.
\begin{assumption}
\label{assu:H0}Let $\theta=\left(\beta',\gamma'\right)'$ and let
$\Theta\subseteq\mathbb{R}^{d_{\gamma}+d_{\beta}}$ be the parameter
space. Assume that $\Theta$ is compact. Let $\textrm{int}\Theta$
be the interior of $\Theta.$ There exists a random function $g_{t}\left(.\right):\mathbb{R}^{d_{\beta}+d_{\gamma}}\rightarrow\mathbb{R}^{d}$,
two times continuously differentiable with respect to $\gamma$ and
for each $n\geq1$ a vector $\theta_{n,0}=\left(\beta_{n,0}',\gamma_{n,0}'\right)'\in\textrm{int}\Theta$
with $H_{0}:\beta=\beta_{n,0}$ such that
\begin{equation}
n^{-1/2}\sum_{t=1}^{n}E\left[g_{t}\left(\beta_{n,0},\gamma_{n,0}\right)\right]=0.\label{eq:Mom_Cond}
\end{equation}
Along converging subsequences $n_{k}$ assume that $\lim_{k\rightarrow\infty}\theta_{n_{k},0}=\theta_{0}$
for some $\theta_{0}$\textup{$\in\textrm{int}\Theta$ }and where
$\theta_{0}$ may depend on the subsequence.
\end{assumption}
In applications the null hypothesis is usually formulated for a fixed
constant $\beta_{0}$ such that $H_{0}:\beta=\beta_{0}$. Our theory
covers a fixed null as a special case. Without assuming homogeneity
or stationarity $\gamma_{n,0}$ typically varies with $n$ even if
$\beta_{n,0}=\beta_{0}$ is fixed. Note that $\theta_{n,0}$ is not
necessarily unique under our assumptions and in general there are
infinitely many solutions to the moment condition. Our theory is not
intended to directly test for violations of Assumption \ref{assu:H0}
which imposes both functional form restrictions summarized by $g_{t}\left(.\right)$
as well as a restriction on the parameter $\beta_{0}.$ In other words,
and in line with existing theory, a test of $H_{0}$ is interpreted
as testing $\beta=\beta_{0}$ while taking for granted that the moment
condition in \eqref{eq:Mom_Cond} is satisfied for some $\theta.$
However, since our theoretical analysis is only conducted under the
null we do not take a stand on whether \eqref{eq:Mom_Cond} holds
for some $\theta$ under the alternative.
\begin{assumption}
\label{assu:Omega_Conv}Let $\hat{\Omega}_{n}\left(\theta\right)$
be defined in \eqref{eq:Omega_hat}, $\Omega_{n}\left(\theta\right)=n^{-1}\sum_{t=1}^{n}\sum_{s=1}^{n}E\left[g_{t}\left(\beta,\gamma\right)g_{s}\left(\beta,\gamma\right)'\right]$.
For all converging subsequences $n_{k}$, a sequence of bandwidth
parameters $a_{n}$ and a sequence $\hat{\theta}_{n}$ such that\textup{
$\hat{\theta}_{n_{k}}\rightarrow_{p}\theta_{0}$} there exists a symmetric
matrix $\Omega\left(\theta\right)$ of continuous functions of $\theta$
with $\inf_{\theta\in\Theta}\lambda_{\min}\left(\Omega\left(\theta\right)\right)\geq K_{\Omega}>0$
such that \\
i) \textup{$\Omega_{n_{k}}\left(\theta_{n_{k},0}\right)-\Omega\left(\theta_{0}\right)\rightarrow0$
and $\hat{\Omega}_{n_{k}}\left(\theta_{n_{k},0}\right)-\Omega_{n_{k}}\left(\theta_{0}\right)\rightarrow_{p}0$}\\
ii) \textup{$\Omega_{n_{k}}\left(\hat{\theta}_{n_{k}}\right)-\Omega\left(\theta_{0}\right)\rightarrow_{p}0$
and $\hat{\Omega}_{n_{k}}\left(\hat{\theta}_{n_{k}}\right)-\Omega_{n_{k}}\left(\theta_{0}\right)\rightarrow_{p}0$.}
\end{assumption}
\phantom{.}

The full rank assumption for $\Omega\left(\theta\right)$ corresponds
to similar assumptions in \citet{Stock2000} and \citet{Kleibergen2005}.
The assumption has been relaxed by \citet{Andrews2019} for a modified
version of the AR statistic. The convergence of $\hat{\Omega}_{n_{k}}\left(\theta\right)\rightarrow_{p}\Omega\left(\theta\right)$
was shown for triangular near epoch dependent arrays defined on mixing
sequences by \citet{Jong2000}. The limits in Assumption \ref{assu:Omega_Conv}
may depend on the specific subsequence. Additional regularity conditions
may guarantee that convergence obtains for $n_{k}=n$, a scenario
that is a special case under our assumptions.

When $g_{t}\left(.\right)$ is of the form \eqref{eq:GMM_Moments}
and the data are stationary a necessary condition for Assumption \ref{assu:Omega_Conv}
is that for all $a\in\mathbb{R}^{d_{z}}$, $\left\Vert a\right\Vert =1$
and all $n\geq1$ the variables $z_{n,t}$ are not colinear in the
sense that 
\begin{equation}
\sup_{t}\Pr\left(z_{n,t}'a=0\right)<1.\label{eq:Prob_za}
\end{equation}
 To see this, assume that this condition does not hold. Then, for
some $a\in\mathbb{R}^{d_{z}}$ it follows that $z_{n,t}'a=0\text{ a.s.}$
for all $n$ and $t.$ But this implies that $g_{t}\left(\beta,\gamma\right)\left(I_{q}\otimes a\right)=0\text{ a.s.}$
for all $n$ and $t$ which in turn implies that Assumption \ref{assu:Omega_Conv}
cannot hold. In addition, \eqref{eq:Prob_za} implies that $\lambda_{min}\left(E\left[z_{n,t}z_{n,t}'\right]\right)>0$
and $\lambda_{min}\left(n^{-1}\sum_{t=1}^{n}E\left[z_{n,t}z_{n,t}'\right]\right)>0.$

Using the pseudo true parameter value $\theta_{n,0}$ we now define
the pseudo residual 
\begin{equation}
\varepsilon_{n,t}=g_{t}\left(\beta_{n,0},\gamma_{n,0}\right)\label{eq:epsilon_nt_h}
\end{equation}
where $\varepsilon_{n,t}$ is a $d\times1$ vector of random variables
and where by Assumption \ref{assu:H0} it follows $n^{-1/2}\sum_{t=1}^{n}E\left[\varepsilon_{n,t}\right]=0$.
Also define $d\times d_{\gamma}$ dimensional matrix of random functions
$g_{t}^{\gamma}\left(\beta,\gamma\right)=\partial g_{t}\left(\beta,\gamma\right)/\partial\gamma'$
and let 
\begin{equation}
\hat{g}_{n,\gamma}\left(\gamma\right)=n^{-1}\sum_{t=1}^{n}g_{t}^{\gamma}\left(\beta_{n,0},\gamma\right)\label{eq:g_gamma(gamma)_hat}
\end{equation}
 and 
\begin{equation}
\hat{g}_{n,\gamma}=\hat{g}_{n,\gamma}\left(\gamma_{n,0}\right).\label{eq:g_gamma_hat}
\end{equation}
The behavior of the derivative $g_{t}^{\gamma}\left(\beta,\gamma\right)$
determines whether $\gamma_{n,0}$ is identified. \citet{Stock2000}
consider moment functions that additively separate into a component
that only depends on $\gamma$ and assume that the derivative of the
moment function with respect to that component is full rank. Similarly,
\citet{Kleibergen2005} imposes a full column rank assumption on a
term equivalent to $\hat{g}_{n,\gamma}$ for the subset version of
his test. Instead, we allow for the derivatives $\hat{g}_{n,\gamma}$
to be of reduced column rank in the limit. We Impose the following
assumptions on $\hat{g}_{n,\gamma}.$
\begin{assumption}
\label{assu:g_gamma}For all converging subsequences $n_{k}$ assume
that there exists a sequence of non-stochastic matrices $g_{n,\gamma}=E\left[\hat{g}_{n,\gamma}\right]$
and some matrix $g_{\gamma}$ that depends on $n_{k}$ such that $g_{n_{k},\gamma}-g_{\gamma}\rightarrow0$
and $\hat{g}_{n_{k},\gamma}-g_{n_{k},\gamma}\rightarrow_{p}0$. In
addition, for any $\check{\gamma}_{n}$ such that $\check{\gamma}_{n_{k}}-\gamma_{n_{k},0}=O_{p}\left(n_{k}^{-1/2}\right)$
along converging subsequences $n_{k}$ it holds that \textup{$\hat{g}_{n_{k},\gamma}\left(\check{\gamma}_{n_{k}}\right)-g_{n_{k},\gamma}\rightarrow_{p}0.$}
\end{assumption}
Define the $d_{\gamma}\times d_{\gamma}$ non-stochastic matrix $\Gamma=g_{\gamma}'\Omega_{0}^{-1}g_{\gamma}$
with eigenvalues $\Delta_{1}\geq\Delta_{2}...\geq\Delta_{d_{\gamma}}\geq0$.
Assumption \ref{assu:g_gamma} does not restrict the column rank of
$g_{\gamma}$. Thus, we allow the rank $r$ of $\Gamma$ to take values
$0\leq r\leq d_{\gamma}$ and where in general $\Delta_{j}$ and $r$
depend on the subsequence $n_{k}.$

Define $\Pi{}_{n,\gamma}=n^{-1}\sum_{t=1}^{n}E\left[g_{t}^{\gamma}\left(\beta_{n,0},\gamma_{n,0}\right)'\right]$
where $\Pi_{n,\gamma}$ is a $d\times d_{\gamma}$ dimensional matrix.
Define $V_{n,t}$ as a $d\times d_{\gamma}$ dimensional matrix of
pseudo-residuals that can be written as
\begin{equation}
V_{n,t}=g_{t}^{\gamma}\left(\beta_{n,0},\gamma_{n,0}\right)-\Pi_{n,\gamma}\label{eq:Vnt}
\end{equation}
such that $\sum_{t=1}^{n}E\left[V_{n,t}\right]=0$ by construction.
As in GKMC12 no other restrictions are imposed on $\Pi_{n,\gamma}$
so that scenarios where $\Pi_{n,\gamma}$ is rank deficient, or even
$\Pi_{n,\gamma}=0$ are allowed. The properties of $\Pi_{n,\gamma}$
depend on the data distribution $P_{\chi}$ . Rank deficiency of $\Pi_{n,\gamma}$
arises when $\gamma$ is not identified, partially or weakly identified.

The analysis of the CUE also involves derivatives of $\hat{\Omega}_{n_{k}}\left(\theta\right)$
given by 
\begin{equation}
\frac{\partial Q_{n}\left(\theta\right)}{\partial\gamma'}=2n\hat{g}_{n}\left(\theta\right)'\hat{\Omega}_{n,0}^{-1}\left(\hat{g}_{n,\gamma}\left(\gamma\right)-\left(I_{d}\otimes\hat{g}_{n}\left(\theta\right)'\hat{\Omega}_{n,0}^{-1}\right)\hat{\Lambda}_{n}\left(\gamma\right)\right)\label{eq:dQ_dgamma}
\end{equation}
 where 
\begin{equation}
\hat{\Lambda}_{n}\left(\gamma\right)=\partial\vec\hat{\Omega}_{n}\left(\beta_{n,0},\gamma\right)/\partial\gamma'.\label{eq:Lambda_hat}
\end{equation}
 To guarantee convergence of $\hat{\Lambda}_{n}\left(\gamma\right)$
define the matrices 
\begin{equation}
\hat{\Lambda}_{z,n,0}=n^{-1}\sum_{t=1}^{n}\sum_{s=1}^{n}k\left(\frac{t-s}{a_{n}}\right)\left(I_{d}\otimes\varepsilon_{n,t}\right)\label{eq:Lambda_hat_z}
\end{equation}
and 
\begin{equation}
\hat{\Lambda}_{V,n,0}=n^{-1}\sum_{t=1}^{n}\sum_{s=1}^{n}k\left(\frac{t-s}{a_{n}}\right)\left[V_{n,t}\otimes\varepsilon_{n,t}\right]\label{eq:Lambda_hat_V}
\end{equation}
and impose the following restrictions on $\hat{\Lambda}_{z,n,0}$
and $\hat{\Lambda}_{V,n,0}$.
\begin{assumption}
\label{assu:Lambda}Let $\hat{\Lambda}_{z,n,0}$ and $\hat{\Lambda}_{V,n,0}$
be defined in \eqref{eq:Lambda_hat_z} and \eqref{eq:Lambda_hat_V},
and let $\Lambda_{z,n,0}=0$, and $\Lambda_{V,n,0}=n^{-1}\sum_{t=1}^{n}\sum_{s=1}^{n}E\left[V_{n,t}\otimes\varepsilon_{n,t}\right].$
For all converging subsequences $n_{k}$ as well as a sequence of
bandwidth parameters $a_{n}$ it follows that $\hat{\Lambda}_{z,n_{k},0}-\Lambda_{z,n_{k},0}\rightarrow_{p}0$
and $\hat{\Lambda}_{V,n_{k},0}-\Lambda_{V,n_{k},0}\rightarrow_{p}0$.
Let $\Lambda_{V,0}$ be matrices with bounded elements that may depend
on the subsquence $n_{k}.$ Assume that $\Lambda_{V,n_{k},0}\rightarrow\Lambda_{V,0}$.
\end{assumption}
\phantom{.}We now consider the process
\[
\omega_{n,t}=\left(\varepsilon_{n,t}',\vec\left(V_{n,t}\right)'\right)'
\]
where $\omega_{n,t}$ is a $d\left(1+d_{\gamma}\right)$ vector of
double arrays of random variables. Let $S_{n}=\frac{1}{\sqrt{n}}\sum_{t=1}^{n}\omega_{n,t}.$
Note that $E\left[S_{n}\right]=0$ by \eqref{eq:epsilon_nt_h}, \eqref{eq:Vnt}
and Assumption \ref{assu:H0}, such that, without loss of generality,
$\omega_{n,t}$ can be assumed to be mean zero. The next assumption
imposes that $\nu_{n,t}$ has a non-degenerate limiting distribution
along all converging subsequences.
\begin{assumption}
\label{assu:CLT}For all $n\geq1$ : $E\left[\text{\ensuremath{\left\Vert S_{n}\right\Vert _{2}^{2}}}\right]<\infty$,
and for $\Sigma_{n}=E\left[S_{n}S_{n}'\right]$, there is some $b_{\Sigma}$
such that $\lambda_{min}\left(\Sigma_{n}\right)>b_{\Sigma}>0$ for
all $n\geq1$. In addition, for all converging subsequences $n_{k}$
it holds that $\Sigma_{n_{k}}^{-1/2}S_{n_{k}}\rightarrow_{d}N\left(0,I\right)$
and $\Sigma_{n_{k}}\rightarrow\Sigma$ where $\Sigma$ is a positive
definite symmetric matrix with non-random elements that may depend
on the subsequence.
\end{assumption}
Assumptions \eqref{assu:Omega_Conv}, \eqref{assu:g_gamma}, \eqref{assu:Lambda}
and \eqref{assu:CLT} are high level assumptions. They could be formulated
more directly by imposing restrictions on the moments and the distribution
of $\chi_{n}$. Such assumptions then could be used to invoke laws
of large numbers and central limit theorems implying Assumptions \eqref{assu:Omega_Conv}-\eqref{assu:CLT}.
We do not follow this approach here because low level assumptions
tend to be more specific to a particular testing context and obscure
the fundamental requirements for our results. However, we consider
two specific examples in Section \ref{sec:Examples} were we discuss
a set of low level assumptions that can be imposed to guarantee that
Assumptions \eqref{assu:Omega_Conv}, \eqref{assu:g_gamma}, \eqref{assu:Lambda}
and \eqref{assu:CLT} hold.

Our main theoretical result is now stated. Recall the definition of
the AR statistic $\textrm{AR}_{\textrm{C}}\left(\beta_{n,0}\right)$
in \eqref{eq:CUAR}. The asymptotic size of the subset AR test is
defined as
\begin{equation}
\textrm{AsySz}_{\alpha}=\limsup_{n\rightarrow\infty}\sup_{\chi_{n}\in\mathscr{X}}P_{\chi_{n}}\left(\text{\ensuremath{\textrm{AR}_{\textrm{C}}\left(\beta_{n,0}\right)}}>c_{1-\alpha,\chi_{\left(d-d_{\gamma}\right)}^{2}}\right)\label{eq:AsySz}
\end{equation}
where $P_{\chi_{n}}$ is the induced probability measure of the process
$\chi_{n}$ on the underlying probability space and where each $\chi_{n}$
implies a corresponding sequences of parameters $\theta_{n,0}$ under
$H_{0}$. We prove the following result which extends Theorem 1 of
GKMC12 to the case of the AR statistic based on the CUE rather than
LIML and that allows for nonlinear moment conditions and non-stationary
or heterogenous processes with conditional heteroskedasticity.
\begin{thm}
\label{thm:AsySz}Assume $0<\alpha<1$ and that Assumptions \ref{assu:Asy_Tightness}-\ref{assu:CLT}
hold for all \textup{$\chi_{n}\in\mathscr{X}$ and all $n\geq1$}.
Then the asymptotic size of the $\textrm{AR}_{\textrm{C}}$ statistic
defined in (\ref{eq:AsySz}) satisfies 
\[
\text{\ensuremath{\textrm{AsySz}_{\alpha}}}=\alpha.
\]
\end{thm}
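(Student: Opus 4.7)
The plan is to reduce the uniform claim to a subsequence argument and then construct a pointwise upper bound for the AR statistic whose limit is $\chi^2_{d-d_\gamma}$. Specifically, choose any sequence $\chi_n \in \mathscr{X}$ approaching the $\limsup$ in \eqref{eq:AsySz}; by Assumption \ref{assu:Asy_Tightness} and Prohorov's theorem extract a further subsequence $n_k$ along which $\chi_{n_k}\rightsquigarrow\chi$ and all limits in Assumptions \ref{assu:Omega_Conv}--\ref{assu:CLT} exist. It then suffices to show that, along every such converging subsequence,
\[
P_{\chi_{n_k}}\bigl(\textrm{AR}_{\textrm{C}}(\beta_{n_k,0}) > c_{1-\alpha,\chi^2_{d-d_\gamma}}\bigr) \to \alpha.
\]
The validity direction $\textrm{AsySz}_\alpha \leq \alpha$ amounts to producing, for each such subsequence, a random variable $U_{n_k}$ with $\textrm{AR}_{\textrm{C}}(\beta_{n_k,0}) \leq U_{n_k}$ and $U_{n_k} \to_d \chi^2_{d-d_\gamma}$, so that Polya's theorem transfers convergence of the distribution function uniformly to the tail. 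The matching lower bound $\textrm{AsySz}_\alpha \geq \alpha$ is obtained by specializing to a single strongly identified sequence within $\mathscr{X}$.

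For the upper bound $U_{n_k}$, my strategy begins with a Taylor expansion of $\hat g_n(\beta_{n,0},\hat\gamma_n)$ around $\gamma_{n,0}$, using twice-differentiability from Assumption \ref{assu:H0} and the consistency in Assumption \ref{assu:g_gamma}. Plugging this expansion into the CUE first-order condition \eqref{eq:dQ_dgamma} and solving formally for $\hat\gamma_n - \gamma_{n,0}$ gives, after substitution into the criterion, a representation of the form
\[
\hat g_n(\beta_{n,0},\hat\gamma_n) = M_{n,\gamma}\,\hat g_{n,0} + o_p(n^{-1/2}),
\]
where $M_{n,\gamma}$ is the $\hat\Omega_{n,0}^{-1}$-orthogonal projection off the column space of the effective Jacobian $g_{n,\gamma}$, with the HAC derivative correction $\hat\Lambda_n$ absorbed and controlled via Assumption \ref{assu:Lambda}. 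When $g_\gamma$ has full column rank $d_\gamma$ along the subsequence, $M_{n,\gamma}$ has rank $d-d_\gamma$, and combining Assumption \ref{assu:Omega_Conv} with the CLT in Assumption \ref{assu:CLT} yields $\textrm{AR}_{\textrm{C}}(\beta_{n_k,0})\to_d \chi^2_{d-d_\gamma}$ exactly.

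The hard part will be the partially identified and weakly identified regimes where $\Gamma = g_\gamma'\Omega_0^{-1} g_\gamma$ has rank $r<d_\gamma$. In these regimes $\hat\gamma_n$ need not be $n^{-1/2}$-consistent, may fail to converge altogether, and the formal inversion that delivered the projection representation above is meaningless. To handle this I would apply a perturbation directly to the first-order condition: replace $\hat g_{n,\gamma}(\gamma)$ by the full-rank surrogate $\hat g_{n,\gamma}(\gamma) + \delta_n D_n$, where $D_n$ is a deterministic matrix chosen via the eigen-decomposition of $\Gamma$ along the converging subsequence to augment precisely the $d_\gamma - r$ deficient directions, and $\delta_n \downarrow 0$ is chosen slowly enough that the regularized minimizer $\hat\gamma_n^\delta$ is $n^{-1/2}$-consistent. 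Because the regularization is applied to the first-order condition (equivalently, to an augmented quadratic), the resulting $U_{n_k}$ upper-bounds $\textrm{AR}_{\textrm{C}}(\beta_{n_k,0})$ rather than merely approximating it. Passing to the limit along the subsequence, contributions from the $d_\gamma - r$ deficient directions can be shown to vanish as $\delta_n\to 0$, leaving the full-rank projection with rank $d-d_\gamma$ and the desired $\chi^2_{d-d_\gamma}$ limit.

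The main technical obstacle, and the step I expect to be most delicate, is controlling cross-terms uniformly across identification regimes: products of the $O_p(n^{-1/2})$ score $\hat g_{n,0}$ with components of $\hat\gamma_n - \gamma_{n,0}$ that are unbounded in the deficient directions must be absorbed via the orthogonality encoded in the FOC, and the second-order Taylor remainder $R_n$ must be shown to be $o_p(n^{-1/2})$ without invoking a rate on $\hat\gamma_n$. Here the triangular-array CLT in Assumption \ref{assu:CLT}, the joint convergence of $(\hat g_{n,\gamma}, \hat\Omega_n, \hat\Lambda_n)$ from Assumptions \ref{assu:Omega_Conv}--\ref{assu:Lambda}, and the uniform positive-definiteness of $\Omega(\theta)$ and $\Sigma_n$ together ensure that the perturbation argument can be executed with a single sequence $\delta_n$ that works along every converging subsequence. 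Combining the upper-bound argument with the subsequence reduction delivers $\textrm{AsySz}_\alpha \leq \alpha$; equality follows from any strongly identified sequence, giving $\textrm{AsySz}_\alpha = \alpha$.
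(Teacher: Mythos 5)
Your overall architecture (reduction to converging subsequences via Assumption \ref{assu:Asy_Tightness}, an upper bound with a $\chi_{d-d_{\gamma}}^{2}$ limit for $\textrm{AsySz}_{\alpha}\leq\alpha$, and a strongly identified sequence for equality) matches the paper, and your treatment of the full-rank case is standard and fine. The genuine gap is in the rank-deficient regime, and it is twofold. First, your claim that ``contributions from the $d_{\gamma}-r$ deficient directions vanish as $\delta_{n}\to0$, leaving the full-rank projection with rank $d-d_{\gamma}$'' is internally inconsistent: if those directions drop out, the projection you remove has rank $r<d_{\gamma}$, the residual quadratic form is asymptotically $\chi_{d-r}^{2}$, which is stochastically larger than $\chi_{d-d_{\gamma}}^{2}$, and the bound is useless against the critical value $c_{1-\alpha,\chi_{d-d_{\gamma}}^{2}}$. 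The whole difficulty is precisely to remove a full rank-$d_{\gamma}$ projection even in unidentified directions. Second, a deterministic augmentation $\delta_{n}D_{n}$ cannot supply those missing directions in a way that preserves the upper-bound property: the only legitimate upper bounds are values of the original criterion $Q_{n}(\beta_{n,0},\gamma^{*})$ at actual points $\gamma^{*}$, and in unidentified directions moving $\gamma$ changes $\hat{g}_{n}(\gamma)$ only through the $O_{p}(n^{-1/2})$ Jacobian noise, so the directions that can actually be projected out are spanned by the (rescaled) random matrix $\hat{A}_{n}$ in \eqref{eq:Def_A_hat}, not by an arbitrary deterministic $D_{n}$. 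You also never address why the removed rank-$d_{\gamma}$ space is asymptotically independent of the score $\sqrt{n}\hat{g}_{n,0}$, which is exactly what the CUE correction term $\left(I_{d}\otimes\hat{g}_{n,0}'\hat{\Omega}_{n,0}^{-1}\right)\hat{\Lambda}_{n}$ delivers (it cancels $\Sigma_{\varepsilon V}$, see \eqref{eq:omega_e_omega_V.e_corr}); without that, conditioning on the random limiting projection does not give a $\chi_{d-d_{\gamma}}^{2}$.

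The paper's mechanism is different in the two places where your plan is weakest. It never needs any rate for a (regularized) CUE minimizer: the upper bound comes from evaluating $Q_{n}(\beta_{n,0},\cdot)$ at an infeasible perturbation $\tilde{\gamma}_{n,0}$ of the pseudo-true value, defined in \eqref{eq:gamma_til_TSVD} through a truncated-SVD (fixed threshold $\varepsilon$) regularization of $\hat{\Gamma}_{n}=\hat{g}_{n,\gamma}'\hat{\Omega}_{n,0}^{-1}\hat{g}_{n,\gamma}$, so that $\tilde{\gamma}_{n,0}-\gamma_{n,0}=O_{p}(n^{-1/2})$ and the CUE first-order condition holds approximately (Lemma \ref{lem:gamma-conv}). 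And in the weak/unidentified directions the rank deficit is filled not by a deterministic surrogate but by the rescaled Jacobian sampling noise: after the GKMC12 device of post-multiplying by $T_{n_{k}}L_{n_{k}}$, $n_{k}^{1/2}\hat{A}_{n_{k}}T_{n_{k}}L_{n_{k}}$ converges to a matrix $A=\Omega_{0}^{-1/2}(R\bar{L}+\overline{\omega_{V.\varepsilon}})$ that is full column rank w.p.1 and independent of $\omega_{\varepsilon}$ (Lemma \ref{lem:LemA2}), so $P_{\hat{A}_{n_{k}}}$ converges to a rank-$d_{\gamma}$ projection independent of the score and Lemma \ref{lem:LemA3} gives the $\chi_{d-d_{\gamma}}^{2}$ limit. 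To repair your proposal you would need to replace the deterministic $\delta_{n}D_{n}$ step with an argument of this type; as written, the deficient-rank case is not established.
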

The challenge in proving Theorem \ref{thm:AsySz} under the maintained
assumptions of this paper is that the CUE $\hat{\gamma}_{n}$ of $\gamma$
is not necessarily unique and may not converge to a well defined limit.
Then, conventional approximation arguments for the first order conditions
of the CUE fail. The proof of Theorem \ref{thm:AsySz} is based on
a novel approach of regularizing the first order condition. We then
perturb the pseudo true value $\gamma_{n,0}$ toward a value $\tilde{\gamma}_{n,0}$
in the direction of the regularized first order conditions. This is
done in a way that keeps $\tilde{\gamma}_{n,0}$ close enough to $\gamma_{n,0}$
for the central limit theorem of Assumption \eqref{assu:CLT} to apply
while at the same time approximately satisfying the first order conditions.
An immediate consequence of the definition of $\textrm{AR}_{\textrm{C}}\left(\beta_{n,0}\right)$
is that all sequences $\gamma_{n}$ produce upper bounds because $\textrm{argmin}_{\gamma}Q\left(\beta_{n,0},\gamma\right)\leq Q\left(\beta_{n,0},\gamma_{n}\right)$
for any sequence $\gamma_{n}$. While $\gamma_{n,0}$ leads to one
such upper bound, the bound based on $\gamma_{n,0}$ is too large
in the sense that $Q\left(\beta_{n,0},\gamma_{n,0}\right)$ has a
limiting $\chi_{d}^{2}$ rather than a $\chi_{d-d_{\gamma}}^{2}$
distribution. This occurs because $\gamma_{n,0}$ does not satisfy
the first order conditions well enough even in an asymptotic sense.
The perturbed point $\tilde{\gamma}_{n,0}$ is constructed in such
a way that it approximately solves the CUE moment conditions evaluated
at the infeasible sequence $\theta_{n,0}$ under $H_{0}:\beta=\beta_{n,0}$
and that as a result $Q\left(\beta_{n,0},\tilde{\gamma}_{n,0}\right)$
has the desired $\chi_{d-d_{\gamma}}^{2}$ distribution in the limit
irrespective of whether $\gamma$ is identified or not. We stress
that $\tilde{\gamma}_{n,0}$ is a construct exclusively used in the
proofs and not needed to compute the test statistic $\textrm{AR}_{\textrm{C}}\left(\beta_{n,0}\right)$
itself.

\citet{Donald2000} show that the CUE moment conditions remove the
highest order bias term of a conventional GMM estimator by showing
that the CUE moment conditions are centered at zero. Their insight
implies an asymptotic orthogonality condition between the limiting
process of the moment function $\hat{g}_{n}$ and the residual of
the projection of $\hat{g}_{n,\gamma}$ onto $\text{\ensuremath{\hat{g}_{n,0}}.}$
The same insight underlies \citet{Kleibergen2005} but only for the
case where $\gamma$ is strongly identified. To show uniform validity,
the asymptotic orthogonality needs to be established for all converging
subsequences irrespective of whether $\gamma$ is identified or not.
This then allows to characterize the upper bound of the $\textrm{AR}_{\textrm{C}}\left(\beta_{n,0}\right)$
statistic in terms of a $\chi_{d-d_{\gamma}}^{2}$ limiting distribution
along all converging subsequences. The limiting $\chi_{d-d_{\gamma}}^{2}$
distribution is obtained by representing the AR-statistic asymptotically
as the sum of squares of the residuals of a projection of the moment
conditions $\hat{g}_{n}$ onto the column space spanned by the residualized
$\hat{g}_{n,\gamma}$. The rank of this projection residual is $d-d_{\gamma}$
irrespective of whether $\gamma$ is identified or not.

We now explain the construction of the sequence $\tilde{\gamma}_{n,0}$
and the proof strategy behind Theorem \ref{thm:AsySz} in more detail.
Use \eqref{eq:gn_hat} to define $\hat{g}_{n}\left(\gamma\right)=\hat{g}_{n}\left(\beta_{n,0},\gamma\right)$,
$\hat{g}_{n,0}=\hat{g}_{n}\left(\gamma_{n,0}\right)$ and $\tilde{g}_{n,0}=\hat{g}_{n}\left(\tilde{\gamma}_{n,0}\right)$.
Recall the definition of $\hat{g}_{n,\gamma}$ in \eqref{eq:g_gamma_hat}
and use \eqref{eq:g_gamma(gamma)_hat} to define $\tilde{g}_{n,\gamma}=\hat{g}_{n,\gamma}\left(\tilde{\gamma}_{n,0}\right)$.
Also recall the definition of $\hat{\Lambda}_{n}\left(\gamma\right)$
in \eqref{eq:Lambda_hat} where $\hat{\Lambda}_{n}\left(\gamma\right)$
has dimension $d^{2}\times d_{\gamma}$. Now construct the matrix
\begin{equation}
\hat{A}_{n}=\hat{\Omega}_{n,0}^{-1/2}\left(\hat{g}_{n,\gamma}\left(\gamma_{n,0}\right)-\left(I_{d}\otimes\hat{g}_{n,0}'\hat{\Omega}_{n,0}^{-1}\right)\hat{\Lambda}_{n}\left(\gamma_{n,0}\right)\right)\label{eq:Def_A_hat}
\end{equation}
and consider the empirical moment conditions
\begin{equation}
0=\hat{A}_{n}'\hat{\Omega}_{n,0}^{-1/2}\hat{g}_{n}\left(\gamma\right).\label{eq:Pseudo_CUE_FOC}
\end{equation}
To gain some intuition for the moment condition \eqref{eq:Pseudo_CUE_FOC}
we focus on the case where $g_{t}\left(.\right)$ is of the form \eqref{eq:GMM_Moments}
with $d_{q}=1$ such that \eqref{eq:Pseudo_CUE_FOC} can be written
as 
\[
0=\left(\hat{g}_{n,\gamma}-\hat{g}_{n,0}'\hat{\Omega}_{n,0}^{-1}\hat{\Lambda}_{n,0}\right)'\hat{\Omega}_{n,0}^{-1}\hat{g}_{n}\left(\gamma\right).
\]
As argued in \citet{Donald2000}, $\hat{g}_{n,0}'\hat{\Omega}_{n,0}^{-1}\hat{\Lambda}_{n,0}$
is the projection of $\hat{g}_{n,\gamma}$ onto $\hat{g}_{n,0}$ in
the case of independently distributed data. The term $\hat{g}_{n,\gamma}-\hat{g}_{n,0}'\hat{\Omega}_{n,0}^{-1}\hat{\Lambda}_{n,0}$
then is the projection residual, which by construction is orthogonal
to $\hat{g}_{n,0}$. As shown by \citet{Donald2000}, the orthogonality
holds exactly in the iid setting and in expectation in a time series
framework. The use of long run variance-covariance matrices in the
time series case ensures that this interpretation remains valid in
the limit in the more general setting of this paper. By constructing
our perturbed moment vector $\tilde{g}_{n,0}$ to be close to $\hat{g}_{n,0}$
and approximately orthogonal to $\hat{A}_{n}$ we guarantee that the
moment vector $\hat{g}_{n,0}$ is stochastically independent of the
column space spanned by $\hat{A}_{n}$ at least in the limit.

The construction of $\tilde{\gamma}_{n,0}$ depends on the particular
subsequence $n_{k}$ such that $\tilde{\gamma}_{n,0}$ is only defined
for $n_{k}$ and may differ for different sequences $n_{k}$ and $n_{k'}$.
We start with the mean value expansion of $\hat{g}_{n}\left(\gamma\right)$
around $\gamma_{n,0}$, 
\begin{equation}
\hat{g}_{n}\left(\gamma\right)=\hat{g}_{n,0}+\hat{g}_{n,\gamma}\left(\check{\gamma}_{n,0}\right)\left(\gamma-\gamma_{n,0}\right)\label{eq:gamma_mve}
\end{equation}
with $\left\Vert \check{\gamma}_{n,0}-\gamma_{n,0}\right\Vert \leq\left\Vert \gamma-\gamma_{n,0}\right\Vert $
which upon substitution into the moment conditions \eqref{eq:Pseudo_CUE_FOC}
leads to 
\begin{equation}
0=\hat{A}_{n}'\hat{\Omega}_{n,0}^{-1/2}\left(\hat{g}_{n,0}-\hat{g}_{n,\gamma}\left(\check{\gamma}_{n,0}\right)\left(\gamma_{n,0}-\gamma\right)\right).\label{eq:CUE_FOC_Apprx}
\end{equation}

If $\hat{A}_{n}$ and $\hat{g}_{n,\gamma}$ were full column rank
at least for large enough samples, then one could solve \eqref{eq:CUE_FOC_Apprx}
for $\gamma.$ Under the sequences considered in this paper, both
matrices may have reduced ranks for finite $n$ and in the limit.
Solutions based on the Moore-Penrose inverse have delicate convergence
properties, see for example \citet{Wedin1973}. In fact, such solutions
often do not converge because the operator norm of the MP inverse
becomes unbounded as eigenvalues of its argument tend towards zero,
see \citeauthor{Stewart1977} (1977, Theorem 3.1). To stabilize these
solutions we adopt a regularization scheme called truncated singular
value decomposition (TSVD), see the numerical analysis literature
\citeauthor{Hanson1971} (1971), \citeauthor{Varah1973} (1973) and
\citet{Hansen1987}.

Let the singular value decomposition (SVD), which in this case coincides
with the spectral representation of $\hat{\Gamma}_{n}=\hat{g}_{n,\gamma}'\hat{\Omega}_{n,0}^{-1}\hat{g}_{n,\gamma}$
be equal to $\hat{\Gamma}_{n}=\hat{R}_{n}\hat{\Delta}_{n}\hat{R}'_{n}$
and where $\hat{\Delta}_{n}$ is a diagonal matrix of the eigenvalues
$\hat{\Delta}_{1,n}\geq....\geq\hat{\Delta}_{d_{\gamma},n}\geq0$
of $\hat{\Gamma}_{n}$. By Assumptions \ref{assu:Omega_Conv} and
\ref{assu:g_gamma} and for a converging subsequence $n_{k}$, $\hat{\Gamma}_{n_{k}}\rightarrow_{p}\Gamma$
such that by Theorem \ref{thm:Weyl}, $\hat{\Delta}_{j,n_{k}}\rightarrow_{p}\Delta_{j}\geq0$.
Assume that $\Delta_{j}>0$ for $j\leq r$ and some $0\leq r\leq d_{\gamma}$.
Fix $\varepsilon$ arbitrary with $\Delta_{r}>\varepsilon>0$. For
each converging subsequence $n_{k}$ define $\mathring{\Delta}_{n_{k}}$
as the matrix with diagonal elements $\text{\ensuremath{\mathring{\Delta}_{n_{k},j}}}$
such that for all $l\in\left\{ 1,...,d_{\gamma}\right\} ,$ 
\[
\text{\ensuremath{\mathring{\Delta}_{n_{k},l}}}=\begin{cases}
\hat{\Delta}_{n_{k},l} & \text{\text{if \ensuremath{\hat{\Delta}_{n_{k},l}>\varepsilon}}}\\
0 & \text{otherwise}
\end{cases}.
\]
The TSVD of $\hat{\Gamma}_{n_{k}}$ is now given by 
\begin{equation}
\mathring{\Gamma}_{n_{k}}=\hat{R}_{n_{k}}\mathring{\Delta}_{n_{k}}\hat{R}'_{n_{k}}.\label{eq:Gamma_dot}
\end{equation}

We construct an infeasible solution $\tilde{\gamma}_{n_{k},0}$ to
\eqref{eq:CUE_FOC_Apprx} with two properties. The solution $\tilde{\gamma}_{n,0}$
is a perturbation of $\gamma_{n,0}$ small enough to converge to $\gamma_{n,0}$
but such that it also satisfies \eqref{eq:CUE_FOC_Apprx} with sufficient
accuracy in large samples. The purpose of the construction is to evaluate
the CUE criterion function at a point that allows for a limiting distribution
and where the moment conditions of the CUE, and thus orthogonality
between $\hat{A}_{n_{k}}$ and $\tilde{g}_{n_{k},0}$ hold approximately.
The parameter $\tilde{\gamma}_{n_{k},0}$ is defined as a perturbation
of the parameters $\gamma_{n_{k},0}$ in the direction of the first
order conditions for the CUE estimator by setting
\begin{equation}
\tilde{\gamma}_{n_{k},0}=\gamma_{n_{k},0}-\mathring{\Gamma}_{n_{k}}^{+}\hat{A}_{n_{k}}'\hat{\Omega}_{n_{k},0}^{-1/2}\hat{g}_{n_{k},0}.\label{eq:gamma_til_TSVD}
\end{equation}
Note that the MP-inverse of $\mathring{\Gamma}_{n_{k}}$ denoted by
$\mathring{\Gamma}_{n_{k}}^{+}$ is continuous and therefore converges
along converging subsequences. Continuity of $\mathring{\Gamma}_{n_{k}}^{+}$
holds because the eigenvalues of $\mathring{\Gamma}_{n_{k}}$ are
bounded away from zero by $\varepsilon$ due to regularization. This
is in contrast to the MP-inverse of $\hat{\Gamma}_{n_{k}}$ which
may not be continuous and thus not converge along converging subsequences.
The following properties of $\tilde{\gamma}_{n_{k},0}$ can now be
established.
\begin{lem}
\label{lem:gamma-conv}Let $\tilde{\gamma}_{n,0}$ be defined in (\ref{eq:gamma_til_TSVD}).
Let Assumptions \ref{assu:Asy_Tightness}, \ref{assu:H0}, \ref{assu:Omega_Conv},
and \ref{assu:g_gamma} hold. Then,

(i) for all converging subsequences $n_{k}$ 
\[
\tilde{\gamma}_{n_{k},0}-\gamma_{n_{k},0}=O_{p}\left(n_{k}^{-1/2}\right).
\]

(ii) for all converging subsequences $n_{k}$ it follows that the
first order condition in \eqref{eq:Pseudo_CUE_FOC} evaluated at $\tilde{\gamma}_{n,0}$
satisfies
\begin{eqnarray*}
\hat{A}_{n_{k}}'\hat{\Omega}_{n_{k},0}^{-1/2}\hat{g}_{n_{k}}\left(\tilde{\gamma}_{n_{k},0}\right) & = & o_{p}\left(n_{k}^{-1/2}\right).
\end{eqnarray*}

(iii) for the projection $P_{\hat{A}_{n}}$ onto the column space
of $\hat{A}_{n}$ it follows that along converging subsequences $n_{k}$
\[
P_{\hat{A}_{n_{k}}}-\hat{A}_{n_{k}}\mathring{\Gamma}_{n_{k}}^{+}\hat{A}_{n_{k}}'=o_{p}\left(1\right).
\]
\end{lem}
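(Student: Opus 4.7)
I would handle the three claims in order, leveraging Assumptions \ref{assu:Asy_Tightness}, \ref{assu:H0}, \ref{assu:Omega_Conv}, \ref{assu:g_gamma}, and the TSVD construction \eqref{eq:Gamma_dot}.

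For (i), the plan is to bound each factor in
$\tilde{\gamma}_{n_k,0}-\gamma_{n_k,0}=-\mathring{\Gamma}_{n_k}^{+}\hat{A}_{n_k}'\hat{\Omega}_{n_k,0}^{-1/2}\hat{g}_{n_k,0}.$
By Assumption \ref{assu:CLT} and Chebyshev, $\hat{g}_{n_k,0}=O_{p}(n_{k}^{-1/2})$. Assumption \ref{assu:Omega_Conv} gives $\Vert\hat{\Omega}_{n_k,0}^{-1/2}\Vert=O(1)$. Assumptions \ref{assu:Omega_Conv}--\ref{assu:Lambda} together with $\hat{g}_{n_k,0}=o_{p}(1)$ yield $\hat{A}_{n_k}=O_{p}(1)$. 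Finally, by construction of the truncation, every nonzero eigenvalue of $\mathring{\Gamma}_{n_k}$ exceeds $\varepsilon$, so $\Vert\mathring{\Gamma}_{n_k}^{+}\Vert\leq 1/\varepsilon=O(1)$. Multiplying the four rates gives $O_{p}(n_{k}^{-1/2})$.

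For (ii), I would substitute the mean value expansion \eqref{eq:gamma_mve} at $\gamma=\tilde{\gamma}_{n_k,0}$. Part (i) ensures $\check{\gamma}_{n_k,0}-\gamma_{n_k,0}=O_{p}(n_{k}^{-1/2})$, so Assumption \ref{assu:g_gamma} gives $\hat{g}_{n_k,\gamma}(\check{\gamma}_{n_k,0})-\hat{g}_{n_k,\gamma}=o_{p}(1)$. Using the definition \eqref{eq:gamma_til_TSVD} of $\tilde{\gamma}_{n_k,0}$ and the definition \eqref{eq:Def_A_hat} of $\hat{A}_{n_k}$, a direct computation yields $\hat{A}_{n_k}'\hat{\Omega}_{n_k,0}^{-1/2}\hat{g}_{n_k,\gamma}=\hat{\Gamma}_{n_k}+O_{p}(n_{k}^{-1/2})$, the cross term being of that order because it contains the factor $\hat{g}_{n_k,0}$. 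The target quantity then reduces to
\[
\bigl[I_{d_{\gamma}}-\hat{\Gamma}_{n_k}\mathring{\Gamma}_{n_k}^{+}\bigr]\hat{A}_{n_k}'\hat{\Omega}_{n_k,0}^{-1/2}\hat{g}_{n_k,0}+o_{p}(n_{k}^{-1/2}).
\]
In the eigenvector basis $\hat{R}_{n_k}$ of $\hat{\Gamma}_{n_k}$, $I_{d_{\gamma}}-\hat{\Gamma}_{n_k}\mathring{\Gamma}_{n_k}^{+}$ is diagonal with zeros where $\hat{\Delta}_{n_k,l}>\varepsilon$ and ones elsewhere. By Weyl's inequality and Assumption \ref{assu:g_gamma}, on an event $E_{n_k}$ with $P(E_{n_k})\to 1$, the nonzero positions are exactly those $l>r$ where $\Delta_{l}=0$. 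For each such $l$, the SVD of $\hat{\Omega}_{n_k,0}^{-1/2}\hat{g}_{n_k,\gamma}$ gives $\Vert\hat{A}_{n_k}\hat{R}_{n_k,l}\Vert\leq\sqrt{\hat{\Delta}_{n_k,l}}+O_{p}(n_{k}^{-1/2})$, so that $(\hat{A}_{n_k}\hat{R}_{n_k,l})'\hat{\Omega}_{n_k,0}^{-1/2}\hat{g}_{n_k,0}=\sqrt{\hat{\Delta}_{n_k,l}}\cdot O_{p}(n_{k}^{-1/2})+O_{p}(n_{k}^{-1})$, which is $o_{p}(n_{k}^{-1/2})$ since $\hat{\Delta}_{n_k,l}\to_{p}0$.

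For (iii), I would pass to the limit along the subsequence. Assumptions \ref{assu:Omega_Conv} and \ref{assu:g_gamma} give $\hat{A}_{n_k}\to_{p}M:=\Omega(\theta_{0})^{-1/2}g_{\gamma}$, and since $\varepsilon$ is chosen strictly below every nonzero eigenvalue of $\Gamma=M'M$, the regularized inverse $\mathring{\Gamma}_{n_k}^{+}$ converges in probability to $\mathring{\Gamma}^{+}$, which coincides with the ordinary MP-inverse of $\Gamma$. An SVD computation shows $M\mathring{\Gamma}^{+}M'=U_{1:r}U_{1:r}'=P_{M}$, the orthogonal projection onto the column space of $M$. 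Continuous mapping then yields $\hat{A}_{n_k}\mathring{\Gamma}_{n_k}^{+}\hat{A}_{n_k}'\to_{p}P_{M}$, and the stable top-$r$ block of left singular vectors of $\hat{A}_{n_k}$ (whose singular values are bounded away from zero) converges to the corresponding block of $M$, so that $P_{\hat{A}_{n_k}}\to_{p}P_{M}$ as well, giving the desired $o_{p}(1)$ difference.

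\textbf{Main obstacle.} The genuine difficulty is in (ii): establishing that the regularized first order condition is satisfied to order $o_{p}(n_{k}^{-1/2})$, rather than the crude $O_{p}(n_{k}^{-1/2})$ bound that a naive application of $\Vert\mathring{\Gamma}_{n_k}^{+}\Vert\leq 1/\varepsilon$ would yield. The crucial observation is that even though the truncation threshold $\varepsilon$ is fixed, the singular values of $\hat{A}_{n_k}$ restricted to the \emph{truncated} eigenspaces are themselves $o_{p}(1)$ along the subsequence, since $\hat{\Delta}_{n_k,l}\to_{p}0$ for $l>r$. Extracting this hidden smallness, rather than relying on the $1/\varepsilon$ bound, is what delivers the sharp rate and closes the proof.
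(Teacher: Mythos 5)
Parts (i) and (ii) of your proposal are sound. Part (i) is essentially the paper's own argument (note that, like the paper, you use $\hat{\Lambda}_{n}\left(\gamma_{n,0}\right)=O_{p}\left(1\right)$ and $\hat{g}_{n,0}=O_{p}\left(n^{-1/2}\right)$, i.e.\ facts resting on Assumptions \ref{assu:Lambda} and \ref{assu:CLT}, which the lemma does not list among its hypotheses; this defect is shared with the paper, not introduced by you). Part (ii) is correct but follows a genuinely different route: the paper replaces $\hat{A}_{n}$ by $\bar{A}_{n}=\hat{\Omega}_{n,0}^{-1/2}\hat{g}_{n,\gamma}$, invokes the TSVD perturbation bound of Hansen (1987) to obtain $\mathring{\Gamma}_{n_{k}}^{+}-\Gamma^{+}=o_{p}\left(1\right)$ and concludes from $\bar{A}'-\bar{A}'P_{\bar{A}}=0$, whereas you diagonalize $\hat{\Gamma}_{n_{k}}$, identify $I-\hat{\Gamma}_{n_{k}}\mathring{\Gamma}_{n_{k}}^{+}$ as the spectral projector onto the sub-threshold eigenspace, and extract the extra smallness from $\Vert\bar{A}_{n_{k}}\hat{R}_{n_{k},l}\Vert=\hat{\Delta}_{n_{k},l}^{1/2}\rightarrow_{p}0$ for $l>r$. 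Your version avoids the external perturbation bound (only Weyl's inequality is needed) and makes the source of the $o_{p}\left(n_{k}^{-1/2}\right)$ rate more transparent.

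Part (iii) contains a genuine gap. The claim $P_{\hat{A}_{n_{k}}}\rightarrow_{p}P_{M}$ with $M=\Omega\left(\theta_{0}\right)^{-1/2}g_{\gamma}$ fails exactly in the partially and un-identified cases the lemma is designed to cover, i.e.\ when $r=\rank\left(g_{\gamma}\right)<d_{\gamma}$. In those cases $\hat{A}_{n_{k}}$ generically has full column rank $d_{\gamma}$ (its bottom $d_{\gamma}-r$ singular values are $O_{p}\left(n_{k}^{-1/2}\right)$ but nonzero), so $P_{\hat{A}_{n_{k}}}$ is an orthogonal projection of rank $d_{\gamma}$, while $P_{M}$ has rank $r$; two orthogonal projections of different ranks are at operator-norm distance at least one, so convergence in probability is impossible. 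The step ``the stable top-$r$ block of left singular vectors converges, hence $P_{\hat{A}_{n_{k}}}\rightarrow_{p}P_{M}$'' overlooks that the left singular vectors attached to the vanishing singular values enter $P_{\hat{A}_{n_{k}}}$ with full weight no matter how small those singular values are. Your conclusion also contradicts the paper's own Lemma \ref{lem:LemA2} and Lemma \ref{lem:LemA3}, where $P_{\hat{A}_{n_{k}}}$ converges in distribution to a \emph{random} rank-$d_{\gamma}$ projection built from $\omega_{V.\varepsilon}$; in the fully unidentified case $r=0$ your limit would be $P_{M}=0$ while $\Vert P_{\hat{A}_{n_{k}}}\Vert=1$ whenever $\hat{A}_{n_{k}}\neq0$.

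The paper's proof of (iii) is structured differently and does not assert convergence of $P_{\hat{A}_{n_{k}}}$ to any fixed limit: it tests the difference $P_{\hat{A}_{n_{k}}}-\hat{A}_{n_{k}}\mathring{\Gamma}_{n_{k}}^{+}\hat{A}_{n_{k}}'$ against $\hat{A}_{n_{k}}'$ and against the orthogonal complement of its column space, see \eqref{eq:Lem-gc-5} and \eqref{eq:Lem-gc-6}. Be aware, though, that even this argument is delicate: recovering the full matrix difference from its products with $\hat{A}_{n_{k}}'$ requires controlling directions in which the smallest singular values of $\hat{A}_{n_{k}}$ may vanish at rate $n_{k}^{-1/2}$, so any repair of your part (iii) must handle the near-null directions of $\hat{A}_{n_{k}}$ explicitly rather than by passing to the rank-$r$ population limit.
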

A proof of Lemma \ref{lem:gamma-conv} is given in Appendix \ref{subsec:Gamma_tilde}.
Let $\tilde{\Omega}_{n,0}=\hat{\Omega}_{n}\left(\beta_{n,0},\tilde{\gamma}_{n,0}\right)$.
Using Lemma \ref{lem:gamma-conv} and denoting by $P_{\hat{A}_{n}}$
the projection onto the column space spanned by $\hat{A}_{n}$ and
letting $M_{\hat{A}_{n}}=I-P_{\hat{A}_{n}}$, the proof of Theorem
\ref{thm:AsySz} then proceeds to show in Lemma \ref{lem:LemA1} that
\[
\textrm{AR}_{\textrm{C}}\left(\beta_{n,0}\right)\leq n\tilde{g}_{n,0}'\tilde{\Omega}_{n,0}^{-1/2}M_{\hat{A}_{n}}\tilde{\Omega}_{n,0}^{-1/2}\tilde{g}_{n,0}+\varpi_{n},
\]
where along converging subsequences $n_{k}$ 
\[
\varpi_{n_{k}}=n_{k}\tilde{g}_{n_{k}}'\tilde{\Omega}_{n_{k}}^{-1/2}P_{\hat{A}_{n_{k}}}\tilde{\Omega}_{n_{k}}^{-1/2}\tilde{g}_{n_{k}}=o_{p}\left(1\right).
\]
This result holds because $\tilde{g}_{n,0}=\hat{g}_{n}\left(\tilde{\gamma}_{n,0}\right)$
approximately satisfies the orthogonality condition in Lemma \ref{lem:gamma-conv}(ii).
With the help of Lemma $\ref{lem:gamma-conv}$ and \eqref{eq:gamma_til_TSVD}
we show in Lemma \ref{lem:LemA1} that along converging subsequences
\[
n_{k}\tilde{g}_{n_{k},0}'\tilde{\Omega}_{n_{k},0}^{-1/2}M_{\hat{A}_{n_{k}}}\tilde{\Omega}_{n_{k},0}^{-1/2}\tilde{g}_{n_{k},0}=n_{k}\hat{g}_{n_{k},0}'\hat{\Omega}_{n_{k},0}^{-1/2}M_{\hat{A}_{n_{k}}}\hat{\Omega}_{n_{k},0}^{-1/2}\hat{g}_{n_{k},0}+o_{p}\left(1\right).
\]
By Assumption \ref{assu:CLT} $\hat{g}_{n_{k},0}$ converges to a
Gaussian limit variable $\omega_{\varepsilon}$. We show in Lemma
\ref{lem:LemA2} that along converging subsequences $P_{\hat{A}_{n_{k}}}\rightarrow_{d}P_{A}$
where $P_{A}$ has constant rank $d_{\gamma}$ independent of the
subsequential limit and where $P_{A}$ is a possibly random matrix
with elements that are stochastically independent of $\omega_{\varepsilon}.$
The last result is essentially due to Lemma \ref{lem:gamma-conv}(ii)
which is at the heart of establishing asymptotic uncorrelatedness
between the elements of $\hat{A}_{n_{k}}$ and $\hat{g}_{n_{k},0}$
along all converging subsequences. Combining these results, we establish
in Lemma \ref{lem:LemA3} that $\textrm{AR}_{\textrm{C}}\left(\beta_{n,0}\right)$
is bounded above by a random variable that is asymptotically $\chi_{d-d_{\gamma}}^{2}$
along all converging subsequences.

\section{Examples\label{sec:Examples}}

Our general framework in Section \ref{sec:Tests} and our theoretical
results in Section \ref{sec:Theory} do not assume that the data are
generated by a parametric model. Nevertheless, parametric models are
often the basis for empirical work and serve as useful benchmarks
that allow more precise interpretations of the parameters $\theta.$
In this section we consider two specific examples of such parametric
models and discuss how they fit into our general framework.

\subsection{Linear Instrumental Variables}

Consider as an example a simple univariate static model where $g_{t}\left(.\right)$
is of the form \ref{eq:GMM_Moments} with $d_{q}=1$ and $q\left(.\right)$
is linear. The data generating process is given by
\begin{equation}
y_{n,t}=\beta x_{n,t}+w_{n,t}'\gamma_{n}+\varepsilon_{n,t}\label{eq:Gen_y}
\end{equation}
and 
\begin{align}
x_{n,t} & =\pi_{n,x}'z_{n,t}+V_{n,x,t}\label{eq:Gen_x}\\
w_{n,t} & =\pi_{n,w}'z_{n,t}+V_{n,w,t}\label{eq:Gen_w}
\end{align}
where $z_{n,t},\varepsilon_{n,t}$ and $V_{n,t}=\left(V_{x,t}',V_{w,t}'\right)'$
are the basic inputs with $z_{n,t}$ observed and $\varepsilon_{n,t},V_{n,t}$
unobserved. We assume that $\left(z_{n,t},\varepsilon_{n,t},V_{n,t}'\right)$
are independent across $t$ and where $w_{n,t}$ has $d_{w}$ elements
and $z_{n,t}$ has $d_{z}>d_{w}$ elements and $x_{n,t}$ is univariate.
The parameter vector $\pi_{x,n}$ and the $d_{z}\times d_{w}$ matrix
$\pi_{w,n}$ control instrument strength. The model is the same as
in GKMC12 but without the assumption of homoskedasticity of $\varepsilon_{n,t}$
and $V_{n,t}$, nor do we assume that the random variables are identically
distributed over $t.$ At the expense of a slightly more restrictive
example, one could assume that $\left(z_{n,t},\varepsilon_{n,t},V_{n,t}'\right)$
and $\gamma_{n}$ do not depend on $n.$ This scenario would still
contain weak instrument limits as considered by Staiger and Stock
(1997) by allowing $\text{\ensuremath{\pi_{x,n}}}$ and $\text{\ensuremath{\pi_{w,n}}}$
to be in a $1/\sqrt{n}$ neighborhood local to zero, but would also
include non-identified scenarios where $\text{\ensuremath{\pi_{x,n}}}=0$
and/or $\pi_{w,n}=0.$ In the more general setting considered here
and in GKMC12 one can allow for $\cov\left(\varepsilon_{n,t},V_{n,t}\right)$
to depend on $n$, and thus add additional dimensions along which
inference may be challenging.

If $E\left[z_{n,t}\left(\varepsilon_{n,t},V_{n,t}'\right)\right]=0$
then the moment conditions $n^{-1}\sum_{t=1}^{n}E\left[\left(y_{n,t}-X'_{n,t}\theta\right)z_{n,t}\right]=0$
have at least one solution $\theta_{n}=\left(\beta,\gamma_{n}'\right)'$.
If $n^{-1}\sum_{t=1}^{n}E\left[X_{n,t}z_{n,t}'\right]$ is of reduced
column rank then the model is not identified and the average moment
condition may have multiple solutions.

The CUE for this model is formed from moment functions 
\begin{equation}
g_{t}\left(\beta,\gamma\right)=\left(\begin{array}{c}
y_{n,t}-x_{n,t}\beta-w'_{n,t}\gamma\end{array}\right)z_{n,t}\label{eq:lin-g}
\end{equation}
and can be evaluated at $H_{0}:\beta=\beta_{0}$ by setting $\hat{g}_{n}\left(\beta_{0},\gamma\right)=0$.
The weight matrix $\hat{\Omega}_{n}$ is exploiting independence and
is given by 
\begin{equation}
\hat{\Omega}_{n}\left(\beta,\gamma\right)=n^{-1}\sum_{t=1}^{n}g_{t}\left(\beta,\gamma\right)g_{t}\left(\beta,\gamma\right)'.\label{eq:lin-Omega}
\end{equation}
The criterion function $Q_{n}\left(\beta,\gamma\right)$ and the test
statistic $\textrm{AR}_{\textrm{C}}$ then are formulated in the same
way as in Section \ref{sec:Tests} using the expressions for $\hat{g}_{n}$
and $\hat{\Omega}_{n}$ given in (\ref{eq:lin-g}) and (\ref{eq:lin-Omega}).
The derivatives $E\left[g_{n,\gamma}\right]=n^{-1}\sum_{t=1}^{n}E\left[z_{n,t}w_{n,t}'\right]=n^{-1}\sum_{t=1}^{n}E\left[z_{n,t}z_{n,t}'\right]\pi_{n,w}$
are reduced column rank if $\pi_{n,w}$ is of reduced column rank.

Assumption \ref{assu:Asy_Tightness} is satisfied in the following
way. Assume $\eta_{n,t}=\left(z_{n,t}',\varepsilon_{n,t},V_{n,x,t},V_{n,w,t}'\right)'$
is a double array of independent random variables, where in addition
$z_{n,t}$ is independent of $\varepsilon_{n,t}$, $V_{n,x,t}$ and
$V_{n,w,t}$. Assume that $y_{n,t}$, $x_{n,t}$ and $w_{n,t}$ are
generated by \eqref{eq:Gen_y}, \eqref{eq:Gen_x} and \eqref{eq:Gen_w}
respectively for parameters $\theta,$ $\pi_{n,x}$ and $\pi_{n,w}$
in some compact parameter space $\Theta.$ When the DGP is linear
as in \eqref{eq:Gen_y}-\eqref{eq:Gen_w} then Assumptions \eqref{assu:Omega_Conv},
\eqref{assu:g_gamma} and \eqref{assu:Lambda} can be replaced with
the following assumption.
\begin{assumption}
\label{assu:Mom_W}For all $n\geq1$ and $t\geq1:$\textup{ $E\left[\left\Vert \eta_{n,t}\right\Vert _{2}^{2}\right]<\infty$}.
Let $W_{n,\eta\eta}=n^{-1}\sum_{t=1}^{n}\text{\ensuremath{E}\ensuremath{\left[\eta_{n,t}\eta_{n,t}'\right]}}$,
$\hat{W}_{n,\eta\eta}=n^{-1}\sum_{t=1}^{n}\text{\ensuremath{\eta_{n,t}\eta_{n,t}}}'$.
For all converging subsequences $n_{k}$ assume that $W_{n_{k},\eta\eta}-W_{\eta\eta}=o\left(1\right)$
where $W_{\eta\eta}$ is a symmetric full rank matrix of non-random
constants. In addition, $\hat{W}_{n_{k},\eta\eta}-W_{n_{k},\eta\eta}=o_{p}\left(1\right)$.
\end{assumption}
Assumption \eqref{assu:Mom_W} holds under the following sufficient
low level assumptions. Let $\eta_{n,t,j}$ be the $j$-th element
of $\eta_{n,t}$. Assume $\sup_{t}E\left[\left|\eta_{n,t,j}\right|^{2+\delta}\right]<\infty$
for some $0<\delta<1$ and all $j$ and all $n\geq1$. This implies
that the elements $\chi_{n,t,j}$ of $\chi_{n,t}$ also have bounded
$2+\delta$ moments. Also assume that there exist subsequences $n_{k}$
such that $n_{k}^{-1}\sum_{t=1}^{n_{k}}E\left[\chi_{n_{k},t,j}\chi_{n_{k},t,l}\right]$
converges to some constant $c$ that may depend on the subsequence
$n_{k}$. If $\chi_{n,t}$ does not depend on $n$ Theorem 1 in \citeauthor{Chow1997}
(1997, p.124) implies that Assumption \ref{assu:Mom_W} holds for
all such converging subsequences. A strong law for double arrays of
independent heterogenous $\chi_{n,t}$ is given in \citet{Teicher1985}.
Assumption \ref{assu:Omega_Conv} holds if for $\varepsilon_{n,t}\left(\theta\right)=y_{n,t}-x_{n,t}\beta-w'_{n,t}\gamma$
it follows that $\inf_{\theta}\det\left(E\left[\varepsilon_{n,t}\left(\theta\right)^{2}z_{n,t}z_{n,t}^{'}\right]\right)>0.$
Sufficient conditions are that $\sup_{\theta}\Pr\left(\varepsilon_{n,t}\left(\theta\right)=0\right)<1$
which holds if $y_{n,t}$ is not co-linear with $x_{n,t}$ and $w_{n,t}$
and $\sup_{a}\Pr\left(z_{n,t}'a=0\right)<1$. Assumption \eqref{assu:CLT}
holds if for any $\ell\in\mathbb{R}^{\left(2+d_{w}\right)d_{z}}$
and $\nu_{n,t}=\left(\varepsilon_{n,t},V_{n,x,t},V_{n,w,t}'\right)'\otimes z_{n,t}$
it follows that $E\left[\ell'\nu_{n,t}\right]=0$, $\sigma_{n,t}^{2}=E\left[\left(\ell'\nu_{n,t}\right)^{2}\right]>0$,
$c_{n}^{2}=\sum_{t=1}^{n}\sigma_{n,t}^{2}$ and $\lim{}_{k\rightarrow\infty}\sum_{t=1}^{n_{k}}E\left[\left(\ell'\nu_{n_{k},t}/c_{n_{k}}\right)^{2}1\left\{ \left|\ell'\nu_{n_{k},t}\right|>\varepsilon c_{n_{k}}\right\} \right]\rightarrow0$
for every converging subsequence $n_{k}$ by Theorem 9.6.1 of \citet{Dudley1989}.

\subsection{Local Projections}

We consider time series data with a non-parametric data generating
process and dependence structure. The parameter of interest is the
local projection of \citet{Jorda2005} for a univariate outcome variable
$y_{n,t+h}$ at horizons $h\in\left\{ 0,...,H\right\} $ onto a policy
or intervention variable $x_{n,t}$ and additional controls $w_{n,t}$
under heterogeneity and nonstationarity. To simplify the discussion
we assume that $x_{n,t}$ is univariate. We assume the researcher
is specifying a linear model under $H_{0}$ which is formulated as
a restriction on the conditional mean function $E\left[y_{n,t+h}|x_{n,t},w_{n,t}\right]=x_{n,t}\beta_{t}\left(h\right)+w'_{n,t}\gamma_{t}\left(h\right)$
and where $\beta_{t}\left(h\right)$ for $h\in\left\{ 0,...,H\right\} $
are the parameters of interest. As in Section \ref{sec:Tests} we
only assume that a linear moment restriction holds under $H_{0}$
but we do not assume that $y_{n,t}$ is necessarily generated by a
linear model. The conditioning variables $w_{n,t}$ summarize information
$\mathcal{I}_{t-1}$ up to time $t-1$, which may also contain instrumental
variables necessary to identify structural parameters or to give the
impulse coefficient a causal interpretation.

When $x_{n,t}$ depends on unobserved components that are potentially
correlated with unobserved factors affecting the outcome variable,
identification is achieved with an instrumental variable $z_{n,t}$.
In addition, in settings without a recursive structure some of the
variables in $w_{n,t}$ may also be endogenous. Note that in the case
of selection on observables, $z_{n,t}$ may be identical to $x_{n,t}$
and $w_{n,t}$.

Let $d=H+1,$ $Y_{n,t}=\left(y_{n,t},....,y_{n,t+H}\right)$, $\beta_{n}=\left(\beta_{n}\left(0\right),....,\beta_{n}\left(H\right)\right)'$
and $\gamma_{n}=\left(\gamma_{n}\left(0\right)',....,\gamma_{n}\left(H\right)'\right)'$.
Let $X_{n,t}=\left(x_{n,t},w_{n,t}'\right)'$, 
\begin{equation}
q\left(Y_{n,t},X_{n,t},\beta,\gamma\right)=\left(Y_{n,t}-\left(I_{d}\otimes x_{n,t}\right)\beta-\left(I_{d}\otimes w'_{n,t}\right)\gamma\right)\label{eq:LP_q}
\end{equation}
and $g_{t}\left(\beta,\gamma\right)=q\left(Y_{n,t},X_{n,t},\beta,\gamma\right)\otimes z_{n,t}.$
If the underlying process generating the observed data is weakly stationary
$\beta_{t}\left(h\right)$ does not depend on $t$. In the absence
of stationarity, we define $\beta_{n,0},\gamma_{n,0}$ as the solution
to 
\begin{equation}
n^{-1}\sum_{t=1}^{n}E\left[g_{t}\left(\beta_{n,0},\gamma_{n,0}\right)\right]=0\label{eq:GMM}
\end{equation}
for any fixed $n\geq1$. It is then possible to define pseudo residuals
$\varepsilon_{n,t}:=Y_{t}-\left(I_{d}\otimes x_{n,t}\right)\beta_{n,0}-\left(I_{d}\otimes w'_{n,t}\right)\gamma_{n,0}$
which inherit the property $\sum_{t=1}^{n}E\left[\varepsilon_{n,t}\left(h\right)z_{n,t}\right]=0$
from \eqref{eq:GMM} under $H_{0}$. With $\hat{g}_{n}\left(\beta,\gamma\right)$
as defined in \eqref{eq:gn_hat} and $\hat{\Omega}_{n}\left(\beta,\gamma\right)$
as defined in \eqref{eq:Omega_hat} the statistic $\textrm{A\ensuremath{\textrm{R}_{C}}}\left(\beta_{n,0}\right)$
is defined in \eqref{eq:CUAR}.

Assumption \eqref{assu:Asy_Tightness} is satisfied for many weakly
dependent processes. Assumption \ref{assu:Omega_Conv} can be satisfied
by imposing the conditions of \citet{Jong2000} which include L$_{2}$
near-epoch dependent triangluar arrays. Assumptions \ref{assu:g_gamma}
and \ref{assu:Lambda} can be replaced by Assumption \ref{assu:Mom_W}
due to linearity of \ref{eq:LP_q}. Assumption \ref{assu:Mom_W} holds
for $L_{1}$-mixingale arrays and $L_{1}$-near epoch dependent arrays
by results for weak laws of large numbers in \citet{Andrews1988}.
De Jong (1996)\citet{Jong1996} proves strong laws for $L_{q}$-mixingale
arrays for $q\geq1.$ The multivariate central limit theorem in Assumption
\ref{assu:CLT} can be established with the help of the Cramer-Wold
theorem and by verifying regularity conditions for $\lambda'\Sigma_{n}^{-1/2}\nu_{n,t}$
for all $\lambda$ with $\left\Vert \lambda\right\Vert =1$. The assumption
that $E\left[\text{\ensuremath{\left\Vert S_{n}\right\Vert ^{2}}}\right]<\infty$
is not sufficient in itself to guarantee a CLT without additional
restrictions on dependence. Low level conditions for univariate double
array processes can be found in \citet{McLeish1977} who imposes a
weak dependence condition in the form of a mixingale assumption. McLeish's
theory allows for non-stationarity, while trending behavior, including
unit roots are ruled out by the assumptions on dependence. \citet{Wooldridge1988}
prove a related result under near epoch dependence, which implies
a mixingale property (see Proposition 2.9 of \citeauthor{Wooldridge1988},
1988). They also prove a multivariate version, directly leading to
the condition in Assumption \ref{assu:CLT}. For non-stationary triangular
arrays instead of double arrays, \citet{Jong1997} proves a central
limit theorem directly for heterogenous $L_{2}$-mixingale arrays.
The result is obtained by showing that a blocking scheme can be used
to obtain a martingale approximation for which a martingale CLT can
be applied. \citet{Jong1997} and \citet{Davidson2000} establish
a functional CLT that is based on \citet{Jong1997}. \citet{Rio1997}
establishes a CLT extending Lindeberg's CLT for iid random variables
to double arrays of strongly mixing random variables. An alternative
approach is explored by \citet{Neumann2013} and Merlevede, Peligrad
and Utev \citeyearpar{Merlevede2019} who impose a maximal correlation
measure to control dependence. When $\nu_{n,t}$ is a martingale difference
array, \citet{McLeish1974}, see also \citet{HallHeyde1980}, provides
a central limit theorem.

\subsection{Implementation}

Practitioners are often more interested in confidence intervals for
a single parameter than in confidence sets for the entire parameter
vector. One can obtain such confidence intervals from our confidence
set via the projection method, see e.g., \citet{Dufour1997} and \citet{Dufour2005}.

When $\beta$ is a scalar ($d_{\beta}=1$), which is often the case
in empirical applications of LP-IV, a $100(1-\alpha)$\% confidence
interval for $\beta$ is given by $[\underline{\beta}_{1-\alpha},\overline{\beta}_{1-\alpha}]$,
where $\underline{\beta}_{1-\alpha}=\min\left\{ \beta|AR_{c}(\beta)\leq c_{1-\alpha}\right\} $,
$\overline{\beta}_{1-\alpha}=\max\left\{ \beta|AR_{c}(\beta)\leq c_{1-\alpha}\right\} $,
and $c_{1-\alpha}$ is the $1-\alpha$ quantile of the $\chi^{2}$
distribution with $d-d_{\gamma}$ degrees of freedom. If the confidence
set is convex, this confidence interval has exact $(1-\alpha)$ asymptotic
coverage probability. Otherwise, it is conservative.

When $d_{\beta}>1$, let $\tilde{c}_{1-\alpha}$ be the $1-\alpha$
quantile of the $\chi^{2}$ distribution with $d-d_{\beta}-d_{\gamma}+1$
degrees of freedom. A $100(1-\alpha)$\% confidence interval for $\beta_{1}$
(without of loss of generality) is given by $[\underline{\beta}_{1,1-\alpha},\overline{\beta}_{1,1-\alpha}]$,
where for some $\beta_{2:d_{\beta}}$ 
\begin{eqnarray}
\underline{\beta}_{1,1-\alpha} & = & \min_{\beta_{1}}\left\{ \beta_{1}|AR_{c}(\beta_{1},\beta_{2:d_{\beta}})\leq\tilde{c}_{1-\alpha}\right\} ,\\
\overline{\beta}_{1,1-\alpha} & = & \max_{\beta_{1}}\left\{ \beta_{1}|AR_{c}(\beta_{1},\beta_{2:d_{\beta}})\leq\tilde{c}_{1-\alpha}\right\} ,
\end{eqnarray}
and $\beta_{2:d_{\beta}}$ denotes the $(d_{\beta}-1)\times1$ subvector
of $\beta$ excluding $\beta_{1}$. In practice, the condition, $AR_{c}(\beta_{1},\beta_{2:d_{\beta}})\leq\tilde{c}_{1-\alpha}$,
can be checked by $\min_{\beta_{2:d_{\beta}},\gamma}Q_{n}(\beta,\beta_{2:d_{\beta}},\gamma)\leq\tilde{c}_{-1-\alpha}$
for a given value of $\beta_{1}$, where the minimum on the left-hand
side is achieved by numerical optimization methods. For example, to
obtain $\underline{\beta}_{1,1-\alpha}$ {[}$\overline{\beta}_{1,1-\alpha}${]},
one minimizes $\beta_{1}$ {[}resp. -$\beta_{1}${]} subject to $\min_{\beta_{2:d_{\beta}},\gamma}Q_{n}([\beta_{1}\;\beta_{2:d_{\beta}}],\gamma)\leq\tilde{c}_{1-\alpha}$.
In MATLAB, one can use \textsf{fminunc} to evaluate the constraint
function and use \textsf{fmincon} to minimize $\beta_{1}$ {[}$-\beta_{1}${]}.

When $d_{\beta}>1$, these confidence intervals are conservative in
that the asymptotic coverage probability is greater than or equal
to $1-\alpha$.

\section{Monte Carlo Experiments\label{sec:Monte-Carlo-Experiments}}

The data generating process is based on the bivariate VAR model of
inflation ($\pi_{t}$) and output gap ($x_{t}$) in equations (4)
and (7) of \citet{Kleibergen2009}: 
\begin{eqnarray}
x_{t} & = & \rho_{1}x_{t-1}+\rho_{2}x_{t-2}+\nu_{t},\\
\pi_{t} & = & (\alpha_{0}\rho_{1}+\alpha_{1})x_{t-1}+\alpha_{0}\rho_{2}x_{t-2}+\eta_{t},
\end{eqnarray}
where $\rho_{1}=\rho(1-\rho_{2})=0.9(1-\rho_{2})$, $\alpha_{0}=\lambda/[1-\gamma_{f}(\rho_{1}+\gamma_{f}\rho_{2})]$,
$\alpha_{1}=\lambda\gamma_{f}\rho_{2}/[1-\gamma_{f}(\rho_{1}+\gamma_{f}\rho_{2})]$,
$\lambda=\rho_{\eta\nu}\frac{\sigma_{\eta}}{\sigma_{\nu}}\left(1-\gamma_{f}\left(\rho_{1}+\gamma_{f}\rho_{2}\right)\right)$.
We introduce conditional heteroskedasticity to allow the conditional
variance of the moment conditions to depend on the instruments: $\eta_{t}=\exp(0.5h_{1t})\epsilon_{1t}/\kappa$,
$\nu_{t}=\exp(0.5h_{2t})\epsilon_{2t}/\kappa$, $\epsilon_{1t}$ and
$\epsilon_{2t}$ are iid $N(0,1)$ with correlation $\rho_{\eta\nu}$,
, $\kappa$ is set so that $\textrm{Var}(\eta_{t})=\sigma_{\eta}^{2}=1$
and $\textrm{Var}(\nu_{t})=\sigma_{\nu}^{2}=1$, and $h_{1t}$ and
$h_{2t}$ are two independent AR(1) processes: 
\begin{eqnarray}
h_{1t} & = & 0.9h_{1,t-1}+\xi_{1,t},\\
h_{2t} & = & 0.9h_{2,t-1}+\xi_{2,t},
\end{eqnarray}
and $\xi_{1t}$ and $\xi_{2t}$ are independent iid $N(0,0.2)$ random
variables that are independent of $\epsilon_{1t}$ and $\epsilon_{2t}$.

We estimate the Phillips curve, 
\begin{equation}
\pi_{t}\;=\;\lambda x_{t}+\gamma_{f}E_{t}(\pi_{t+1})+u_{t},
\end{equation}
using the moment conditions, 
\begin{equation}
E\left[z_{t}(\pi_{t}-\lambda x_{t}-\gamma_{f}\pi_{t+1})\right]\;=\;0,
\end{equation}
We consider two choices of instruments: (i) $z_{t}=[\pi_{t-1}\;x_{t-1}\;\pi_{t-2}\;x_{t-2}'\;\pi_{t-3}\;x_{t-3}]'$
as in \citet{Kleibergen2009} and (ii) $z_{t}=[x_{t}\;x_{t-1}]$ which
represents a just identified specification.

It follows from equation 6 and other equations in \citet{Kleibergen2009}
that the population reduced-form equations for the RHS endogenous
variables can be written as 
\begin{equation}
\left[\begin{array}{c}
\pi_{t+1}\\
x_{t}
\end{array}\right]\;=\;\left[\begin{array}{cc}
\alpha_{0}((\rho_{1}+\rho_{2}\gamma_{f})\rho_{1}+\rho_{2}) & \alpha_{0}(\rho_{1}+\rho_{2}\gamma_{f})\rho_{2}\\
\rho_{1} & \rho_{2}
\end{array}\right]\left[\begin{array}{c}
x_{t-1}\\
x_{t-2}
\end{array}\right]+\left[\begin{array}{c}
(\alpha_{1}+\alpha_{0}\rho_{1})\nu_{t}+\eta_{t+1}\\
\nu_{t}
\end{array}\right].\label{eq:reduced-form equation}
\end{equation}
There are at least two cases of identification failures. First, when
$\rho_{2}=0$, the projection coefficient matrix simplifies to 
\begin{equation}
\left[\begin{array}{cc}
\rho^{2}\rho_{\eta\nu} & 0\\
\rho & 0
\end{array}\right].
\end{equation}
Thus, $\gamma_{f}$ and $\lambda$ are only partially identified (i.e.,
their linear combination is identified). Second, when $\gamma_{\eta\nu}=0$,
the matrix simplifies to 
\begin{equation}
\left[\begin{array}{cc}
0 & 0\\
\rho(1-\rho_{2}) & \rho_{2}
\end{array}\right].
\end{equation}
In this case, $\gamma_{f}$ is identified but $\lambda$ is not.

We consider $\rho_{2}\in\{-0.05,-0.65,-0.99\}$ and $\textrm{Corr}(\eta_{t},\nu_{t})\in\{0.2,0.99\}$.
The number of Monte Carlo simulations is set to 5,000. For HAC estimation,
the lag truncation parameter is set to zero, as the moment conditions
are serially uncorrelated.

Tables \ref{tab:Table1} and \ref{tab:Table2} report the finite-sample
size of the $t$ test, the subset KLM test and the subset $AR_{c}$
test when testing $\gamma_{f}=0.5$ at the 5\% significance level.
Table 1 shows that, when the three lags of $\pi_{t}$ and $x_{t}$
are used as instruments, the conventional $t$ test suffers from size
distortions especially when the instruments are weak and the degree
of endogeneity is high ($\rho_{2}=-0.05$ and $\rho_{\epsilon\eta}=0.99$)
However, when the parameters are just identified, the size distortions
of the conventional $t$ test are much smaller, as shown in Table
2. In either case, the $AR_{c}$ and $KLM$ tests do not suffer from
size distortions as expected.

Figures \ref{fig:Fig1}(a)-(f) and Figures \ref{fig:Fig2}(a)-(f)
show the power curve of these three tests when $z_{t}=[\pi_{t-1}\;x_{t-1}\;\pi_{t-2}\;x_{t-2}\;\pi_{t-3}\;x_{t-3}]'$
and $z_{t}=[x_{t-1}\;x_{t-2}]'$, respectively. The sample size is
set to 1,000. When the instruments are weak ($\rho_{2}=-0.05$ or
$\rho_{\epsilon\eta}=0.0$), the $\textrm{AR}_{c}$ and KLM tests
have correct size, although they do not have any power, as shown in
Figures \ref{fig:Fig1}(a) and (d) and Figures \ref{fig:Fig2}(a)
and (d). \footnote{Interestingly, when $\rho_{\epsilon\eta}=0.0$, the $AR_{c}$ and
KLM tests do not have power when testing $\gamma_{f}=0.5$.} As the instruments become stronger, the tests become more powerful,
as demonstrated in Figures \ref{fig:Fig1}(b),(c) and (d),(f) and
Figures \ref{fig:Fig2} (b),(c) and (d),(f). When the parameters are
overidentified and are very strongly identified, the KLM test is more
powerful than the $\textrm{AR}_{c}$ test, seeFigures \ref{fig:Fig1}(c)
and (f). However, when the parameter is mildly strongly identified,
the $\textrm{AR}_{c}$ test is more powerful than the KLM test when
the true parameter value is close to the null value, see Figures Figures
\ref{fig:Fig1}(b) and (d). When the parameters are just identified,
their power curves are virtually identical, as seen in Figures Figures
\ref{fig:Fig2}(a) through (f). These results are consistent with
\citet{Moreira2009} who shows that in the linear Gaussian case the
just identified $\textrm{AR}_{c}$ based tests as well as score tests
are uniformly most powerful for certain testing problems.

\section{Conclusions}

We obtain a theoretical upper bound for the Anderson Rubin subset
statistic based on the continuous updating estimator for general parametric
non-linear moment restrictions with heteroskedasticity, dependence
and heterogeneity. The upper bound is shown to be valid for all converging
subsequences of data distributions irrespective of whether nuisance
parameters are identified or not. The upper bound is sharp in the
sense that the $\textrm{AR}_{c}$ attains it when nuisance parameters
are identified. We establish that the upper bound has a limiting $\chi_{d-d_{\gamma}}^{2}$
distribution along all converging subsequences. These results imply
that the $\textrm{AR}_{c}$ statistic has uniformly correct size and
can be used to construct confidence intervals that are uniformly valid.

\newpage{}

\begin{table}
\caption{\label{tab:Table1}Finite-sample size (5\% nominal size): $z_{t}=[\pi_{t-1}\;x_{t-1}\;\pi_{t-2}\;x_{t-2}\;\pi_{t-3}\;x_{t-3}]^{\prime}$}

\vspace*{1em}
\begin{tabular}{ll|lll|lll|lll}
\hline 
 &  & \multicolumn{3}{c|}{$\rho_{\epsilon\eta}=0.0$} & \multicolumn{3}{c|}{$\rho_{\epsilon\eta}=0.2$} & \multicolumn{3}{c}{$\rho_{\epsilon\eta}=0.99$}\tabularnewline
\hline 
$T$ & $\rho_{2}$ & $t$-test & $AR_{c}$ & KLM & $t$-test & $AR_{c}$ & KLM & $t$-test & $AR_{c}$ & KLM\tabularnewline
\hline 
100 & 0.0 & 0.148 & 0.041 & 0.022 & 0.158 & 0.043 & 0.023 & 0.970 & 0.033 & 0.075\tabularnewline
100 & -0.05 & 0.139 & 0.041 & 0.020 & 0.155 & 0.043 & 0.017 & 0.970 & 0.034 & 0.075\tabularnewline
100 & -0.65 & 0.099 & 0.034 & 0.016 & 0.102 & 0.033 & 0.017 & 0.172 & 0.030 & 0.026\tabularnewline
\hline 
500 & 0.0 & 0.124 & 0.058 & 0.017 & 0.131 & 0.053 & 0.018 & 0.978 & 0.046 & 0.043\tabularnewline
500 & -0.05 & 0.123 & 0.057 & 0.016 & 0.140 & 0.056 & 0.022 & 0.971 & 0.049 & 0.047\tabularnewline
500 & -0.65 & 0.086 & 0.038 & 0.014 & 0.066 & 0.041 & 0.012 & 0.070 & 0.050 & 0.042\tabularnewline
\hline 
\end{tabular}
\end{table}
\begin{table}
\caption{\label{tab:Table2}Finite-sample size (5\% nominal size): $z_{t}=[x_{t-1}\;x_{t-2}]^{\prime}$}

\vspace*{1em}
\begin{tabular}{ll|lll|lll|lll}
\hline 
 &  & \multicolumn{3}{c|}{$\rho_{\epsilon\eta}=0.0$} & \multicolumn{3}{c|}{$\rho_{\epsilon\eta}=0.2$} & \multicolumn{3}{c}{$\rho_{\epsilon\eta}=0.99$}\tabularnewline
\hline 
$T$ & $\rho_{2}$ & $t$-test & $AR_{c}$ & KLM & $t$-test & $AR_{c}$ & KLM & $t$-test & $AR_{c}$ & KLM\tabularnewline
\hline 
100 & 0.0 & 0.014 & 0.044 & 0.044 & 0.016 & 0.060 & 0.060 & 0.450 & 0.044 & 0.044\tabularnewline
100 & -0.05 & 0.012 & 0.043 & 0.043 & 0.014 & 0.054 & 0.054 & 0.329 & 0.050 & 0.050\tabularnewline
100 & -0.65 & 0.001 & 0.010 & 0.010 & 0.005 & 0.013 & 0.013 & 0.056 & 0.044 & 0.044\tabularnewline
\hline 
500 & 0.0 & 0.010 & 0.047 & 0.047 & 0.013 & 0.067 & 0.067 & 0.453 & 0.051 & 0.051\tabularnewline
500 & -0.05 & 0.010 & 0.046 & 0.046 & 0.011 & 0.063 & 0.063 & 0.185 & 0.054 & 0.054\tabularnewline
500 & -0.65 & 0.000 & 0.007 & 0.007 & 0.012 & 0.011 & 0.011 & 0.050 & 0.050 & 0.050\tabularnewline
\hline 
\end{tabular}
\end{table}
\begin{figure}
\centering %
\begin{tabular}{cc}
(a) $\rho_{2}=-0.05$ & (d) $\rho_{\eta\nu}=0.0$\tabularnewline
\includegraphics[width=65mm]{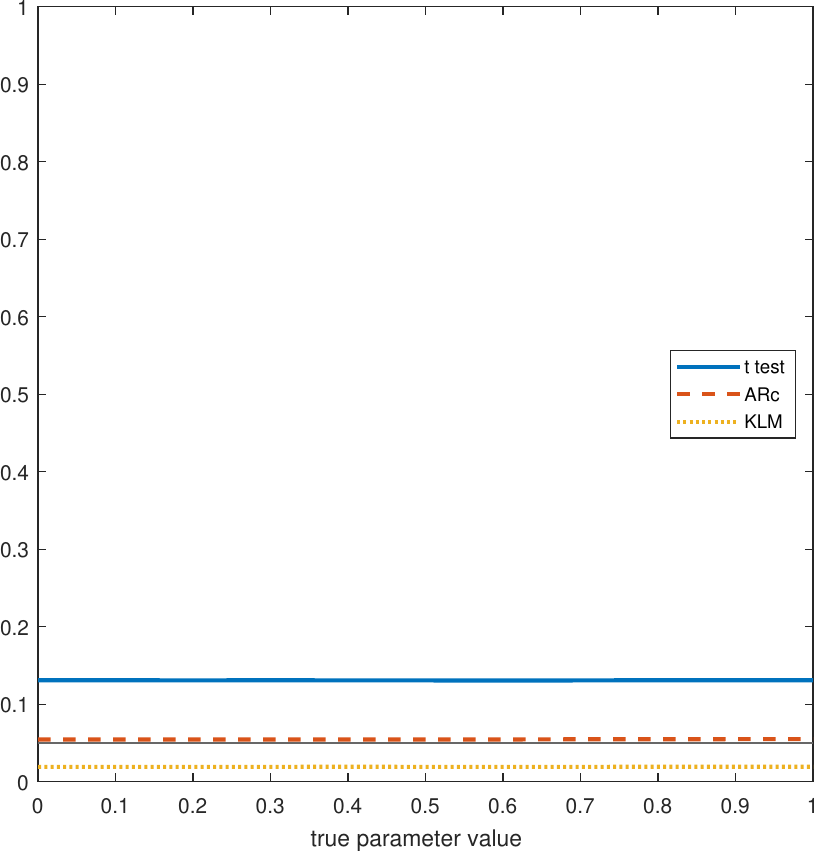} & \includegraphics[width=65mm]{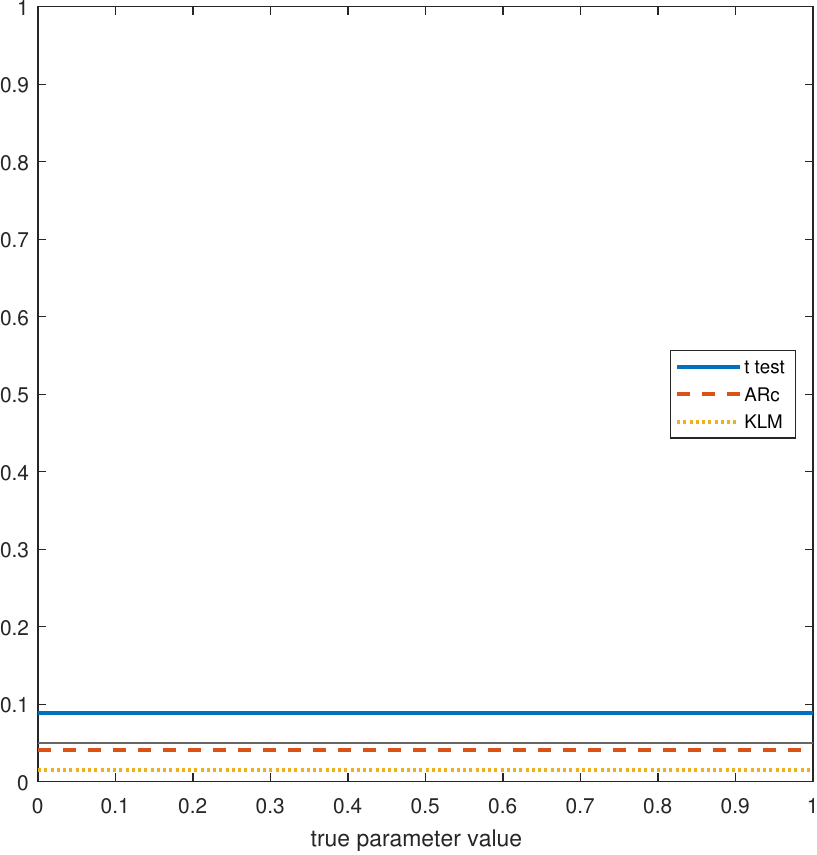}\tabularnewline
(b) $\rho_{2}=-0.65$ & (e) $\rho_{\eta\nu}=0.2$\tabularnewline
\includegraphics[width=65mm]{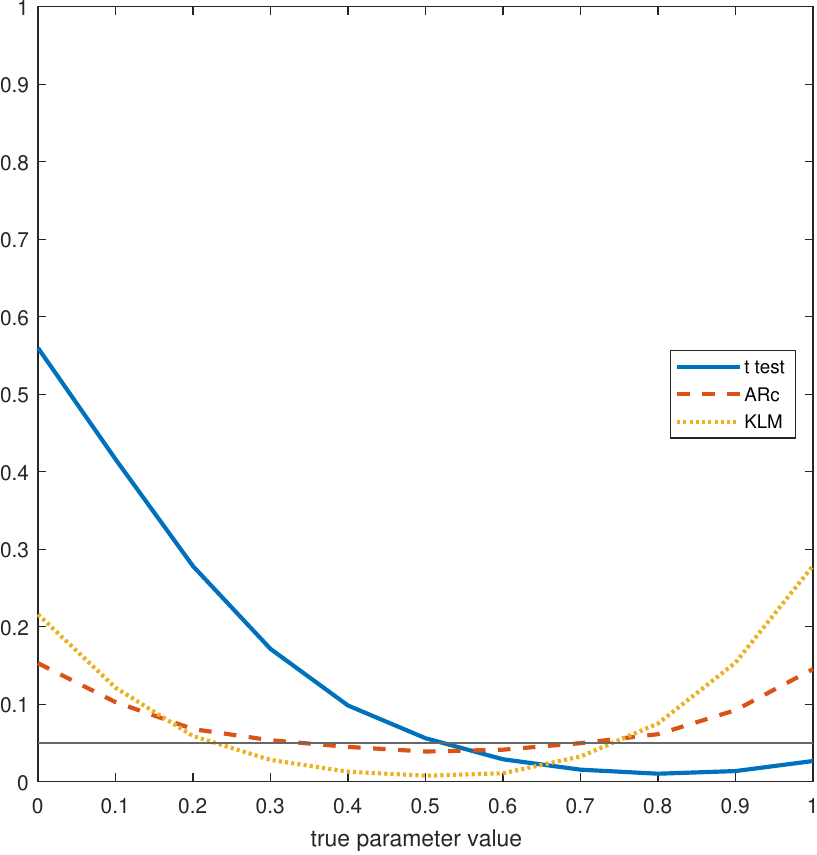} & \includegraphics[width=65mm]{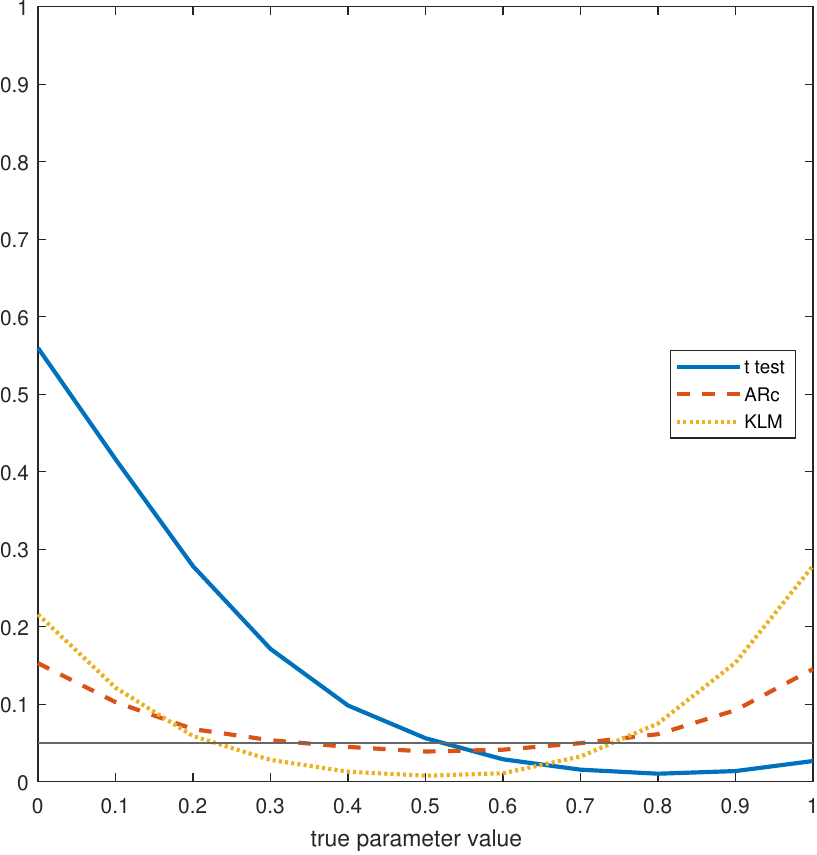}\tabularnewline
(c) $\rho_{2}=-0.99$ & (f) $\rho_{\eta\nu}=0.99$\tabularnewline
\includegraphics[width=65mm]{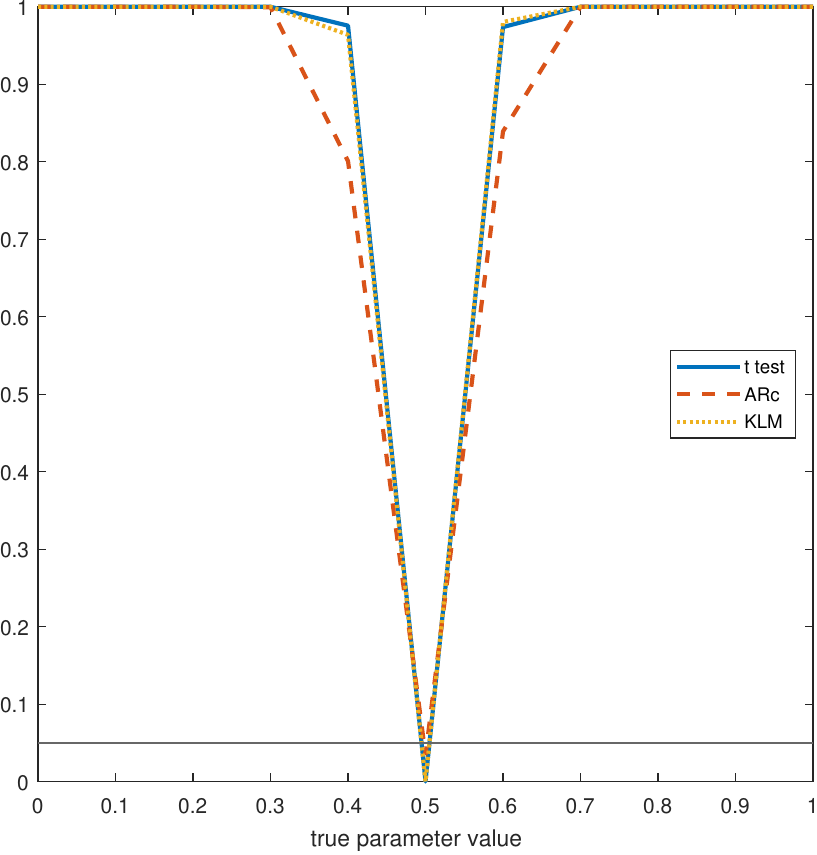} & \includegraphics[width=65mm]{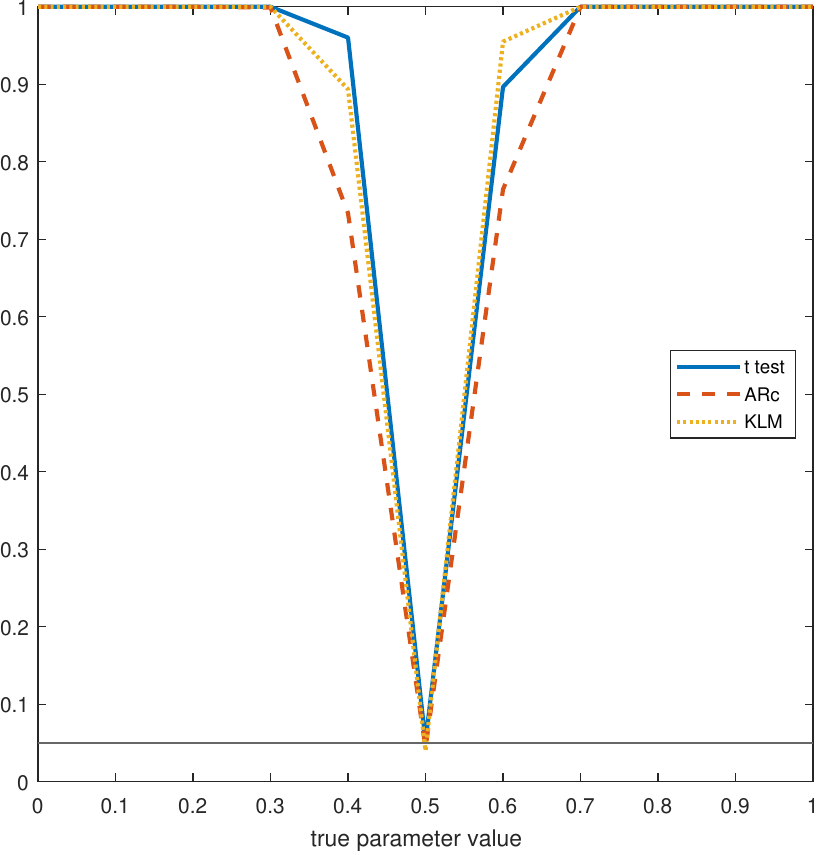}\tabularnewline
\end{tabular}\caption{\label{fig:Fig1}(a)-(f): $z_{t}=[\pi_{t-1}\;x_{t-1}\;\pi_{t-2}\;x_{t-2}\;\pi_{t-3}\;x_{t-3}]'$;
(a)-(c): $\rho_{\eta\nu}=0.2$, (d)-(f): $\rho_{2}=-0.65$}
\end{figure}
\begin{figure}
\centering %
\begin{tabular}{cc}
(a) $\rho_{2}=-0.05$ & (d) $\rho_{\eta\nu}=0.0$\tabularnewline
\includegraphics[width=65mm]{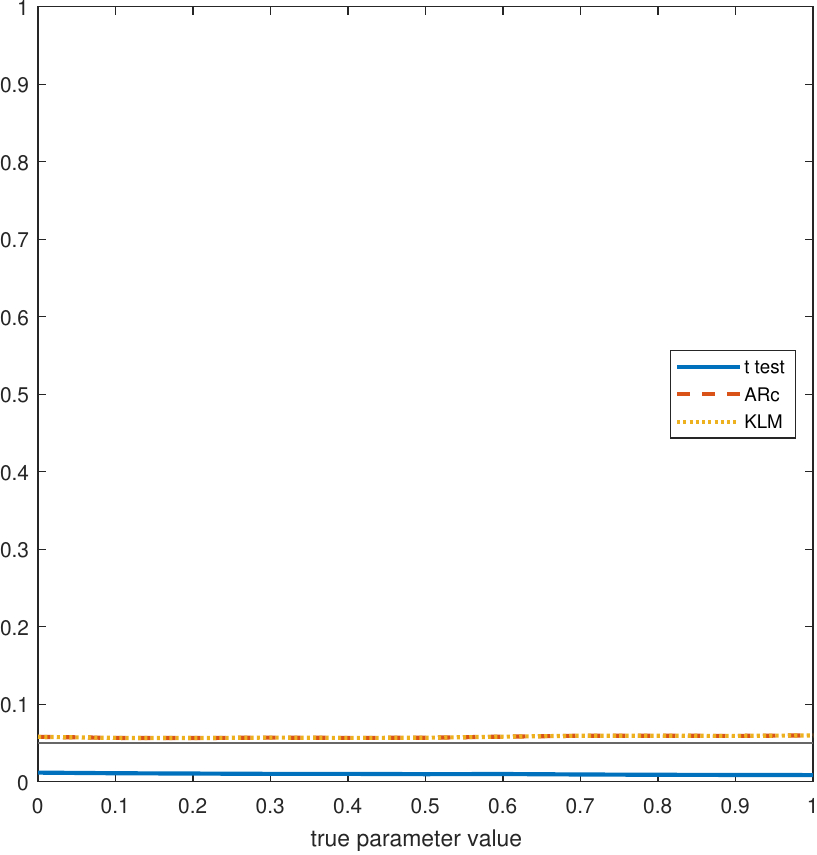} & \includegraphics[width=65mm]{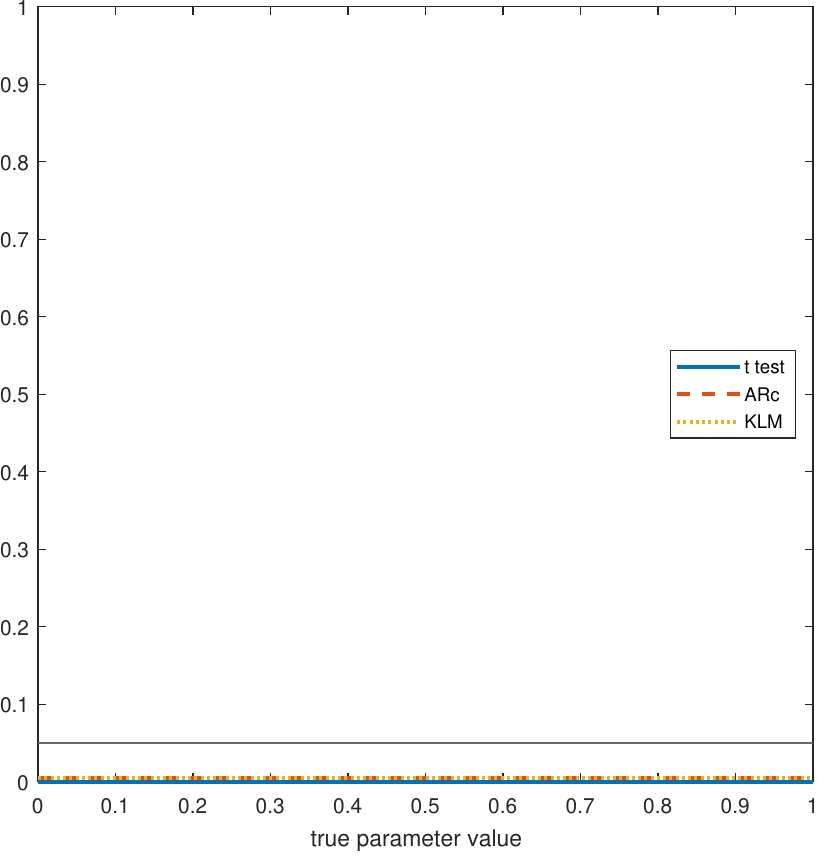}\tabularnewline
(b) $\rho_{2}=-0.65$ & (e) $\rho_{\eta\nu}=0.2$\tabularnewline
\includegraphics[width=65mm]{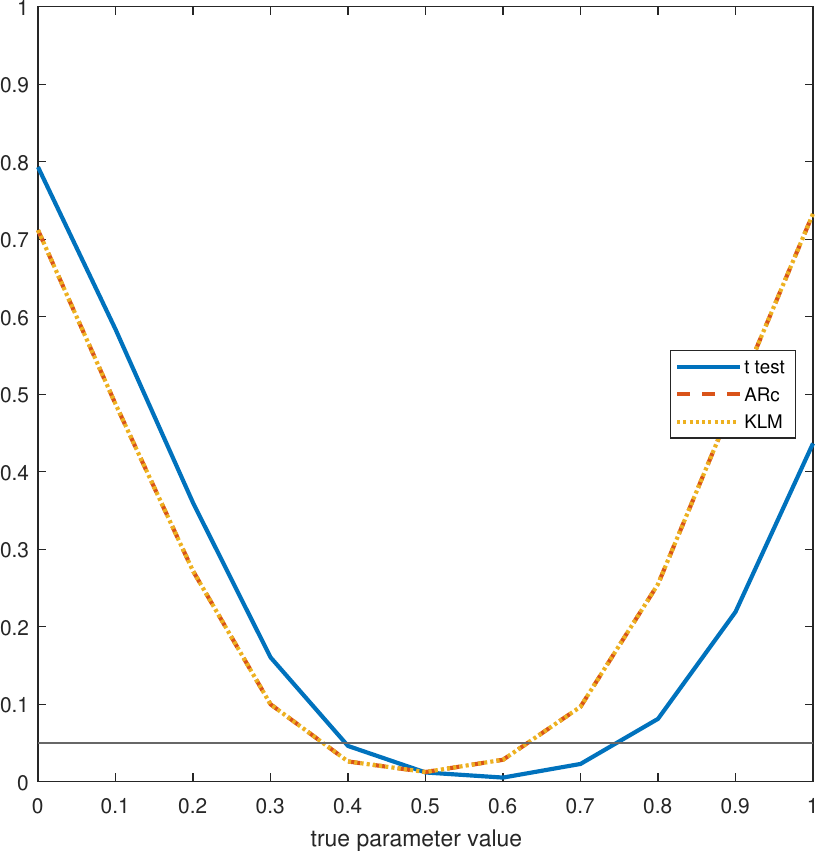} & \includegraphics[width=65mm]{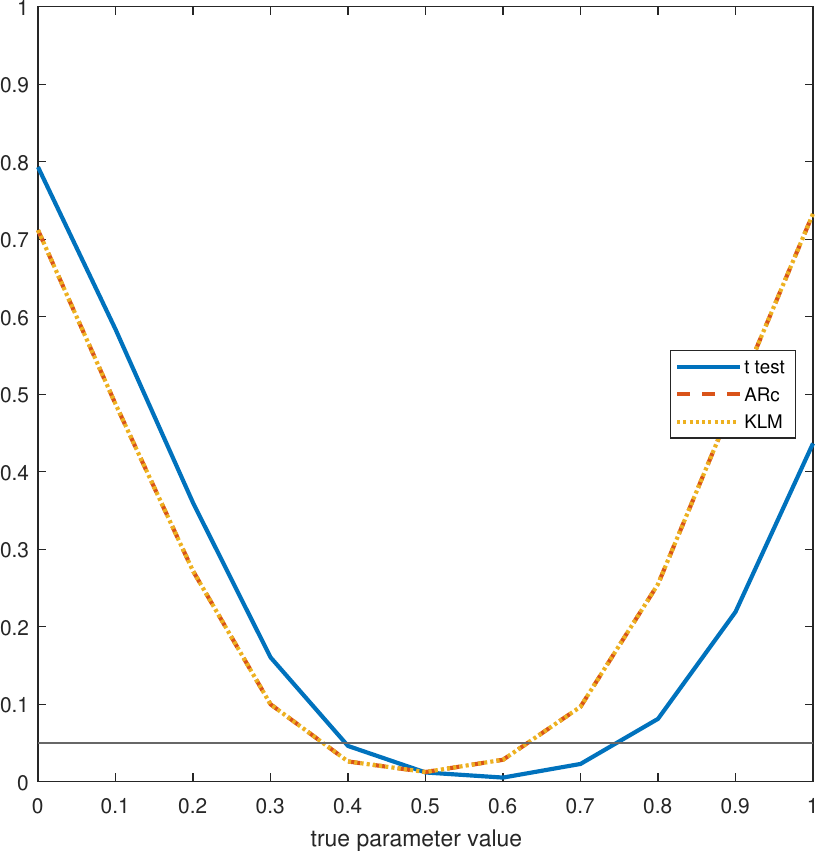}\tabularnewline
(c) $\rho_{2}=-0.99$ & (f) $\rho_{\eta\nu}=0.99$\tabularnewline
\includegraphics[width=65mm]{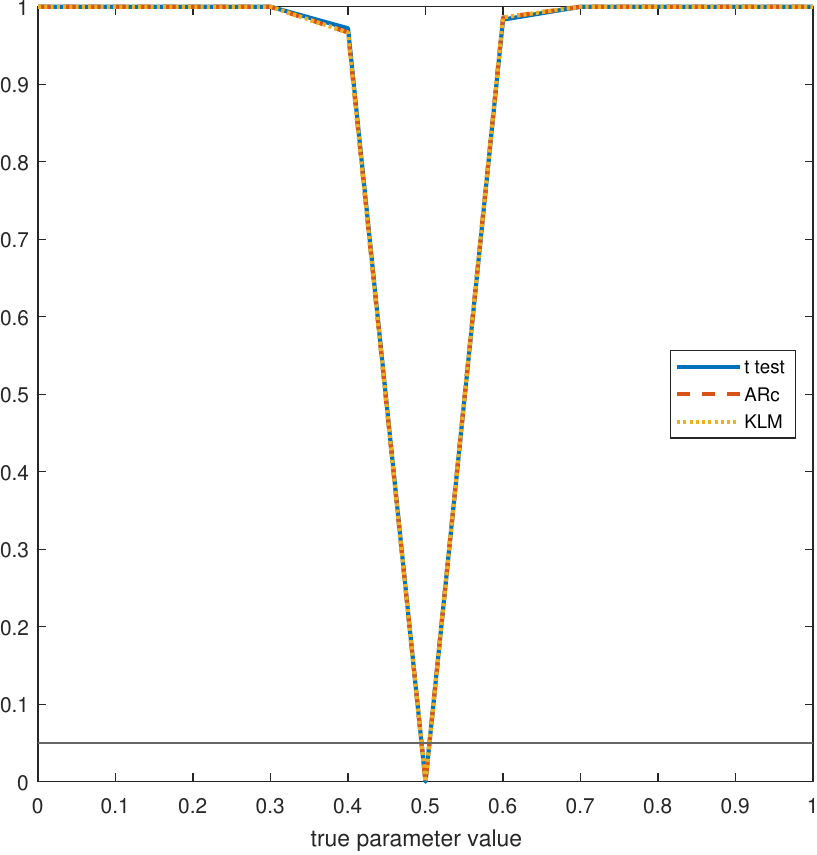} & \includegraphics[width=65mm]{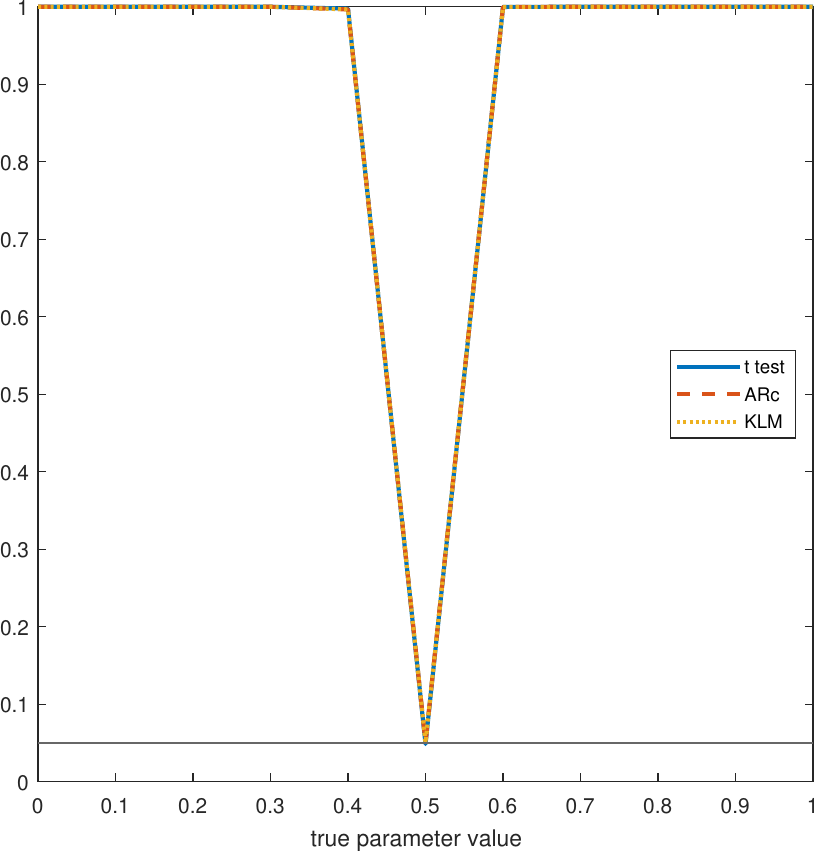}\tabularnewline
\end{tabular}\caption{\label{fig:Fig2}(a)-(f): $z_{t}=[x_{t-1}\;x_{t-2}]'$; (a)-(c): $\rho_{\eta\nu}=0.2$,
(d)-(f): $\rho_{2}=-0.65$}
\end{figure}

\newpage{}

\bibliographystyle{elsarticle-harv}
\addcontentsline{toc}{section}{\refname}\bibliography{sig_band_bibtex}

\newpage{}

\appendix

\section{Proofs}

\subsection{Preliminaries}

The proof of Theorem \ref{thm:AsySz} is based on Lemmas \ref{lem:gamma-conv},
\ref{lem:LemA1}, \ref{lem:LemA2} and \ref{lem:LemA3}. We prove
Lemma \ref{lem:gamma-conv} in Section \ref{subsec:Gamma_tilde} and
we state and prove Lemmas \ref{lem:LemA1}, \ref{lem:LemA2} and \ref{lem:LemA3}
in Section \ref{subsec:Proofs_Main} which also contains the proof
of Theorem \ref{thm:AsySz}. We collect some results on perturbation
bounds in Section \ref{subsec:Perturbation-Bounds}.

\subsection{\label{subsec:Perturbation-Bounds}Perturbation Bounds}

We state the following result for ease of reference that links the
converge in operator norm of matrices to the convergence of the corresponding
eigenvalues. For a $d_{\gamma}\times d_{\gamma}$ Hermitian matrices
$\mathscr{A}_{n}$ and $\mathscr{A}$ such that $\left\Vert \mathscr{A}_{n}-A\right\Vert \rightarrow0$
let $\lambda_{1}\left(\mathscr{A}\right)\geq...\geq\lambda_{d_{\gamma}}\left(\mathscr{A}\right)$
and $\lambda_{1}\left(\mathscr{A}_{n}\right)\geq...\geq\lambda_{d_{\gamma}}\left(\mathscr{A}_{n}\right)$
be the eigenvalues of $\mathscr{A}$ and $\mathscr{A}_{n}$ in decreasing
order.
\begin{thm}
\label{thm:Weyl}(Weyl's perturbation theorem) For $\mathscr{A},$$\mathscr{A}_{n}$
and $\lambda_{j}\left(.\right)$ defined above it follows that 
\begin{equation}
\text{max}_{j}\left|\lambda_{j}\left(\mathscr{A}\right)-\lambda_{j}\left(\mathscr{A}_{n}\right)\right|\leq\left\Vert \mathscr{A}-\mathscr{A}_{n}\right\Vert \rightarrow0.\label{eq:Weyl-Perturb}
\end{equation}
\end{thm}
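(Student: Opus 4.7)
The plan is to derive the bound from the Courant--Fischer min--max characterization of eigenvalues of Hermitian matrices, which is the standard route to Weyl-type perturbation inequalities. Writing the perturbation as $\mathscr{E}_{n}=\mathscr{A}_{n}-\mathscr{A}$, the inequality $\left\Vert \mathscr{E}_{n}\right\Vert \to 0$ already controls the quadratic form $x^{\ast}\mathscr{E}_{n}x$ uniformly over unit vectors $x$, so the whole proof reduces to a pointwise perturbation of the Rayleigh quotient followed by passage to extrema over subspaces.

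First, I would record the Courant--Fischer identity
\[
\lambda_{j}(\mathscr{A})=\max_{\substack{S\subseteq\mathbb{C}^{d_{\gamma}}\\ \dim S=j}}\min_{\substack{x\in S\\ \left\Vert x\right\Vert =1}}x^{\ast}\mathscr{A}x,
\]
valid for any Hermitian $\mathscr{A}$, and symmetrically for $\mathscr{A}_{n}$. Second, fix $j$ and let $S^{\ast}$ be a $j$-dimensional subspace achieving the outer maximum for $\mathscr{A}$. For any unit vector $x\in S^{\ast}$, $x^{\ast}\mathscr{A}_{n}x=x^{\ast}\mathscr{A}x+x^{\ast}\mathscr{E}_{n}x\geq x^{\ast}\mathscr{A}x-\left\Vert \mathscr{E}_{n}\right\Vert $ by the definition of the operator norm. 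Taking the minimum over unit $x\in S^{\ast}$ gives $\min_{x\in S^{\ast},\left\Vert x\right\Vert =1}x^{\ast}\mathscr{A}_{n}x\geq \lambda_{j}(\mathscr{A})-\left\Vert \mathscr{E}_{n}\right\Vert $, and then the max--min defining $\lambda_{j}(\mathscr{A}_{n})$ yields $\lambda_{j}(\mathscr{A}_{n})\geq \lambda_{j}(\mathscr{A})-\left\Vert \mathscr{E}_{n}\right\Vert $.

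Third, I would swap the roles of $\mathscr{A}$ and $\mathscr{A}_{n}$ (replacing $\mathscr{E}_{n}$ with $-\mathscr{E}_{n}$, whose operator norm is identical) to obtain the reverse inequality $\lambda_{j}(\mathscr{A})\geq\lambda_{j}(\mathscr{A}_{n})-\left\Vert \mathscr{E}_{n}\right\Vert $. Combining the two yields $\left|\lambda_{j}(\mathscr{A})-\lambda_{j}(\mathscr{A}_{n})\right|\leq\left\Vert \mathscr{E}_{n}\right\Vert $ for each $j$, and taking the maximum over $j$ gives the stated bound. Convergence of the right-hand side to zero is then immediate from the hypothesis $\left\Vert \mathscr{A}_{n}-\mathscr{A}\right\Vert \to 0$.

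There is no substantial obstacle: the only subtle point is making sure the Courant--Fischer step is applied to the same subspace $S^{\ast}$ chosen optimally for $\mathscr{A}$ (which is why one gets an inequality in one direction directly, and must repeat the argument symmetrically for the other direction). The argument is entirely deterministic and uses only that the operator norm majorises the quadratic form on the unit sphere; no structural assumption on $\mathscr{A}$ beyond Hermiticity is invoked, which is why the bound applies uniformly across $j$.
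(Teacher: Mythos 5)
Your proof is correct: the Courant--Fischer max--min argument, applied once with the optimal subspace for $\mathscr{A}$ and once with the roles of $\mathscr{A}$ and $\mathscr{A}_{n}$ swapped, gives exactly the bound $\max_{j}\left|\lambda_{j}\left(\mathscr{A}\right)-\lambda_{j}\left(\mathscr{A}_{n}\right)\right|\leq\left\Vert \mathscr{A}-\mathscr{A}_{n}\right\Vert$, and convergence is then immediate from the hypothesis. The paper does not prove this theorem at all---it cites Bhatia (1996, Corollary III.2.6)---and your variational argument is precisely the standard proof underlying that citation, so you have simply supplied in full the same approach the paper outsources to the reference.
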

For a proof of the theorem see \citeauthor{Bhatia1996} (1996, Corollary
III.2.6).

A related result links the convergence of eigenvectors to the converge
in operator norm of matrices. The convergence of the eigenvectors
is in terms of the supspaces they span. The following result is slightly
adapted from \citeauthor{Bhatia1996}(1996, Theorem VII.3.2).\nocite{Bhatia1996}
\begin{thm}
\label{thm:Eigenvector}Let $\mathscr{L}_{1}$ and $\mathscr{L}_{2}$
be two subsets of $\mathbb{R}$ such that $\textrm{dist}\left(\mathscr{L}_{1},\mathscr{L}_{2}\right)=\delta>0$
for some $\delta.$ Let $\mathscr{A}_{n}$ and $\mathscr{A}$ be Hermitian
matrices with eigenvectors $R_{n},$ and $R$ respectively. Let $\hat{R}_{1,n}=P_{\mathscr{A}_{n}}\left(\mathscr{L}_{1}\right)$
be the projection matrix projecting onto the subspace spanned by the
eigenvectors $R_{n}$ of $\mathscr{A}_{n}$ associated with eigenvalues
who are elements of the set $\mathscr{L}_{1}.$ Similarly, define
$\hat{R_{2}}=P_{\mathscr{A}}\left(\mathscr{L}_{2}\right)$. Then,
\begin{equation}
\left\Vert \hat{R}_{1,n}\hat{R_{2}}\right\Vert \leq\frac{\pi}{2\delta}\left\Vert \mathscr{A}_{n}-\mathscr{A}\right\Vert .\label{eq:eig_ortho}
\end{equation}
\end{thm}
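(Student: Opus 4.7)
The plan is to derive the stated bound via the classical Davis--Kahan $\sin\Theta$ strategy, which converts the spectral separation between $\mathscr{A}_n$ and $\mathscr{A}$ on the relevant subspaces into a norm bound on the overlap $\hat{R}_{1,n}\hat{R}_2$. First I would exploit the fact that $\hat{R}_{1,n}$ commutes with $\mathscr{A}_n$ and $\hat{R}_2$ commutes with $\mathscr{A}$ (both being spectral projections of Hermitian matrices built from the functional calculus on disjoint portions of the spectrum). Writing $X=\hat{R}_{1,n}\hat{R}_2$ and using these commutations, a direct algebraic manipulation yields the Sylvester-type identity
\[
\mathscr{A}_n\, X - X\, \mathscr{A} \;=\; \hat{R}_{1,n}\,(\mathscr{A}_n-\mathscr{A})\,\hat{R}_2,
\]
and the right-hand side is immediately bounded by $\|\mathscr{A}_n-\mathscr{A}\|$ because orthogonal projections are contractive.

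Second, I would reduce to the active subspaces: on $\operatorname{range}(\hat{R}_{1,n})$ the operator $\mathscr{A}_n$ has spectrum contained in $\mathscr{L}_1$, and on $\operatorname{range}(\hat{R}_2)$ the operator $\mathscr{A}$ has spectrum contained in $\mathscr{L}_2$, with $\operatorname{dist}(\mathscr{L}_1,\mathscr{L}_2)=\delta>0$. The third step is to invert the Sylvester equation $BX-XC=Y$ with $B=\mathscr{A}_n$, $C=\mathscr{A}$, $Y=\hat{R}_{1,n}(\mathscr{A}_n-\mathscr{A})\hat{R}_2$ through the integral representation
\[
X \;=\; \int_{-\infty}^{\infty}\widehat{f}(s)\, e^{isB}\,Y\,e^{-isC}\,ds,
\]
where $f$ is a real function whose Fourier transform $\widehat{f}$ satisfies $\widehat{f}(\mu-\nu)=1/(\mu-\nu)$ for $\mu\in\mathscr{L}_1$, $\nu\in\mathscr{L}_2$. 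Hermiticity of $B$ and $C$ makes $e^{isB}$ and $e^{-isC}$ unitary, so $\|X\|\leq\|\widehat{f}\|_{L^1}\|Y\|$; the sharp choice of $f$ subject to the interpolation constraint on the gap $[-\delta,\delta]^c$ yields $\|\widehat{f}\|_{L^1}=\pi/(2\delta)$. Concatenating with the first-step bound on $\|Y\|$ produces exactly \eqref{eq:eig_ortho}.

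The main obstacle is nailing down the sharp $\pi/2$ constant rather than some cruder finite constant. A Neumann-series or resolvent estimate on the Sylvester operator would only give a bound of the form $C/\delta$, and sharpening this to $\pi/(2\delta)$ is the specifically Hermitian double-operator-integral argument sketched above; this is where the Davis--Kahan theorem genuinely uses self-adjointness of $\mathscr{A}_n$ and $\mathscr{A}$. All other ingredients (the commutation identity, the contractivity of projections, the restriction to active subspaces) are soft and routine. Since the statement is essentially Bhatia's Theorem VII.3.2, once the Sylvester identity and the separation $\delta$ are in place one can close the argument by citing the sharp Sylvester-inverse bound from that reference rather than re-deriving the integral estimate from scratch.
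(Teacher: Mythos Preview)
The paper does not supply its own proof of this theorem; it simply refers the reader to Bhatia (1996, p.~212), i.e.\ Theorem~VII.3.2 there. Your sketch is precisely the Davis--Kahan/Bhatia argument that reference contains: the commutation identity $\mathscr{A}_n X - X\mathscr{A} = \hat{R}_{1,n}(\mathscr{A}_n-\mathscr{A})\hat{R}_2$, restriction to the active subspaces so that the two operators have spectra in $\mathscr{L}_1$ and $\mathscr{L}_2$ respectively, and then the sharp Sylvester-inverse bound via the Fourier/double-operator-integral representation, which is exactly where the constant $\pi/2$ enters. So your approach is correct and coincides with what the paper invokes by citation.

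One small notational slip: in your integral representation you place $\widehat{f}$ in the integrand and simultaneously impose the interpolation constraint $\widehat{f}(\mu-\nu)=1/(\mu-\nu)$ on $\widehat{f}$. One of these should be $f$ rather than $\widehat{f}$: the function appearing under the integral is the one whose $L^1$ norm you bound, and its Fourier transform is the one that must agree with $t\mapsto 1/t$ on $\{|t|\geq\delta\}$. This does not affect the substance of the argument.
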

For a proof of the theorem see \citeauthor{Bhatia1996} (1996, p.212)\nocite{Bhatia1996}.

\subsection{\label{subsec:Gamma_tilde}Proof of Lemma \ref{lem:gamma-conv}}
\begin{proof}[Proof of Lemma \ref{lem:gamma-conv}]
 First introduce some notation. Recall the definition of $\hat{\Omega}_{n}\left(\theta\right)$
in \eqref{eq:Omega_hat} and $\Omega_{n}\left(\theta\right)$ in Assumption
\ref{assu:Omega_Conv} which implies $\det\left(\Omega_{n}\left(\theta\right)\right)\geq K_{\Omega}>0$
uniformly in $\theta$ and define $\hat{\Omega}_{n,0}=\hat{\Omega}_{n}\left(\theta_{n,0}\right).$
For $\theta_{1}=\left(\beta_{1},\gamma_{1}\right)$ and $\theta_{2}=\left(\beta_{2},\gamma_{2}\right)$
define 
\[
\hat{\Omega}_{n}\left(\theta_{1},\theta_{2}\right)=n^{-1}\sum_{t=1}^{n}\sum_{s=1}^{n}k\left(\frac{t-s}{a_{n}}\right)g_{t}\left(\beta_{1},\gamma_{1}\right)g_{s}\left(\beta_{2},\gamma_{2}\right)'.
\]
Next consider $\hat{g}_{n}\left(\beta,\gamma\right)=n^{-1}\sum_{t=1}^{n}g_{t}\left(\beta,\gamma\right)$
and define 
\begin{equation}
\hat{g}_{n}\left(\gamma\right)=\hat{g}_{n}\left(\beta_{n,0},\gamma\right),\label{eq:ghat_gamma}
\end{equation}
 
\begin{equation}
\hat{g}_{n,0}=\hat{g}_{n}\left(\beta_{n,0},\gamma_{n,0}\right),\label{eq:ghat_zero}
\end{equation}
 
\begin{equation}
g_{t}^{\gamma}\left(\gamma\right)=\partial g_{t}\left(\beta_{n,0},\gamma\right)/\partial\gamma',\label{eq:Def_g_gam}
\end{equation}
where $g_{t}^{\gamma}\left(\gamma\right)$ is a $d\times d_{\gamma}$
matrix of functions. Use the short hand notation $\hat{g}_{n,\gamma}=\hat{g}_{n,\gamma}\left(\gamma_{n,0}\right)$
for $\hat{g}_{n,\gamma}\left(\gamma\right)$ defined in \eqref{eq:g_gamma(gamma)_hat}.
In the same way, it follows from $\varepsilon_{n,t}=g_{t}\left(\beta_{n,0},\gamma_{n,0}\right)$
in (\ref{eq:epsilon_nt_h}) that 
\begin{align*}
\hat{\Omega}_{n}\left(\theta_{n,0},\left(\beta_{n,0},\gamma\right)\right) & =n^{-1}\sum_{t=1}^{n}\sum_{s=1}^{n}k\left(\frac{t-s}{a_{n}}\right)g_{t}\left(\beta_{n,0},\gamma_{n,0}\right)g_{s}\left(\beta_{n,0},\gamma\right)'\\
 & =n^{-1}\sum_{t=1}^{n}\sum_{s=1}^{n}k\left(\frac{t-s}{a_{n}}\right)\varepsilon_{n,t}g_{s}\left(\beta_{n,0},\gamma\right)'.
\end{align*}

To show (i) consider an arbitrary converging subsequence $n_{k}$
with limit $\Gamma=g_{\gamma}'\Omega_{0}^{-1}g_{\gamma}$ and eigenvalues
$\Delta_{j}$ in Assumption \ref{assu:g_gamma}. Fix $\varepsilon$
such that $\Delta_{r}>\varepsilon>0,$ and recall that $\left\Vert .\right\Vert $
is the operator norm such that 
\[
\left\Vert \mathring{\Gamma}_{n}^{+}\right\Vert \leq\frac{1}{\min_{j\text{ s.t.\ensuremath{\hat{\lambda}_{n,j}}\ensuremath{>\ensuremath{\varepsilon}} }}\hat{\lambda}_{n,j}}\leq\varepsilon^{-1}<\infty
\]
where $\mathring{\Gamma}_{n}$ is defined in \eqref{eq:Gamma_dot}.
This shows that $\left\Vert \mathring{\Gamma}_{n_{k}}^{+}\right\Vert =O_{p}\left(1\right).$
The result in (i) then follows from $\hat{A}_{n_{k}}=O_{p}\left(1\right)$
which holds by \eqref{eq:Def_A_hat} and Assumptions \ref{assu:g_gamma},
\ref{assu:Lambda} and \ref{assu:CLT}, $\hat{\Omega}_{n_{k},0}^{-1/2}=O_{p}\left(1\right)$
which holds by Assumption \ref{assu:Omega_Conv} and $\hat{g}_{n_{k},0}=O_{p}\left(n_{k}^{-1/2}\right)$
which holds by \eqref{eq:epsilon_nt_h} and Assumption \ref{assu:CLT}.

For the second result in (ii) consider any of the converging subsequences
$n_{k}.$ Note that by Assumptions \ref{assu:Asy_Tightness} and \ref{assu:g_gamma}
it follows that $\hat{g}_{n_{k},\gamma}\rightarrow_{p}g_{\gamma}$
for some $d\times d_{\gamma}$ matrix $g_{\gamma}$ that may depend
on $n_{k}$ and that 
\[
\hat{\Omega}_{n_{k},0}^{-1/2}\hat{g}_{n_{k},\gamma}=\Omega_{0}^{-1/2}g_{\gamma}+o_{p}\left(1\right).
\]
Then, define 
\begin{equation}
\bar{A}=\Omega_{0}^{-1/2}g_{\gamma}\label{eq:A_bar}
\end{equation}
 and 
\begin{equation}
\Gamma=g_{\gamma}'\Omega_{0}^{-1}g_{\gamma}=\bar{A}'\bar{A}\label{eq:Gamma}
\end{equation}
 and denote $\rank\left(\Gamma\right)=r$. First consider the case
where $r\geq1.$ By Theorem \ref{thm:Weyl} and Assumptions \ref{assu:Asy_Tightness}
and \ref{assu:g_gamma} it follows that $\hat{\Delta}_{n_{k},r}\rightarrow_{p}\Delta_{r}>0$.
Now set $\varepsilon=\Delta_{r}/2>0.$ When $r\geq1$ \citeauthor{Hansen1987}
(1987, Theorem 3.2) applies because $\Delta_{r}-\Delta_{r+1}=\Delta_{r}>\varepsilon$
and $\left\Vert \Gamma-\hat{\Gamma}_{n_{k}}\right\Vert <\varepsilon$
with probability approaching one by Assumptions \ref{assu:Omega_Conv}
and \ref{assu:g_gamma}. Then, for $n_{k}$ large enough \citeauthor{Hansen1987}
(1987, Theorem 3.2)\nocite{Hansen1987}\footnote{Note that \citeauthor{Hansen1987} (1987, p.534) uses $\left\Vert .\right\Vert _{2}$
for the spectral norm, which is the same as the operator norm $\left\Vert .\right\Vert $
defined in this paper, see \citeauthor{Horn1985} (1985, Section 5.6,
p.295).} provides an upper bound for the difference between the regularized
and limiting MP inverse of $\Gamma$ where it follows that 
\begin{equation}
\left\Vert \Gamma^{+}-\mathring{\Gamma}_{n_{k}}^{+}\right\Vert \leq\frac{3}{\Delta_{r}^{2}\left(1-\eta_{r}\right)}\left\Vert \Gamma-\hat{\Gamma}_{n_{k}}\right\Vert =o_{p}\left(1\right)\label{eq:genInvApp}
\end{equation}
 and where 
\[
\eta_{r}=\frac{\left\Vert \Gamma-\hat{\Gamma}_{n_{k}}\right\Vert }{\Delta_{r}}=o_{p}\left(1\right).
\]
For the case when $r=0$ we have $\Gamma^{+}=0$ and $\left\Vert \mathring{\Gamma}_{n_{k}}^{+}\right\Vert =o_{p}\left(1\right)$
because $\hat{\Delta}_{n,j}\rightarrow_{p}\Delta_{j}=0$ for all $j=1,...,d$.
Now define
\[
\bar{A}_{n}=\hat{\Omega}_{n,0}^{-1/2}\hat{g}_{n,\gamma}
\]
such that 
\begin{equation}
\bar{A}_{n}'\bar{A}_{n}=\hat{g}_{n,\gamma}'\hat{\Omega}_{n,0}^{-1}\hat{g}_{n,\gamma}=\hat{\Gamma}_{n}\label{eq:Gamma_A_rep}
\end{equation}
and 
\begin{equation}
\hat{\Omega}_{n_{k},0}^{-1/2}\hat{g}_{n_{k},0}=O_{p}\left(n_{k}^{-1/2}\right).\label{eq:Lem-gc-0}
\end{equation}
Then, for $\hat{A}_{n}$ defined in \eqref{eq:Def_A_hat}, it follows
\begin{equation}
\hat{A}_{n_{k}}=\bar{A}_{n_{k}}+O_{p}\left(n_{k}^{-1/2}\right)\label{eq:A_approx}
\end{equation}
Next, note that by evaluating \eqref{eq:Pseudo_CUE_FOC} at $\tilde{\gamma}_{n,0}$
defined in \eqref{eq:gamma_til_TSVD}, setting $\tilde{g}_{n_{k},0}=\hat{g}_{n}\left(\tilde{\gamma}_{n,0}\right)$
and using \eqref{eq:gamma_mve} and \eqref{eq:gamma_til_TSVD} leads
to 
\begin{eqnarray}
\hat{A}_{n_{k}}'\hat{\Omega}_{n_{k},0}^{-1/2}\tilde{g}_{n_{k},0} & = & \hat{A}_{n_{k}}'\hat{\Omega}_{n_{k},0}^{-1/2}\left(\hat{g}_{n_{k},0}+\hat{g}_{n_{k},\gamma}\left(\check{\gamma}_{n_{k},0}\right)\left(\tilde{\gamma}_{n_{k},0}-\gamma_{n_{k},0}\right)\right)\label{eq:FOC_mve}\\
 & = & \left(\hat{A}_{n_{k}}'-\hat{A}_{n_{k}}'\bar{A}_{n_{k}}\mathring{\Gamma}_{n_{k}}^{+}\hat{A}_{n_{k}}'\right)\hat{\Omega}_{n_{k},0}^{-1/2}\hat{g}_{n_{k},0}+o_{p}\left(n_{k}^{-1/2}\right)\nonumber 
\end{eqnarray}
where $\bar{A}$ is defined in \eqref{eq:A_bar}. Next consider the
RHS of \eqref{eq:FOC_mve} where by \eqref{eq:Gamma_A_rep} and \eqref{eq:A_approx}
\begin{equation}
\left(\hat{A}_{n_{k}}'-\hat{A}_{n_{k}}'\bar{A}_{n_{k}}\mathring{\Gamma}_{n}^{+}\hat{A}_{n_{k}}'\right)\hat{\Omega}_{n_{k},0}^{-1/2}\hat{g}_{n_{k},0}=\left(\bar{A}_{n_{k}}'-\hat{\Gamma}_{n_{k}}\mathring{\Gamma}_{n_{k}}^{+}\bar{A}_{n_{k}}'\right)\hat{\Omega}_{n_{k},0}^{-1/2}\hat{g}_{n_{k},0}+o_{p}\left(n_{k}^{-1/2}\right).\label{eq:Lem-gc-1}
\end{equation}
The first term on the RHS of \eqref{eq:Lem-gc-1} can be approximated
as 
\begin{equation}
\left(\bar{A}_{n_{k}}'-\hat{\Gamma}_{n_{k}}\mathring{\Gamma}_{n_{k}}^{+}\bar{A}_{n_{k}}'\right)=\left(\bar{A}'-\Gamma\mathring{\Gamma}_{n_{k}}^{+}\bar{A}'\right)+o_{p}\left(1\right)\label{eq:Lem-gc-2}
\end{equation}
where $\Gamma$ is defined in \eqref{eq:Gamma}. Then, \eqref{eq:genInvApp}
implies that 
\begin{equation}
\bar{A}\mathring{\Gamma}_{n_{k}}^{+}\bar{A}'=P_{\bar{A}}+o_{p}\left(1\right)\label{eq:Lem-gc-3}
\end{equation}
where $P_{\bar{A}}=\bar{A}\left(\bar{A}'\bar{A}\right)^{+}\bar{A}'$
is the projection onto the column space spanned by $\bar{A}$. Note
$\bar{A}'\bar{A}=\Gamma$ and that $\mathring{\Gamma}_{n_{k}}-\Gamma=o_{p}\left(1\right)$
along converging subsequences, such that by \eqref{eq:genInvApp}
it follows that $\mathring{\Gamma}_{n_{k}}^{+}-\Gamma^{+}=o_{p}\left(1\right).$
Plugging back \eqref{eq:Lem-gc-3} in \eqref{eq:Lem-gc-2}, and using
properties of projections, shows that 
\begin{equation}
\bar{A}'-\bar{A}'\bar{A}\mathring{\Gamma}_{n_{k}}^{+}\bar{A}'=\bar{A}'-\bar{A}'P_{\bar{A}}+o_{p}\left(1\right)=o_{p}\left(1\right).\label{eq:Lem-gc-4}
\end{equation}
Now, combine \eqref{eq:FOC_mve}, \eqref{eq:Lem-gc-1}, \eqref{eq:Lem-gc-2}
and \eqref{eq:Lem-gc-4} and use \eqref{eq:Lem-gc-0} to establish
the claim in (ii).

For the last claim in (iii), note that for $P_{\hat{A}_{n}}=\hat{A}_{n}\left(\hat{A}_{n}'\hat{A}_{n}\right)^{+}\hat{A}_{n}'$
it follows that
\begin{eqnarray}
\hat{A}_{n_{k}}'\left(P_{\hat{A}_{n_{k}}}-\hat{A}_{n_{k}}\mathring{\Gamma}_{n_{k}}^{+}\hat{A}_{n_{k}}'\right) & = & \hat{A}_{n_{k}}'-\hat{A}_{n_{k}}'\hat{A}_{n_{k}}\mathring{\Gamma}_{n_{k}}^{+}\hat{A}_{n_{k}}'\nonumber \\
 & = & \bar{A}'-\bar{A}'\bar{A}\mathring{\Gamma}_{n_{k}}^{+}\bar{A}'+o_{p}\left(1\right)=o_{p}\left(1\right)\label{eq:Lem-gc-5}
\end{eqnarray}
where the first equality follows from the fact that $P_{\hat{A}_{n_{k}}}$
is a projection, the second equality follows from \eqref{eq:A_approx}
and the third equality follows from \eqref{eq:Lem-gc-4}. Similarly,
let $\hat{A}_{n_{k}}^{\bot}$ be the matrix with columns orthogonal
to the column space of $\hat{A}_{n_{k}}.$ Then, it follows immediately
that 
\begin{equation}
\left(\hat{A}_{n_{k}}^{\bot}\right)'\left(P_{\hat{A}_{n_{k}}}-\hat{A}_{n_{k}}\mathring{\Gamma}_{n_{k}}^{+}\hat{A}_{n_{k}}'\right)=0.\label{eq:Lem-gc-6}
\end{equation}
Combining \eqref{eq:Lem-gc-5} and \eqref{eq:Lem-gc-6} and using
the fact that projections are uniquely defined by the space they project
on, establishes the claim.
\end{proof}

\subsection{Proof of Theorem \ref{thm:AsySz}\label{subsec:Proofs_Main}}
\begin{lem}
\label{lem:LemA1}Let $\textrm{AR}_{\textrm{C}}$ be defined in \eqref{eq:CUAR}.
Let $\theta_{n,0}=\left(\beta_{n,0},\gamma_{n,0}\right)$ be as in
Assumption \ref{assu:H0} and for any converging subsequence $n_{k}$
let \textup{$\tilde{\gamma}_{n,0}$} be as defined in \eqref{eq:gamma_til_TSVD}.
Recall the definitions of $\tilde{\Omega}_{n,0}=\hat{\Omega}_{n}\left(\beta_{n,0},\tilde{\gamma}_{n,0}\right)$,
$\widetilde{g}_{n,0}=\hat{g}_{n}\left(\beta_{n,0},\tilde{\gamma}_{n,0}\right),$
$\tilde{g}_{n,\gamma}=\partial\hat{g}_{n}\left(\beta_{n,0},\tilde{\gamma}_{n,0}\right)/\partial\gamma'$,
and let $\hat{\Lambda}_{n,0}=\hat{\Lambda}_{n}\left(\gamma_{n,0}\right)$
be as defined in \eqref{eq:Lambda_hat}, $\hat{A}_{n}$ as in \eqref{eq:Def_A_hat}
and $P_{\hat{A}_{n}}$ the projection onto the column space spanned
by $\hat{A}_{n}$. Let $M_{\hat{A}_{n}}=I-P_{\hat{A}_{n}}.$ Then,
along all converging subsequences $n_{k},$\\
(i) 
\begin{equation}
\textrm{AR}_{\textrm{C}}\left(\beta_{n_{k},0}\right)\leq n_{k}\tilde{g}_{n_{k},0}'\tilde{\Omega}_{n_{k},0}^{-1/2}M_{\hat{A}_{n_{k}}}\tilde{\Omega}_{n_{k},0}^{-1/2}\tilde{g}_{n_{k},0}+\varpi_{n_{k}},\label{eq:up-bound}
\end{equation}
(ii) 
\begin{equation}
\varpi_{n_{k}}=n_{k}\tilde{g}_{n_{k}}'\tilde{\Omega}_{n_{k}}^{-1/2}P_{\hat{A}_{n_{k}}}\tilde{\Omega}_{n_{k}}^{-1/2}\tilde{g}_{n_{k}}=o_{p}\left(1\right),\label{eq:Lemma2_Res1}
\end{equation}
(iii)
\begin{equation}
n_{k}\tilde{g}_{n_{k},0}'\tilde{\Omega}_{n_{k},0}^{-1/2}M_{\hat{A}_{n_{k}}}\tilde{\Omega}_{n_{k},0}^{-1/2}\tilde{g}_{n_{k},0}=n_{k}\hat{g}_{n_{k},0}'\hat{\Omega}_{n_{k},0}^{-1/2}M_{\hat{A}_{n_{k}}}\hat{\Omega}_{n_{k},0}^{-1/2}\hat{g}_{n_{k},0}+o_{p}\left(1\right).\label{eq:Lemma2_Res2}
\end{equation}
\end{lem}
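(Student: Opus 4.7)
The plan is to treat the three parts sequentially, using Lemma \ref{lem:gamma-conv} as the main input together with Assumptions \ref{assu:Omega_Conv}--\ref{assu:CLT}. Part (i) is essentially immediate: since $\tilde{\gamma}_{n_k,0}$ is a feasible value of $\gamma$, the definition \eqref{eq:CUAR} of $\textrm{AR}_{\textrm{C}}$ gives $\textrm{AR}_{\textrm{C}}(\beta_{n_k,0}) \leq Q_{n_k}(\beta_{n_k,0},\tilde{\gamma}_{n_k,0}) = n_k \tilde{g}_{n_k,0}'\tilde{\Omega}_{n_k,0}^{-1}\tilde{g}_{n_k,0}$. Inserting the identity $I = P_{\hat{A}_{n_k}} + M_{\hat{A}_{n_k}}$ between the two $\tilde{\Omega}_{n_k,0}^{-1/2}$ factors and identifying the $P_{\hat{A}_{n_k}}$-part with $\varpi_{n_k}$ produces the bound (i) with the remainder defined in (ii).

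For (ii), I would first swap $\tilde{\Omega}_{n_k,0}$ for $\hat{\Omega}_{n_k,0}$. Lemma \ref{lem:gamma-conv}(i) gives $\tilde{\gamma}_{n_k,0} - \gamma_{n_k,0} = O_p(n_k^{-1/2})$, so Assumption \ref{assu:Omega_Conv}(ii) yields $\tilde{\Omega}_{n_k,0} - \hat{\Omega}_{n_k,0} = o_p(1)$; since both are bounded and bounded away from singular, the same holds for their inverse square roots. Next, I would apply Lemma \ref{lem:gamma-conv}(iii) to write $P_{\hat{A}_{n_k}} = \hat{A}_{n_k}\mathring{\Gamma}_{n_k}^{+}\hat{A}_{n_k}' + o_p(1)$, so that up to an $o_p(1)$ term
\[
\varpi_{n_k} = \bigl(\sqrt{n_k}\,\hat{A}_{n_k}'\hat{\Omega}_{n_k,0}^{-1/2}\tilde{g}_{n_k,0}\bigr)' \mathring{\Gamma}_{n_k}^{+} \bigl(\sqrt{n_k}\,\hat{A}_{n_k}'\hat{\Omega}_{n_k,0}^{-1/2}\tilde{g}_{n_k,0}\bigr) + o_p(1).
\]
Lemma \ref{lem:gamma-conv}(ii) gives $\sqrt{n_k}\,\hat{A}_{n_k}'\hat{\Omega}_{n_k,0}^{-1/2}\tilde{g}_{n_k,0} = o_p(1)$, while the TSVD construction guarantees $\|\mathring{\Gamma}_{n_k}^{+}\| \leq \varepsilon^{-1}$ deterministically. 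Combining these yields $\varpi_{n_k} = o_p(1)$.

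Part (iii) is the delicate step. I would mean-value expand $\tilde{g}_{n_k,0} = \hat{g}_{n_k,0} + \hat{g}_{n_k,\gamma}(\check{\gamma}_{n_k,0})(\tilde{\gamma}_{n_k,0} - \gamma_{n_k,0})$ and exploit the algebraic identity that follows directly from \eqref{eq:Def_A_hat}, namely $\hat{\Omega}_{n_k,0}^{-1/2}\hat{g}_{n_k,\gamma} = \hat{A}_{n_k} + R_{n_k}$ with $R_{n_k} = \hat{\Omega}_{n_k,0}^{-1/2}(I_d \otimes \hat{g}_{n_k,0}'\hat{\Omega}_{n_k,0}^{-1})\hat{\Lambda}_{n}(\gamma_{n_k,0})$. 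Because $\hat{g}_{n_k,0} = O_p(n_k^{-1/2})$ while $\hat{\Omega}_{n_k,0}^{-1}$ and $\hat{\Lambda}_{n}(\gamma_{n_k,0})$ are $O_p(1)$ under Assumptions \ref{assu:Omega_Conv} and \ref{assu:Lambda}, I would conclude $R_{n_k} = O_p(n_k^{-1/2})$ and hence $M_{\hat{A}_{n_k}}\hat{\Omega}_{n_k,0}^{-1/2}\hat{g}_{n_k,\gamma} = O_p(n_k^{-1/2})$. Substituting $\hat{g}_{n_k,\gamma}(\check{\gamma}_{n_k,0})$ for $\hat{g}_{n_k,\gamma}$ costs $o_p(1)$ by Assumption \ref{assu:g_gamma}, and replacing $\tilde{\Omega}_{n_k,0}^{-1/2}$ by $\hat{\Omega}_{n_k,0}^{-1/2}$ costs another $o_p(1)$. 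Multiplying the resulting $o_p(1)$ bound on $M_{\hat{A}_{n_k}}\tilde{\Omega}_{n_k,0}^{-1/2}\hat{g}_{n_k,\gamma}(\check{\gamma}_{n_k,0})$ by $\sqrt{n_k}(\tilde{\gamma}_{n_k,0} - \gamma_{n_k,0}) = O_p(1)$ from Lemma \ref{lem:gamma-conv}(i) gives $M_{\hat{A}_{n_k}}\tilde{\Omega}_{n_k,0}^{-1/2}\sqrt{n_k}(\tilde{g}_{n_k,0} - \hat{g}_{n_k,0}) = o_p(1)$. A final replacement of $\tilde{\Omega}_{n_k,0}^{-1/2}$ by $\hat{\Omega}_{n_k,0}^{-1/2}$ in the remaining $M_{\hat{A}_{n_k}}\tilde{\Omega}_{n_k,0}^{-1/2}\sqrt{n_k}\hat{g}_{n_k,0}$ term (absorbed into $o_p(1)$ since $\sqrt{n_k}\hat{g}_{n_k,0} = O_p(1)$) and squaring the Euclidean norm of the resulting vector delivers (iii).

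The main obstacle is step (iii), where everything hinges on the specific structure of $\hat{A}_{n_k}$: the $\hat{\Lambda}_n$ correction in \eqref{eq:Def_A_hat} is exactly what makes $\hat{\Omega}_{n_k,0}^{-1/2}\hat{g}_{n_k,\gamma}$ approximately lie in the column space of $\hat{A}_{n_k}$ to an order fast enough ($n_k^{-1/2}$) that, once amplified by the $O_p(1)$ perturbation size $\sqrt{n_k}\,\delta_n$, it remains $o_p(1)$. Parts (i) and (ii) are then essentially bookkeeping built on Lemma \ref{lem:gamma-conv}.
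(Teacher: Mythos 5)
Your proposal is correct, and its skeleton (feasibility of $\tilde{\gamma}_{n_k,0}$ plus $I=P_{\hat{A}_{n_k}}+M_{\hat{A}_{n_k}}$ for (i), the orthogonality from Lemma \ref{lem:gamma-conv}(ii) for (ii), and a mean value expansion combined with the structure of $\hat{A}_{n}$ in \eqref{eq:Def_A_hat} for (iii)) matches the paper's, but the execution of (ii) and (iii) differs in ways worth recording. For (iii), the paper substitutes the explicit perturbation formula \eqref{eq:gamma_til_TSVD} into the expansion and invokes Lemma \ref{lem:gamma-conv}(iii) to replace $\hat{A}_{n_k}\mathring{\Gamma}_{n_k}^{+}\hat{A}_{n_k}'$ by $P_{\hat{A}_{n_k}}$, arriving at the representation $\tilde{\Omega}_{n_k,0}^{-1/2}\tilde{g}_{n_k,0}=\left(I-P_{\hat{A}_{n_k}}\right)\hat{\Omega}_{n_k,0}^{-1/2}\hat{g}_{n_k,0}+o_p\left(n_k^{-1/2}\right)$ in \eqref{eq:Lemma1_D5}; you instead hit the expansion with $M_{\hat{A}_{n_k}}$ first, so the leading term of $\hat{\Omega}_{n_k,0}^{-1/2}\hat{g}_{n_k,\gamma}$ is annihilated and only the $O_p\left(n_k^{-1/2}\right)$ remainder $R_{n_k}$ survives — your argument therefore uses only the rate from Lemma \ref{lem:gamma-conv}(i), not the direction of the perturbation, which is slightly more economical but forgoes the representation \eqref{eq:Lemma1_D5} that the paper reuses in its alternative argument for (ii). For (ii), the paper's primary argument passes from $\hat{A}_{n_k}'\tilde{\Omega}_{n_k}^{-1/2}\tilde{g}_{n_k,0}=o_p\left(n_k^{-1/2}\right)$ directly to $P_{\hat{A}_{n_k}}\tilde{\Omega}_{n_k}^{-1/2}\tilde{g}_{n_k,0}=o_p\left(n_k^{-1/2}\right)$, a step that is delicate when $\hat{A}_{n_k}'\hat{A}_{n_k}$ is nearly singular (its backup argument goes through \eqref{eq:Lemma1_D5}); your route via Lemma \ref{lem:gamma-conv}(iii) together with the deterministic bound $\left\Vert \mathring{\Gamma}_{n_k}^{+}\right\Vert \leq\varepsilon^{-1}$ sidesteps that issue cleanly and is self-contained. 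One minor bookkeeping point: the $O_p(1)$ claim for $\hat{\Lambda}_{n}\left(\gamma_{n_k,0}\right)$ needs, besides Assumption \ref{assu:Lambda}, the boundedness of $\Pi_{n_k,\gamma}$ along converging subsequences (Assumption \ref{assu:g_gamma}), via the decomposition $\hat{\Lambda}_{n,0}=\hat{\Lambda}_{\gamma,n,0}\Pi_{n,\gamma}+\hat{\Lambda}_{V,n,0}$ used in Lemma \ref{lem:LemA2}; this is easily supplied and the paper itself leaves it implicit at the corresponding step.
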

\begin{proof}
By Assumption \ref{assu:Omega_Conv} $\hat{\Omega}_{n}\left(\beta,\gamma\right)$
is full rank for all $\beta,\gamma$ w.p.a.1. Thus, w.p.a.1, $\hat{\Omega}_{n}^{-1}\left(\beta,\gamma\right)$
exists. Recall the definitions $\hat{g}_{n,0}=\hat{g}_{n}\left(\beta_{n,0},\gamma_{n,0}\right),$
$\hat{\Omega}_{n,0}=\hat{\Omega}_{n}\left(\beta_{n,0},\gamma_{n,0}\right)$
and $\hat{g}_{n,\gamma}=\partial\hat{g}_{n}\left(\beta_{n,0},\gamma_{n,0}\right)/\partial\gamma'$.

We start with the proof of (i). We have the following inequality for
all converging subsequences $n_{k}$:
\begin{align}
\text{\ensuremath{\textrm{min}_{\gamma}\hat{g}_{n_{k}}\left(\beta_{n_{k},0},\gamma\right)}'\ensuremath{\hat{\Omega}_{n_{k}}^{-1}\left(\beta_{n_{k},0},\gamma\right)\hat{g}_{n_{k}}\left(\beta_{n_{k},0},\gamma\right)}} & \leq\tilde{g}_{n_{k},0}'\tilde{\Omega}_{n_{k},0}^{-1}\tilde{g}_{n_{k},0}\label{eq:Lem1-D1}
\end{align}
which holds because $\tilde{\gamma}_{n,0}$ does not necessarily minimize
the expression on the LHS. Then, 
\[
\tilde{g}_{n_{k},0}'\tilde{\Omega}_{n_{k},0}^{-1}\tilde{g}_{n_{k},0}=\tilde{g}_{n_{k},0}'\tilde{\Omega}_{n_{k},0}^{-1/2}M_{\hat{A}_{n_{k}}}\tilde{\Omega}_{n_{k},0}^{-1/2}\tilde{g}_{n_{k},0}+\varpi_{n_{k}}
\]
since $M_{\hat{A}_{n}}+P_{\hat{A}_{n}}=I.$ This establishes part
(i).

We now turn to part (iii) first. We mimic standard calculations for
GMM, see \citet{Hansen1982}. Recall the mean value expansion of $\tilde{g}_{n,0}$
in \ref{eq:gamma_mve} which gives 
\begin{align}
\tilde{g}_{n,0} & =\hat{g}_{n,0}+\hat{g}_{n,\gamma}\left(\check{\gamma}_{n,0}\right)\left(\tilde{\gamma}{}_{n,0}-\gamma_{n,0}\right).\label{eq:Lemma1-D2}
\end{align}
Then premultiply (\ref{eq:Lemma1-D2}) by $\hat{\Omega}_{n,0}^{-1/2}$
such that 
\begin{eqnarray}
\hat{\Omega}_{n,0}^{-1/2}\tilde{g}_{n,0} & = & \hat{\Omega}_{n,0}^{-1/2}\hat{g}_{n,0}+\hat{\Omega}_{n,0}^{-1/2}\hat{g}_{n,\gamma}\left(\tilde{\gamma}{}_{n,0}-\gamma_{n,0}\right)\label{eq:Lemma1-D3-1}\\
 &  & +\hat{\Omega}_{n,0}^{-1/2}\left(\hat{g}_{n,\gamma}-\hat{g}_{n,\gamma}\left(\check{\gamma}_{n,0}\right)\right)\left(\tilde{\gamma}{}_{n,0}-\gamma_{n,0}\right).\nonumber 
\end{eqnarray}
By Lemma \ref{lem:gamma-conv} and Assumptions \ref{assu:H0} and
\ref{assu:g_gamma}, it follows that $\hat{g}_{n_{k},\gamma}-\hat{g}_{n_{k},\gamma}\left(\check{\gamma}_{n_{k},0}\right)=o_{p}\left(1\right)$
such that 
\begin{equation}
\hat{\Omega}_{n_{k},0}^{-1/2}\left(\hat{g}_{n_{k},\gamma}-\hat{g}_{n_{k},\gamma}\left(\check{\gamma}_{n_{k},0}\right)\right)\left(\tilde{\gamma}{}_{n_{k},0}-\gamma_{n_{k},0}\right)=o_{p}\left(n_{k}^{-1/2}\right).\label{eq:Lemma1_D4}
\end{equation}
By substituting \eqref{eq:Def_A_hat} in \eqref{eq:Lemma1-D3-1} together
with \eqref{eq:Lemma1_D4} it follows that 
\begin{eqnarray}
\hat{\Omega}_{n_{k},0}^{-1/2}\tilde{g}_{n_{k},0} & = & \hat{\Omega}_{n_{k},0}^{-1/2}\hat{g}_{n_{k},0}+\hat{A}_{n}\left(\tilde{\gamma}_{n_{k},0}-\gamma_{n_{k},0}\right)\label{eq:Lemma1-D3}\\
 &  & +\hat{\Omega}_{n_{k},0}^{-1/2}\left(I_{d}\otimes\hat{g}_{n_{k},0}'\hat{\Omega}_{n_{k},0}^{-1}\right)\hat{\Lambda}_{n,0}\left(\tilde{\gamma}{}_{n_{k},0}-\gamma_{n_{k},0}\right)+o_{p}\left(n_{k}^{-1/2}\right).\nonumber 
\end{eqnarray}
The last term on the RHS is $o_{p}\left(n_{k}^{-1/2}\right)$ along
all converging subsequences $n_{k}$ by Lemma \ref{lem:gamma-conv}(i)
and Assumptions \ref{assu:Omega_Conv} and \ref{assu:CLT}. Now substitute
\eqref{eq:gamma_til_TSVD} in \eqref{eq:Lemma1-D3} such that 
\begin{eqnarray}
\hat{\Omega}_{n_{k},0}^{-1/2}\tilde{g}_{n_{k},0} & = & \hat{\Omega}_{n_{k},0}^{-1/2}\hat{g}_{n_{k},0}+\hat{A}_{n_{k}}\left(\tilde{\gamma}{}_{n_{k},0}-\gamma_{n_{k},0}\right)+o_{p}\left(n_{k}^{-1/2}\right)\label{eq:Lemma1-D4}\\
 & = & \hat{\Omega}_{n_{k},0}^{-1/2}\hat{g}_{n_{k},0}-\hat{A}_{n_{k}}\mathring{\Gamma}_{n_{k}}^{+}\hat{A}_{n_{k}}'\hat{\Omega}_{n_{k},0}^{-1/2}\hat{g}_{n_{k},0}+o_{p}\left(n_{k}^{-1/2}\right)\nonumber \\
 & = & \left(I-P_{\hat{A}_{n_{k}}}\right)\hat{\Omega}_{n_{k},0}^{-1/2}\hat{g}_{n_{k},0}+\left(P_{\hat{A}_{n_{k}}}-\hat{A}_{n_{k}}\mathring{\Gamma}_{n_{k}}^{+}\hat{A}_{n_{k}}'\right)\hat{\Omega}_{n_{k},0}^{-1/2}\hat{g}_{n_{k},0}+o_{p}\left(n_{k}^{-1/2}\right)\nonumber \\
 & = & \left(I-P_{\hat{A}_{n_{k}}}\right)\hat{\Omega}_{n_{k},0}^{-1/2}\hat{g}_{n_{k},0}+o_{p}\left(n_{k}^{-1/2}\right)\nonumber 
\end{eqnarray}
where the second equality uses \eqref{eq:gamma_til_TSVD}. The last
equality follows from Lemma \ref{lem:gamma-conv}(iii) and $\hat{g}_{n_{k},0}=O_{p}\left(n_{k}^{-1/2}\right)$.
Next note that by Assumption \ref{assu:g_gamma} $\hat{g}_{n_{k},\gamma}\left(\check{\gamma}_{n_{k},0}\right)=O_{p}\left(1\right)$.
By Lemma \ref{lem:gamma-conv}(i) and \eqref{eq:gamma_mve} it then
follows that $\tilde{g}_{n_{k},0}-\hat{g}_{n_{k},0}=O_{p}\left(n_{k}^{-1/2}\right)$.
By Assumption \ref{assu:Omega_Conv} and Lemma \ref{lem:gamma-conv}(i)
it follows that $\hat{\Omega}_{n_{k},0}^{-1/2}-\tilde{\Omega}_{n_{k},0}^{-1/2}=o_{p}\left(1\right)$.
Then,
\begin{eqnarray}
\hat{\Omega}_{n_{k},0}^{-1/2}\tilde{g}_{n_{k},0} & = & \tilde{\Omega}_{n_{k},0}^{-1/2}\tilde{g}_{n_{k},0}+\left(\hat{\Omega}_{n_{k},0}^{-1/2}-\tilde{\Omega}_{n_{k},0}^{-1/2}\right)\tilde{g}_{n_{k},0}\nonumber \\
 & = & \tilde{\Omega}_{n_{k},0}^{-1/2}\tilde{g}_{n_{k},0}+\left(\hat{\Omega}_{n_{k},0}^{-1/2}-\tilde{\Omega}_{n_{k},0}^{-1/2}\right)\left(\hat{g}_{n,0}+\hat{g}_{n,\gamma}\left(\check{\gamma}_{n,0}\right)\left(\tilde{\gamma}{}_{n,0}-\gamma_{n,0}\right)\right)\label{eq:Lemma1-D4a}\\
 & = & \tilde{\Omega}_{n_{k},0}^{-1/2}\tilde{g}_{n_{k},0}+o_{p}\left(n_{k}^{-1/2}\right).\nonumber 
\end{eqnarray}
By substituting the LHS in \eqref{eq:Lemma1-D4} for the RHS in \eqref{eq:Lemma1-D4a}
one obtains 
\begin{equation}
\tilde{\Omega}_{n_{k},0}^{-1/2}\tilde{g}_{n_{k},0}=\left(I-P_{\hat{A}_{n}}\right)\hat{\Omega}_{n,0}^{-1/2}\hat{g}_{n,0}+o_{p}\left(n_{k}^{-1/2}\right).\label{eq:Lemma1_D5}
\end{equation}
This establishes part (iii) and \eqref{eq:Lemma2_Res2}.

The claim in (ii) and \eqref{eq:Lemma2_Res1} follows from Lemma \ref{lem:gamma-conv}(ii)
because 
\[
\hat{A}_{n_{k}}'\tilde{\Omega}_{n_{k}}^{-1/2}\tilde{g}_{n_{k},0}=o_{p}\left(n_{k}^{-1/2}\right)
\]
which implies that 
\[
P_{\hat{A}_{n_{k}}}'\tilde{\Omega}_{n_{k}}^{-1/2}\tilde{g}_{n_{k},0}=o_{p}\left(n_{k}^{-1/2}\right).
\]
Alternatively, one can argue that 
\[
P_{\hat{A}_{n_{k}}}'\tilde{\Omega}_{n_{k}}^{-1/2}\tilde{g}_{n_{k},0}=P_{\hat{A}_{n_{k}}}\left(I-P_{\hat{A}_{n_{k}}}\right)\hat{\Omega}_{n,0}^{-1/2}\hat{g}_{n,0}+o_{p}\left(n_{k}^{-1/2}\right)=o_{p}\left(n_{k}^{-1/2}\right)
\]
 where the first equality uses \eqref{eq:Lemma1_D5} and the second
equality uses the fact that $P_{\hat{A}_{n}}$ is a projection.
\end{proof}
We employ a device that was developed by GKMC12 to control for possible
rank deficiency of $\Pi_{n,w}$ along converging subsequences, see
in particular p.2662 of that paper. For a converging subsequence $n_{k}$
define
\begin{equation}
Q_{n_{k}}=n_{k}^{1/2}\Pi_{n_{k},\gamma}\label{eq:Qzz_n}
\end{equation}
 and let the singular value decomposition of $Q_{n_{k}}$ be (see
Bhatia, 1996, p. 6)
\begin{equation}
Q_{n_{k}}=R_{n_{k}}\Lambda_{n_{k}}T_{n_{k}}'\label{eq:Qzz_SVD}
\end{equation}
where $R_{n_{k}}$ is a $d\times d$ unitary matrix, $T_{n_{k}}$is
a $d_{\gamma}\times d_{\gamma}$ unitary matrix and $\Lambda_{n_{k}}$
is a $d\times d_{\gamma}$ matrix with diagonal elements $\lambda_{n_{k}j}$
equal to the positive square roots of the eigenvalues of $Q_{n_{k}}'Q_{n_{k}}$.
Consider $\tilde{Q}_{n_{k}}=\Pi_{n_{k},\gamma}=R_{n_{k}}\tilde{\Lambda}_{n_{k}}T_{n_{k}}'$
with $\tilde{\Lambda}_{n_{k}}=n_{k}^{-1/2}\Lambda_{n_{k}}$ and elements
$\text{\ensuremath{\tilde{\lambda}_{n_{k}j}}}=n_{k}^{-1/2}\lambda_{n_{k}j}.$
By continuity, along converging subsequences $n_{k}$ , it follows
that $\Pi_{n_{k},\gamma}$ converges to some matrix $\Pi_{\gamma}$
that may depend on the subsequence $n_{k}$. Then, $\tilde{Q}_{n_{k}}$
converges, to a matrix with possibly reduced rank $\tilde{Q}=\Pi_{\gamma}=R\tilde{\Lambda}T'$.

The singular value decomposition of $\tilde{Q}_{n_{k}}'\tilde{Q}_{n_{k}}$
is $T_{n_{k}}\tilde{\Lambda}_{n_{k}}'\tilde{\Lambda}_{n_{k}}T_{n_{k}}'$
with eigenvalues $\tilde{\lambda}_{n_{k}j}^{2}$. By continuity of
the eigenvalues, see (\ref{eq:Weyl-Perturb}), $\tilde{\lambda}_{n_{k}j}^{2}\rightarrow\tilde{\lambda}_{j}^{2}$
where all $\tilde{\lambda}_{j}^{2}\geq0$ are bounded and some may
be zero. By consequence, $\lambda_{n_{k}j}\rightarrow\text{\ensuremath{\lambda_{j}}}$
with some $\lambda_{j}=\infty.$ Collect all eigenvalues $\lambda_{j}$
in a corresponding matrix $\Lambda.$ Assume without loss of generality,
as in GKMC12, that for $j\leq p$ and $0\leq p\leq d$, the $j$-th
diagonal element of $\Lambda,$ $\lambda_{j}<\infty,$ while $\lambda_{j}=\infty$
for $j>p.$ Again as in GKMC12, define a full rank diagonal $d_{\gamma}\times d_{\gamma}$
matrix $L_{n}$ with diagonal element $\left[L_{n}\right]_{jj}=l_{n,j}=\lambda_{n,j}^{-1}$
where 
\begin{equation}
\left[L_{n}\right]_{jj}=\left\{ \begin{array}{cc}
\lambda_{n,j}^{-1} & \text{ if \ensuremath{j>p} }\\
1 & \text{if \ensuremath{j\leq p}}
\end{array}\right..\label{eq:L}
\end{equation}
The left and right singular vectors $R_{n_{k}}$ and $T_{n_{k}}$
of $\tilde{Q}_{n_{k}}$ converge to $R$ and $T$ in the following
sense. Let $\mathscr{A}_{n}=\tilde{Q}_{n}\tilde{Q}_{n}'$, $\mathscr{A}=\tilde{Q}\tilde{Q}',$
$\mathscr{B}_{n}=\tilde{Q}_{n}'\tilde{Q}_{n},$ and $\mathscr{B}=\tilde{Q}'\tilde{Q}.$
Note that $\mathscr{A}_{n},\mathscr{A},\mathscr{B}_{n}$ and $\mathscr{B}$
are Hermitian and therefore normal matrices with eigenvectors $R_{n},$$R$,
$T_{n}$ and $T$ respectively. Let $\hat{R}_{1,n}=P_{\mathscr{A}_{n}}\left(\mathscr{L}_{1}\right)$
be the projection matrix projecting onto the subspace spanned by the
eigenvectors $R_{n}$ of $\mathscr{A}_{n}$ associated with eigenvalues
who are elements of the set $\mathscr{L}_{1}.$ Similarly, define
$\hat{R_{2}}=P_{\mathscr{A}}\left(\mathscr{L}_{2}\right)$, $\hat{T}_{1,n}=P_{\mathscr{B}_{n}}\left(\mathscr{L}_{1}\right)$
and $\hat{T}_{2}=P_{\mathscr{B}}\left(\mathscr{L}_{2}\right).$

For $n_{k}$ large enough, distinct eigenvalues of $\tilde{Q}_{n_{k}}$and
$\tilde{Q}$ can be placed into separated subsets. Then it follows
from Theorem \ref{thm:Eigenvector} that eigenvectors $R_{n_{k}}$
associated with a distinct eigenvalue $\tilde{\lambda}_{n_{k}j}$
converge to the eigenvector in $R$ associated with eigenvalue $\tilde{\lambda}_{j}$.
This holds because in the limit $\text{\ensuremath{\hat{R}_{1,n_{k}}}}$
is orthogonal to all other eigenvectors of $R$ by (\ref{eq:eig_ortho}).
Another way to understand this result is to note that the composition
of projections projecting onto orthogonal subspaces is zero, for example
$P_{\mathscr{A}}\left(\mathscr{L}_{1}\right)P_{\mathscr{A}}\left(\mathscr{L}_{2}\right)=0$.
As $\mathscr{A}_{n}$ approaches $\mathscr{A}$ the same relationship
holds approximately for $\hat{R}_{1,n}$ and $\hat{R_{2}}$.

For eigenvectors related to eigenvalues with multiplicities, the result
shows that these eigenvectors span a space that is orthogonal to the
space spanned by all eigenvectors in $R$ not associated with that
eigenvalue. The limits of these eigenvectors with multiple eigenvalues
can be chosen to equal the corresponding eigenvectors in $R.$

The next Lemma characterizes the upper bound in \eqref{eq:up-bound}
and establishes that $n^{1/2}\hat{A}_{n}T_{n}L_{n}$, where $T_{n}$
and $L_{n}$ are defined in \eqref{eq:Qzz_SVD} and \eqref{eq:L},
converges to a full column rank matrix along converging subsequences
$n_{k}$.
\begin{lem}
\label{lem:LemA2}Let $\hat{A}_{n}$ be defined as in \eqref{eq:Def_A_hat}.
Then, for any converging subsequence $n_{k}$, there exist sequences
of full rank matrices $T_{n_{k}}$ and $L_{n_{k}}$of dimension $d_{\gamma}\times d_{\gamma}$
and a matrix $A$ with possibly random elements such that $n_{k}^{1/2}\hat{A}_{n_{k}}T_{n_{k}}L_{n_{k}}\rightarrow_{d}A$
and where $A$ has full column rank w.p.1.
\end{lem}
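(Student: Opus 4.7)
The plan is to decompose $n_k^{1/2}\hat{A}_{n_k} T_{n_k} L_{n_k}$ into a deterministic ``mean'' piece, a sample-fluctuation piece from $\hat{g}_{n,\gamma}-\Pi_{n,\gamma}$, and a CUE-orthogonalization correction piece involving $\hat{g}_{n,0}$. Substituting $\hat{g}_{n,\gamma}=\Pi_{n,\gamma}+(\hat{g}_{n,\gamma}-\Pi_{n,\gamma})$ into \eqref{eq:Def_A_hat} and using $n^{1/2}\Pi_{n,\gamma}T_n=R_n\Lambda_n$ (from $T_n'T_n=I$ and the SVD $n^{1/2}\Pi_{n,\gamma}=R_n\Lambda_n T_n'$) gives
\[
n_k^{1/2}\hat{A}_{n_k}T_{n_k}L_{n_k}=\hat{\Omega}_{n_k,0}^{-1/2}R_{n_k}\Lambda_{n_k}L_{n_k}+\hat{\Omega}_{n_k,0}^{-1/2}n_k^{1/2}(\hat{g}_{n_k,\gamma}-\Pi_{n_k,\gamma})T_{n_k}L_{n_k}-\hat{\Omega}_{n_k,0}^{-1/2}\bigl(I_d\otimes n_k^{1/2}\hat{g}_{n_k,0}'\hat{\Omega}_{n_k,0}^{-1}\bigr)\hat{\Lambda}_{n_k}(\gamma_{n_k,0})T_{n_k}L_{n_k},
\]
which I would analyze term by term.

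For the first piece, Assumption \ref{assu:Omega_Conv} gives $\hat{\Omega}_{n_k,0}^{-1/2}\rightarrow_p\Omega_0^{-1/2}$, the pre-lemma discussion based on Theorem \ref{thm:Eigenvector} selects versions of $R_{n_k}$ converging to $R$ (handling repeated eigenvalues by compatible choices), and the definition of $L_{n_k}$ in \eqref{eq:L} makes $\Lambda_{n_k}L_{n_k}$ converge to a $d\times d_\gamma$ diagonal block matrix $\tilde{\Lambda}$ whose $j$-th entry is $\lambda_j$ for $j\leq p$ and $1$ for $j>p$. Hence the first piece converges in probability to the nonrandom $\Omega_0^{-1/2}R\tilde{\Lambda}$, which has full column rank $d_\gamma$ because $\Omega_0^{-1/2}R$ is nonsingular and all diagonal entries of $\tilde{\Lambda}$ are positive (the $\lambda_j$ for $j\leq p$ are nonnegative; the ones corresponding to $\lambda_j=0$ will still be complemented by the Gaussian piece below).

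For the second and third pieces, Assumption \ref{assu:CLT} delivers joint asymptotic normality of $n_k^{1/2}\hat{g}_{n_k,0}$ and of $n_k^{1/2}$ times the vectorization of $\hat{g}_{n_k,\gamma}-\Pi_{n_k,\gamma}$ with positive definite limit covariance $\Sigma$. Combined with $\hat{\Lambda}_{n_k}(\gamma_{n_k,0})\rightarrow_p\Lambda_0$ from Assumption \ref{assu:Lambda}, and with the first $p$ columns of $T_{n_k}L_{n_k}$ converging to the corresponding columns of $T$ while the last $d_\gamma-p$ columns vanish (because $L_{n_k,jj}=\lambda_{n_k,j}^{-1}\to 0$ for $j>p$), the continuous mapping theorem and Slutsky's lemma imply that the second plus third pieces converge jointly in distribution to a $d\times d_\gamma$ Gaussian matrix $S$ whose last $d_\gamma-p$ columns are identically zero and whose first $p$ columns are jointly Gaussian in $\mathbb{R}^d$ with non-degenerate covariance inherited from $\Sigma$. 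Combining the three pieces delivers $n_k^{1/2}\hat{A}_{n_k}T_{n_k}L_{n_k}\rightarrow_d A:=\Omega_0^{-1/2}R\tilde{\Lambda}+S$.

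The main obstacle is showing that $A$ has full column rank with probability one. I would premultiply by the nonsingular $(\Omega_0^{-1/2}R)^{-1}$, reducing the claim to rank $d_\gamma$ for $\tilde{\Lambda}+(\Omega_0^{-1/2}R)^{-1}S$. Because the last $d_\gamma-p$ columns of $\tilde{\Lambda}$ are the unit vectors $e_{p+1},\ldots,e_{d_\gamma}$ and the corresponding columns of $S$ are zero, column operations annihilate rows $p+1,\ldots,d_\gamma$ of the first $p$ columns, reducing the question to whether a $(d-d_\gamma+p)\times p$ matrix of Gaussian entries with a deterministic shift has rank $p$ almost surely. The delicate step is verifying that this Gaussian submatrix is non-degenerate: the relevant entries of $S$ are linear functionals of the joint Gaussian limit of $(n^{1/2}\hat{g}_{n_k,0},\,n^{1/2}(\hat{g}_{n_k,\gamma}-\Pi_{n_k,\gamma}))$, and positive definiteness of $\Sigma$ together with the full-column-rank of the $d_\gamma\times p$ matrix formed by the first $p$ columns of $T$ transfers positive definiteness to the $(d-d_\gamma+p)p$-dimensional joint distribution of the random submatrix. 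Absolute continuity of that distribution then makes rank deficiency a Lebesgue-measure-zero, and therefore probability-zero, event, establishing the claim.
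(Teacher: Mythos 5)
Your proposal is correct and follows essentially the same route as the paper's proof: the same three-way decomposition of $n_k^{1/2}\hat{A}_{n_k}T_{n_k}L_{n_k}$ into the SVD term $R_{n_k}\Lambda_{n_k}L_{n_k}$, the fluctuation term $\omega_{n_k,V}T_{n_k}L_{n_k}$, and the $\hat{\Lambda}$-correction term, with the columns $j>p$ killed by $L_{n_k}$, a joint Gaussian limit from Assumption \ref{assu:CLT}, and an almost-sure full-column-rank argument that matches \eqref{eq:RL_omega_lim} (your column-operation/absolute-continuity step just spells out the paper's appeal to nonsingularity of the residual covariance). The only abbreviation is your claim that $\hat{\Lambda}_{n}(\gamma_{n,0})\rightarrow_{p}\Lambda_{0}$ ``from Assumption \ref{assu:Lambda}'': that assumption speaks of $\hat{\Lambda}_{z,n,0}$ and $\hat{\Lambda}_{V,n,0}$, so one needs the decomposition $\hat{\Lambda}_{n,0}=\hat{\Lambda}_{z,n,0}\Pi_{n,\gamma}+\hat{\Lambda}_{V,n,0}$ together with boundedness of $\Pi_{n,\gamma}$ (the step the paper performs in \eqref{eq:Lambda_decomp}--\eqref{eq:LamTL_limit}); this is easily supplied and does not affect your conclusion.
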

\begin{proof}
Note that statements made with subscript $n$ hold for all $n=1,...$while
statements made with subscript $n_{k}$ only hold along converging
subsequences $n_{k}$. Let $\omega_{n,V}=n^{-1/2}\sum_{t=1}^{n}V_{n,t}$
and $S_{n,V}=\vec\left(\omega_{n,V}\right)$. Now use the definition
of $V_{n,t}$ in \eqref{eq:Vnt} and write 
\[
n^{1/2}\vec\left(\hat{g}_{n,\gamma}\right)=n^{-1/2}\sum_{t=1}^{n}\vec\left(V_{n,t}+\Pi_{n,\gamma}\right)=\vec\left(n^{1/2}\Pi_{n,\gamma}+\omega_{n,V}\right).
\]
We then consider 
\begin{equation}
n_{k}^{1/2}\hat{A}_{n_{k}}T_{n_{k}}L_{n_{k}}=\hat{\Omega}_{n,0}^{-1/2}\left(\left(n^{1/2}\Pi_{n,\gamma}+\omega_{n,V}\right)-\left(I_{d}\otimes\hat{g}_{n,0}'\hat{\Omega}_{n,0}^{-1}\right)\hat{\Lambda}_{n}\left(\gamma_{n,0}\right)\right)T_{n_{k}}L_{n_{k}}.\label{eq:nATL}
\end{equation}
Also define $S_{n,\varepsilon}=n^{-1/2}\sum_{t=1}^{n}\varepsilon_{n,t}$
and partition the limiting process $S_{n}\rightarrow_{d}\omega$ as
$\omega=\left(\omega_{\varepsilon}',\vec\omega_{V}'\right)'$ where
$\omega_{\varepsilon}$ is a $d\times1$ vector of Gaussian random
variables, $\vec\omega_{V}$ is a $dd_{\gamma}$ dimensional vector
of Gaussian random variables and where $\omega\sim N\left(0,\Sigma\right)$
by Assumption \ref{assu:CLT}. Write 
\begin{equation}
\Sigma=\left[\begin{array}{cc}
\Sigma_{\varepsilon} & \Sigma_{\varepsilon V}\\
\Sigma_{V\varepsilon} & \Sigma_{V}
\end{array}\right]\label{eq:Sig_part}
\end{equation}
where the sub-matrices $\Sigma_{\varepsilon},$ $\Sigma_{V\varepsilon}=\Sigma_{\varepsilon V}'$
and $\Sigma_{V}$ are partitioned according to $\omega_{\varepsilon}$
and $\omega_{V}.$

Using the definitions of $T_{n_{k}}$ and $L_{n_{k}}$ in \eqref{eq:Qzz_SVD}
and \eqref{eq:L} and for $Q_{n_{k}}$ defined in \eqref{eq:Qzz_n}
it follows that the first part of the RHS in \eqref{eq:nATL} can
be written as
\begin{equation}
n_{k}^{1/2}\text{\ensuremath{\hat{g}_{n_{k},\gamma}}}T_{n_{k}}L_{n_{k}}=Q_{n_{k}}T_{n_{k}}L_{n_{k}}+\omega_{Vn_{k}}T_{n_{k}}L_{n_{k}}.\label{eq:g_gamma_decomp}
\end{equation}
Then, along converging subsequences it follows from Assumption \ref{assu:g_gamma},
Theorem \eqref{thm:Weyl}, \eqref{eq:Vnt} and \eqref{eq:eig_ortho}
that $\text{\ensuremath{Q_{n_{k}}}}T_{n_{k}}L_{n_{k}}=R_{n_{k}}\Lambda_{n_{k}}L_{n_{k}}\rightarrow R\bar{L}$
where $\bar{L}$ is a $d_{\gamma}\times d_{\gamma}$ matrix with diagonal
elements equal to $\lambda_{j}$ for $j\leq p$ and equal to one for
$j>p.$ By Assumption \ref{assu:CLT}, 
\begin{equation}
\omega_{Vn_{k}}T_{n_{k}}L_{n_{k}}\rightarrow_{d}\left(\omega_{V}T_{1},....,\omega_{V}T_{p},0...,0\right)=\bar{\omega}_{V}\label{eq:omega_bar}
\end{equation}
where $T_{j}$ is the $j$-th column of $T$ and $\vec\left(\omega_{V}\right)\sim N\left(0,\Sigma_{V}\right)$
such that $\vec\left(\omega_{V}T_{1},....,\omega_{V}T_{p}\right)\sim N\left(0,\Sigma_{V,pp}\right)$
where $\Sigma_{V,pp}$ is the upper corner $p\times p$ block of $\Sigma_{V}$
because $T_{j}$ are orthonormal vectors. It now follows that 
\begin{equation}
n_{k}^{1/2}\text{\ensuremath{\hat{g}_{n_{k},\gamma}}}T_{n_{k}}L_{n_{k}}\rightarrow_{d}R\bar{L}+\bar{\omega}_{V}\label{eq:g_gamma_conv}
\end{equation}
where the RHS is full column rank with probability 1 as in GKMC12.
Next consider the limit of $\hat{\Lambda}_{n,0}=\hat{\Lambda}_{n}\left(\gamma\right)$
defined in (\ref{eq:Lambda_hat}). Recall that 
\[
\hat{\Lambda}_{n,0}=n^{-1}\sum_{t=1}^{n}\sum_{s=1}^{n}k\left(\frac{t-s}{a_{n}}\right)\left(g_{s}^{\gamma}\left(\gamma_{n,0}\right)\otimes\varepsilon_{n,t}\right)
\]
where
\[
g_{t}^{\gamma}\left(\gamma_{n,0}\right)=V_{n,t}+\Pi_{n,\gamma}
\]
from \eqref{eq:Vnt}. Note that 
\[
\left(\Pi_{n,\gamma}\otimes\varepsilon_{n,t}\right)=\left(I_{d}\otimes\varepsilon_{n,t}\right)\Pi_{n,\gamma}
\]
such that $\hat{\Lambda}_{n,0}$ can be decomposed into 
\begin{equation}
\hat{\Lambda}_{n,0}=\hat{\Lambda}_{\gamma,n,0}\Pi_{n,\gamma}+\hat{\Lambda}_{V,n,0}\label{eq:Lambda_decomp}
\end{equation}
 with 
\begin{equation}
\hat{\Lambda}_{\gamma,n,0}=n^{-1}\sum_{t=1}^{n}\sum_{s=1}^{n}k\left(\frac{t-s}{a_{n}}\right)\left(I_{d}\otimes\varepsilon_{n,t}\right)\label{eq:Lam_z_Def}
\end{equation}
and 
\begin{equation}
\hat{\Lambda}_{V,n,0}=n^{-1}\sum_{t=1}^{n}\sum_{s=1}^{n}k\left(\frac{t-s}{a_{n}}\right)\left(V_{n,s}\otimes\varepsilon_{n,t}\right).\label{eq:Lam_V_Def}
\end{equation}
 Now consider 
\begin{align}
n_{k}^{1/2}\left(I_{d}\otimes\hat{g}_{n_{k},0}'\hat{\Omega}_{n_{k},0}^{-1}\right)\hat{\Lambda}_{\gamma,n_{k},0}\Pi_{n_{k},\gamma}T_{n_{k}}L_{n_{k}} & =\left[\left(I_{d}\otimes\hat{g}_{n_{k},0}'\hat{\Omega}_{n_{k},0}^{-1}\right)\hat{\Lambda}_{\gamma,n,0}\right]\text{\ensuremath{Q_{n_{k},zz}}}T_{n_{k}}L_{n_{k}}\label{eq:LamTL_limit}\\
 & =o_{p}\left(1\right).\nonumber 
\end{align}
where the $n_{k}^{1/2}$ term on the LHS of \eqref{eq:LamTL_limit}
is absorbed in $\ensuremath{Q_{n_{k},zz}}$ which was defined in \eqref{eq:Qzz_n}.
To see that the RHS of \eqref{eq:LamTL_limit} is $o_{p}\left(1\right)$,
note that $\hat{\Omega}_{n_{k},0}^{-1}\rightarrow_{p}\Omega_{0}^{-1}$
where $\Omega_{0}$ is full rank by Assumption \ref{assu:Omega_Conv},
and by Assumption \ref{assu:Lambda},
\begin{equation}
\hat{\Lambda}_{\gamma,n_{k},0}\rightarrow_{p}\Lambda_{\gamma,0}\label{eq:Lam_z_conv}
\end{equation}
for some matrix $\Lambda_{\gamma,0}$ with fixed and finite elements,
and $\hat{g}_{n_{k},0}'\hat{\Omega}_{n_{k},0}^{-1}=o_{p}\left(1\right)$
by Assumption \ref{assu:CLT}. In addition $\ensuremath{Q_{n_{k},zz}}T_{n_{k}}L_{n_{k}}=O_{p}\left(1\right)$
by a previous argument. This establishes that \eqref{eq:LamTL_limit}
is $o_{p}\left(1\right).$

For $\Sigma_{\varepsilon V}$ and $\Sigma_{\varepsilon}$ defined
in \eqref{eq:Sig_part}, note that 
\[
\Sigma_{\varepsilon V}=\lim_{k\rightarrow\infty}n_{k}^{-1}\sum_{s=1}^{n_{k}}\sum_{t=1}^{n_{k}}E\left[\varepsilon_{n_{k},t}\vec\left(V_{n_{k},s}\right)'\right]
\]
 such that 
\[
\vec\left(\Sigma_{\varepsilon V}\right)=\lim_{k\rightarrow\infty}n_{k}^{-1}\sum_{s=1}^{n_{k}}\sum_{t=1}^{n_{k}}E\left[\left(\vec\left(V_{n_{k},s}\right)\otimes\varepsilon_{n_{k},t}\right)\right].
\]
Similarly, note that 
\[
\vec\left(V_{n,s}\otimes\varepsilon_{n,t}\right)=\left(\vec\left(V_{n,s}\right)\otimes\varepsilon_{n,t}\right)
\]
and
\[
\Lambda_{V,0}=\lim_{k\rightarrow\infty}n_{k}^{-1}\sum_{s=1}^{n_{k}}\sum_{t=1}^{n_{k}}E\left[V_{n,s}\otimes\varepsilon_{n,t}\right].
\]
Write the elements of $\Lambda_{V,0}$ as 
\begin{equation}
\Lambda_{V,0}=\left(\begin{array}{ccc}
\Lambda_{V,0}^{11} & \cdots & \Lambda_{V,0}^{1d_{\gamma}}\\
\vdots & \ddots & \vdots\\
\Lambda_{V,0}^{d1} & \cdots & \Lambda_{V,0}^{dd_{\gamma}}
\end{array}\right)\label{eq:Lambda_part}
\end{equation}
where $\Lambda_{V,0}^{ij}$ are vectors of dimension $d.$ This implies
that the term $\hat{\Lambda}_{V,n_{k},0}$ in \eqref{eq:Lam_V_Def}
satisfies 
\begin{equation}
\vec\left(\hat{\Lambda}_{V,n_{k},0}\right)\rightarrow_{p}\vec\left(\Lambda_{V,0}\right)=\vec\left(\Sigma_{\varepsilon V}\right)\label{eq:Lam_V_conv}
\end{equation}
and where the convergence follows by Assumptions \ref{assu:Lambda}
and \ref{assu:CLT}. Also note that Assumptions \ref{assu:Omega_Conv}
and \ref{assu:CLT} lead to 
\begin{equation}
n_{k}^{1/2}\hat{g}_{n_{k},0}'\hat{\Omega}_{n_{k},0}^{-1}\rightarrow_{d}\omega_{\varepsilon}'\Sigma_{\varepsilon}^{-1}.\label{eq:gOmega_conv}
\end{equation}
Now consider 
\begin{align}
n_{k}^{1/2}\left(I_{d}\otimes\hat{g}_{n_{k},0}'\hat{\Omega}_{n_{k},0}^{-1}\right)\hat{\Lambda}_{V,n_{k},0} & \rightarrow_{d}\left(I_{d}\otimes\omega_{\varepsilon}'\Sigma_{\varepsilon}^{-1}\right)\Lambda_{V,0}.\label{eq:gOmgLam_conv}
\end{align}
To represent the limit \eqref{eq:gOmgLam_conv} let 
\[
\omega_{V\varepsilon}=\left(I_{d}\otimes\omega_{\varepsilon}'\Sigma_{\varepsilon}^{-1}\right)\Lambda_{V,0}.
\]
Then, using \eqref{eq:Lambda_part}
\[
\left(I_{d}\otimes\omega_{\varepsilon}'\Sigma_{\varepsilon}^{-1}\right)\Lambda_{V,0}=\left(\begin{array}{ccc}
\omega_{\varepsilon}'\Sigma_{\varepsilon}^{-1}\Lambda_{V,0}^{11} & \cdots & \omega_{\varepsilon}'\Sigma_{\varepsilon}^{-1}\Lambda_{V,0}^{1d_{\gamma}}\\
\vdots & \ddots & \vdots\\
\omega_{\varepsilon}'\Sigma_{\varepsilon}^{-1}\Lambda_{Vz,0}^{d1} & \cdots & \omega_{\varepsilon}'\Sigma_{\varepsilon}^{-1}\Lambda_{V,0}^{dd_{\gamma}}
\end{array}\right)
\]
where $\omega_{\varepsilon}'\Sigma_{\varepsilon}^{-1}\Lambda_{V,0}^{ij}$
are scalars such that 
\begin{equation}
E\left[\omega_{\varepsilon}\vec\left(\omega_{V\varepsilon}\right)'\right]=\left(E\left[\omega_{\varepsilon}\omega_{\varepsilon}'\Sigma_{\varepsilon}^{-1}\Lambda_{V,0}^{11}\right],...,E\left[\omega_{\varepsilon}\omega_{\varepsilon}'\Sigma_{\varepsilon}^{-1}\Lambda_{V,0}^{d1}\right],....,E\left[\omega_{\varepsilon}\omega_{\varepsilon}'\Sigma_{\varepsilon}^{-1}\Lambda_{V,0}^{dd_{\gamma}}\right]\right)=\Sigma_{\varepsilon V}\label{eq:E_veps_omegaVeps}
\end{equation}
Collect terms in \ref{eq:nATL} using\eqref{eq:g_gamma_decomp}, \eqref{eq:Lambda_decomp}
and \eqref{eq:LamTL_limit}, such that 
\begin{align*}
n_{k}^{1/2}\hat{A}_{n_{k}}T_{n_{k}}L_{n_{k}} & =\hat{\Omega}_{n_{k},0}^{-1/2}\left(\left(n_{k}^{1/2}\Pi_{n_{k},\gamma}+\omega_{n_{k},V}\right)-\left(I_{d}\otimes n_{k}^{1/2}\hat{g}_{n_{k},0}'\hat{\Omega}_{n_{k},0}^{-1}\right)\hat{\Lambda}_{V,n_{k},0}\right)T_{n_{k}}L_{n_{k}}+o_{p}\left(1\right).
\end{align*}
It now follows from \eqref{eq:omega_bar} and \eqref{eq:gOmgLam_conv}
that 
\begin{equation}
\left(\omega_{n_{k},V}-\left(I_{d}\otimes n_{n_{k}}^{1/2}\hat{g}_{n,0}'\hat{\Omega}_{n,0}^{-1}\right)\hat{\Lambda}_{V,n_{k},0}\right)T_{n_{k}}L_{n_{k}}\rightarrow_{d}\bar{\omega}_{V}-\overline{\omega_{V\varepsilon}}\label{eq:omega-conv}
\end{equation}
and where $\overline{\omega_{V\varepsilon}}=\left(\omega_{V\varepsilon}T_{1},...,\omega_{V\varepsilon}T_{p},0,...0\right)$.
The column vectors $T_{1},....,T_{p}$ are as in (\ref{eq:omega_bar}).
Introduce the notation $\omega_{V.\varepsilon}=\omega_{V}-\omega_{V\varepsilon}$
and $\overline{\omega_{V.\varepsilon}}=\bar{\omega}_{V}-\overline{\omega_{V\varepsilon}}$.
It follows from Assumption $\ref{assu:CLT}$ that $\omega_{\varepsilon}$
and $\vec\left(\omega_{V}\right)$ are jointly Gaussian, and therefore
that $\omega_{\varepsilon}$ and $\vec\left(\omega_{V.\varepsilon}\right)$
are joint Gaussian. It follows from \eqref{eq:E_veps_omegaVeps} that
\begin{equation}
E\left[\omega_{\varepsilon}\vec\left(\omega_{V.\varepsilon}\right)'\right]=E\left[\omega_{\varepsilon}\vec\left(\omega_{V}-\omega_{V\varepsilon}\right)'\right]=\Sigma_{\varepsilon V}-E\left[\omega_{\varepsilon}\vec\left(\omega_{V\varepsilon}\right)'\right]=0.\label{eq:omega_e_omega_V.e_corr}
\end{equation}
This establishes that $\omega_{V.\varepsilon}$ and $\omega_{\varepsilon}$
are independent. Combining \eqref{eq:g_gamma_conv}, \eqref{eq:Lam_z_conv},
\eqref{eq:Lam_V_conv}, \eqref{eq:gOmega_conv} and \eqref{eq:omega-conv}
then leads to 
\begin{equation}
n_{k}^{1/2}\hat{A}_{n_{k}}T_{n_{k}}L_{n_{k}}\rightarrow_{d}A=\Omega_{0}^{-1/2}\left(R\bar{L}+\overline{\omega_{V.\varepsilon}}\right)\label{eq:A_conv}
\end{equation}
where $A$ has full rank with probability 1. To see this, note that
$\bar{L}$ is a diagonal matrix with the first $p$ elements equal
to the eigenvalues $\lambda_{j}$ and the remaining elements equal
to one. Thus, the space spanned by $R\bar{L}$ consists of the last
$d_{\gamma}-p$ columns of $R$ and possibly some of the first $p.$
Since $R$ is unitary, this space has rank $d_{\gamma}-p+p^{*}$ where
$0\leq p^{*}\leq p$ is the number of non-zero eigenvalues $\lambda_{j}$
for $0\leq j\leq p.$ In addition, $\omega_{V.\varepsilon}$ is a
set of $p$ Gaussian vectors of dimension $d_{\gamma}$. Because $\Sigma_{V.\varepsilon}=E\left[\vec\omega_{V.\varepsilon}\vec\left(\omega_{V.\varepsilon}\right)'\right]$
is non-singular, $\omega_{V.\varepsilon}T_{1},...,\omega_{V.\varepsilon}T_{p}$
is a full column rank matrix w.p.1. Then, 
\begin{equation}
\left(R\bar{L}+\overline{\omega_{V.\varepsilon}}\right)=\left(\omega_{V.\varepsilon}T_{1}+R_{1}\lambda_{11},...,\omega_{V.\varepsilon}T_{p}+R_{p}\lambda_{pp},R_{p+1},...,R_{d_{\gamma}}\right)\label{eq:RL_omega_lim}
\end{equation}
which is full column rank w.p.1.
\end{proof}
\begin{lem}
\label{lem:LemA3}Let $\hat{A}_{n}$ be defined as in \ref{eq:Def_A_hat}
. Then, for any converging subsequence $n_{k}$ it follows that
\[
n_{k}\hat{g}_{n_{k},0}'\hat{\Omega}_{n_{k},0}^{-1/2}M_{\hat{A}_{n_{k}}}\hat{\Omega}_{n_{k},0}^{-1/2}\hat{g}_{n_{k},0}\rightarrow_{d}\chi_{d-d_{\gamma}}^{2}.
\]
\end{lem}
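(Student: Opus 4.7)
The plan is to reduce the quadratic form to a standardized Gaussian sandwiched by a random projection of deterministic rank, show the two limits are independent, and then condition on the projection.

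First I would observe that the long-run variance of the moment condition coincides with the variance of the score limit: $\Omega_0 = \Sigma_\varepsilon$, where $\Sigma_\varepsilon$ is the upper-left block of $\Sigma$ in \eqref{eq:Sig_part}. This holds because $\Omega_n(\theta_{n,0})$ is by definition the variance of $n^{-1/2}\sum_t \varepsilon_{n,t}$. Combined with Assumption \ref{assu:Omega_Conv} and Assumption \ref{assu:CLT}, it follows that along the converging subsequence
\[
Z_{n_k} \;:=\; n_k^{1/2}\hat{\Omega}_{n_k,0}^{-1/2}\hat{g}_{n_k,0} \;\to_d\; Z \;=\; \Omega_0^{-1/2}\omega_\varepsilon \;\sim\; N(0,I_d).
\]

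Next I would use that $M_{\hat{A}_{n_k}}$ depends only on the column space of $\hat{A}_{n_k}$, which is preserved under right-multiplication by the full-rank $d_\gamma\times d_\gamma$ matrix $T_{n_k}L_{n_k}$. Hence $M_{\hat{A}_{n_k}} = I - P_{\hat{A}_{n_k}T_{n_k}L_{n_k}}$. Since the map $M\mapsto M(M'M)^{-1}M'$ is continuous on the set of matrices of full column rank, Lemma \ref{lem:LemA2} together with the continuous mapping theorem yields $M_{\hat{A}_{n_k}} \to_d M_A := I - A(A'A)^{-1}A'$, which is an orthogonal projection of rank exactly $d-d_\gamma$ almost surely (by the full-column-rank conclusion of Lemma \ref{lem:LemA2}).

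The critical step is independence. The limit $A = \Omega_0^{-1/2}(R\bar{L} + \overline{\omega_{V.\varepsilon}})$ is measurable with respect to $\sigma(\omega_{V.\varepsilon})$ together with deterministic objects, while $Z$ is a measurable function of $\omega_\varepsilon$. By \eqref{eq:omega_e_omega_V.e_corr}, $\omega_{V.\varepsilon}$ and $\omega_\varepsilon$ are jointly Gaussian with zero covariance, hence independent. Passing to a further subsubsequence if necessary to secure joint convergence (justified by asymptotic tightness, Assumption \ref{assu:Asy_Tightness}), I obtain $(Z_{n_k}, M_{\hat{A}_{n_k}}) \to_d (Z, M_A)$ with $Z\perp M_A$. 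The continuous mapping theorem applied to $(z,M)\mapsto z'Mz$ then gives
\[
n_k\,\hat{g}_{n_k,0}'\hat{\Omega}_{n_k,0}^{-1/2}M_{\hat{A}_{n_k}}\hat{\Omega}_{n_k,0}^{-1/2}\hat{g}_{n_k,0} \;=\; Z_{n_k}'M_{\hat{A}_{n_k}}Z_{n_k} \;\to_d\; Z'M_A Z.
\]
Conditional on $M_A$, since $Z\sim N(0,I_d)$ is independent of $M_A$ and $M_A$ is a deterministic-rank orthogonal projection, $Z'M_A Z \mid M_A \sim \chi^2_{d-d_\gamma}$; the conditional law does not depend on $M_A$, so it is also the unconditional limit.

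The main obstacle is verifying the joint convergence while preserving the independence structure. The difficulty is that $\hat{A}_{n_k}$ and $\hat{g}_{n_k,0}$ are built from overlapping observations, so the asymptotic independence must come entirely from the orthogonalization baked into \eqref{eq:Def_A_hat}: the subtraction of $(I_d\otimes\hat{g}_{n,0}'\hat{\Omega}_{n,0}^{-1})\hat{\Lambda}_{n}(\gamma_{n,0})$ is precisely what removes the component of $\hat{g}_{n,\gamma}$ along $\hat{g}_{n,0}$, yielding $\omega_{V.\varepsilon} = \omega_V - \omega_{V\varepsilon}$ uncorrelated with $\omega_\varepsilon$ in the limit via \eqref{eq:E_veps_omegaVeps}–\eqref{eq:omega_e_omega_V.e_corr}. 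Any failure of the full-column-rank property of $A$ would also be fatal, but this is exactly what Lemma \ref{lem:LemA2} establishes via the representation \eqref{eq:RL_omega_lim}.
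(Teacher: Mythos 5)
Your proposal is correct and follows essentially the same route as the paper's proof: standardize $n_k^{1/2}\hat{\Omega}_{n_k,0}^{-1/2}\hat{g}_{n_k,0}$ using $\Sigma_{\varepsilon}=\Omega_{0}$, exploit invariance of $P_{\hat{A}_{n_k}}$ to the rotation $T_{n_k}L_{n_k}$ so Lemma \ref{lem:LemA2} plus continuity of the projection map gives $M_{\hat{A}_{n_k}}\rightarrow_{d}M_{A}$ of rank $d-d_{\gamma}$, use the zero covariance in \eqref{eq:omega_e_omega_V.e_corr} for independence of $\omega_{\varepsilon}$ and $A$, and condition on $M_{A}$ to obtain the $\chi_{d-d_{\gamma}}^{2}$ limit. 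Your explicit handling of joint convergence (via a subsubsequence argument, harmless since the subsequential limit law is always the same) is a minor refinement of what the paper leaves implicit through Assumption \ref{assu:CLT}, but the argument is the same in substance.
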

\begin{proof}
First note that by Lemma \ref{lem:LemA1}(iii) along converging subsequences

\begin{equation}
n_{k}\tilde{g}_{n_{k},0}'\tilde{\Omega}_{n_{k},0}^{-1/2}M_{\hat{A}_{n_{k}}}\tilde{\Omega}_{n_{k},0}^{-1/2}\tilde{g}_{n_{k},0}=n_{k}\hat{g}_{n_{k},0}'\hat{\Omega}_{n_{k},0}^{-1/2}M_{\hat{A}_{n_{k}}}\hat{\Omega}_{n_{k},0}^{-1/2}\hat{g}_{n_{k},0}+o_{p}\left(1\right).\label{eq:Jstat1}
\end{equation}
Note that by Assumption \ref{assu:CLT} and for $S_{n}$ partitioned
as $S_{n}=\left[S'_{n,\varepsilon},S'_{n,V}\right]'$ where $S_{n,\varepsilon}=n^{-1/2}\sum_{i=1}^{n}\varepsilon_{n,t}=\hat{g}_{n,0}$
and $S_{n,V}=n^{-1/2}\sum_{i=1}^{n}\vec\left(V_{n,t}\right)$ it follows
that $\Sigma_{n_{k}}^{-1/2}S_{n}\rightarrow_{d}\Sigma^{-1/2}\omega\sim N\left(0,I\right)$.
The limiting random variable $\omega$ can be partitioned conformingly
as $\omega=\left(\omega_{\varepsilon}',\vec\omega_{V}'\right)'$.
By Assumption \ref{assu:CLT} the matrix $\Sigma_{n_{k}}$ converges
to a limit, denoted by $\Sigma$ which can be partitioned conformingly
with $\omega_{\varepsilon}$ as in \eqref{eq:Sig_part} such that
$\Sigma_{\varepsilon}$ is equal to $\Omega_{0}$.

As in GKMC12, we note that $P_{\hat{A}_{n_{k}}}$ is invariant to
the scaling of $\hat{A}_{n_{k}}$ by $n_{k}^{1/2}$ as well as to
the rotation of the column space of $\hat{A}_{n_{k}}$ by a full rank
matrix $T_{n_{k}}L_{n_{k}}.$ By (\ref{eq:A_conv}) it follows that
$P_{\hat{A}_{n_{k}}}\rightarrow_{d}P_{A}$ by the continuous mapping
theorem. To see this note that $P_{A}$ is continuous in $A$ as long
as $A$ has full column rank, \citeauthor{Stewart1977} (1977, Theorem
2.2). Since $A$ is full column rank with probability one, \citeauthor{vanderVaartWellner1996}
(1996, Theorem 1.3.6) can be applied. This argument shows that $M_{\hat{A}_{n_{k}}}=I-P_{\hat{A}_{n_{k}}}\rightarrow_{d}I-P_{A}=M_{A}.$
In addition, $\omega_{\varepsilon}$ is independent of $A$ by \eqref{eq:omega-conv},
\eqref{eq:omega_e_omega_V.e_corr}, \eqref{eq:A_conv} and \eqref{eq:RL_omega_lim}.
It now follows from \eqref{eq:Jstat1}, Assumption \ref{assu:CLT}
and the continuous mapping theorem that 
\[
n_{k}\hat{g}_{n_{k},0}'\hat{\Omega}_{n_{k},0}^{-1/2}M_{\hat{A}_{n_{k}}}\hat{\Omega}_{n_{k},0}^{-1/2}\hat{g}_{n_{k},0}\rightarrow_{d}\omega_{\varepsilon}'\Sigma_{\varepsilon}^{-1/2}M_{A}\Sigma_{\varepsilon}^{-1/2}\omega_{\varepsilon}
\]
where, conditionally on $A,$ $\Sigma_{\varepsilon}^{-1/2}\omega_{\varepsilon}\sim N\left(0,I_{d}\right).$
Since this distribution does not depend on $A,$ it follows by the
same arguments as in GKMC12 and the fact that $M_{A}$ is projection
matrix of rank $d-d_{\gamma}$ that the distribution is $\omega_{\varepsilon}'\Sigma_{\varepsilon}^{-1/2}M_{A}\Sigma_{\varepsilon}^{-1/2}\omega_{\varepsilon}\sim\chi_{d-d_{\gamma}}^{2}.$
\end{proof}
We now present the proof of the main theorem. 
\begin{proof}[Proof of Theorem \ref{thm:AsySz}]
The proof of the Theorem follows arguments in GKMC12 and \citet{Andrews2020}
with the necessary adjustments. Let $\chi_{n}^{*}$ be a double array
that solves 
\begin{equation}
\sup_{\chi_{n}\in\mathscr{X}}P\left(\text{\ensuremath{\textrm{AR}_{\textrm{C}}\left(\beta_{n,0}\right)}}>c_{1-\alpha,\chi_{d-d_{\gamma}}^{2}}\right)
\end{equation}
and denote by $P_{\chi_{n}^{*}}$ the measure induced by $\chi_{n}^{*}$.
Then, 
\begin{eqnarray}
\textrm{AsySz}_{\alpha} & = & \limsup_{n\rightarrow\infty}P_{\chi_{n}^{*}}\left(\text{\ensuremath{\textrm{AR}_{\textrm{C}}\left(\beta_{n,0}\right)}}>c_{1-\alpha,\chi_{d-d_{\gamma}}^{2}}\right).\label{eq:Proof_Thm1_D1}
\end{eqnarray}
Now let $n_{k}$ be a converging subsequence such that 
\begin{equation}
\limsup_{n\rightarrow\infty}P_{\chi_{n}^{*}}\left(\text{\ensuremath{\textrm{AR}_{\textrm{C}}\left(\beta_{n,0}\right)}}>c_{1-\alpha,\chi_{d-d_{\gamma}}^{2}}\right)=\lim_{k\rightarrow\infty}P_{\chi_{n_{k}}^{*}}\left(\text{\ensuremath{\textrm{AR}_{\textrm{C}}\left(\beta_{n_{k},0}\right)}}>c_{1-\alpha,\chi_{d-d_{\gamma}}^{2}}\right).\label{eq:Proof_Thm1_D2}
\end{equation}
Such a subsequence $n_{k}$ exists by the properties of '$\limsup$'
and Assumption \ref{assu:Asy_Tightness}. By Lemma \ref{lem:LemA1}(i)
and \eqref{eq:up-bound} the following inequality holds for all converging
subsequences $n_{k}$
\begin{equation}
\textrm{AR}_{\textrm{C}}\left(\beta_{n_{k},0}\right)\leq n_{k}\tilde{g}_{n_{k},0}'\tilde{\Omega}_{n_{k},0}^{-1/2}M_{\hat{A}_{n_{k}}}\tilde{\Omega}_{n_{k},0}^{-1/2}\tilde{g}_{n_{k},0}+\varpi_{n_{k}}.\label{eq:Proof_Thm1_D3}
\end{equation}
Combining \eqref{eq:Proof_Thm1_D1}, \eqref{eq:Proof_Thm1_D2} and
\eqref{eq:Proof_Thm1_D3} gives 
\begin{equation}
\textrm{AsySz}_{\alpha}\leq\lim_{k\rightarrow\infty}P_{\chi_{n_{k}}^{*}}\left(n_{k}\tilde{g}_{n_{k},0}'\tilde{\Omega}_{n_{k},0}^{-1/2}M_{\hat{A}_{n_{k}}}\tilde{\Omega}_{n_{k},0}^{-1/2}\tilde{g}_{n_{k},0}+\varpi_{n_{k}}>c_{1-\alpha,\chi_{d-d_{\gamma}}^{2}}\right).\label{eq:Proof_Thm1_D4}
\end{equation}
By Lemmas \ref{lem:LemA1}, \ref{lem:LemA2} and \ref{lem:LemA3}
and for any converging subsequence $n_{k}$, it follows that 
\begin{equation}
n\tilde{g}_{n_{k},0}'\tilde{\Omega}_{n_{k},0}^{-1/2}M_{\hat{A}_{n_{k}}}\tilde{\Omega}_{n_{k},0}^{-1/2}\tilde{g}_{n_{k},0}+\varpi_{n_{k}}\rightarrow_{d}\chi_{d-d_{\gamma}}^{2}.\label{eq:AR_conv}
\end{equation}
Then, \eqref{eq:Proof_Thm1_D4} and \eqref{eq:AR_conv} imply that
$\textrm{AsySz}_{\alpha}\leq\alpha.$ We show strict equality by following
the argument in GKMC12. When the CUE estimator $\hat{\gamma}_{n}$
is identified it follows that $\hat{\gamma}_{n_{k}}-\gamma_{n_{k},0}=O_{p}\left(n_{k}^{-1/2}\right).$
Then, the arguments in Lemmas \ref{lem:LemA1}, \ref{lem:LemA2} and
\ref{lem:LemA3} imply that 
\[
\textrm{AR}_{\textrm{C}}\left(\beta_{n_{k},0}\right)=n_{k}\hat{g}_{n_{k},0}'\hat{\Omega}_{n_{k},0}^{-1/2}M_{\hat{A}_{n_{k}}}\Omega_{n_{k},0}^{-1/2}\hat{g}_{n_{k},0}+o_{p}\left(1\right)
\]
and 
\[
n\hat{g}_{n_{k},0}'\hat{\Omega}_{n_{k},0}^{-1/2}M_{\hat{A}_{n_{k}}}\Omega_{n_{k},0}^{-1/2}\hat{g}_{n_{k},0}\rightarrow_{d}\chi_{d-d_{\gamma}}^{2}
\]
along all converging subsequences where $\gamma$ is identified. This
implies that the size of $\text{\ensuremath{\textrm{AR}_{\textrm{C}}\left(\beta_{n_{k},0}\right)}}$
is equal to $\alpha$ in the identified case and it follows that $\textrm{AsySz}_{\alpha}=\alpha.$
\end{proof}

\subsection{Derivation of CUE First Order Condition\label{subsec:Derivation-of-CUE}}

Consider the criterion $Q_{n}\left(\beta,\gamma\right)=n\hat{g}_{n}\left(\beta,\gamma\right)'\hat{\Omega}_{n}^{-1}\left(\beta,\gamma\right)\hat{g}_{n}\left(\beta,\gamma\right)$
evaluated at $Q_{n}\left(\beta_{n,0},\gamma\right)$. We evaluate
the first order condition for the minimization problem $\min_{\gamma}Q_{n}\left(\beta_{n,0},\gamma\right)$.
Following \citet{Magnus1988}, p. 183 as well as the discussion in
Section 13, we start by computing total derivatives of $Q_{n}\left(\beta_{n,0},\gamma\right).$
We use the shorthand notation $\hat{g}_{n}'\Omega_{n}^{-1}\hat{g}_{n}$
when no confusion arises. Using $d$ as the differential operator
We have 
\begin{align*}
d\left(\hat{g}_{n}'\Omega_{n}^{-1}\hat{g}_{n}\right) & =\left(d\hat{g}_{n}\right)'\Omega_{n}^{-1}\hat{g}_{n}+\hat{g}_{n}'\Omega_{n}^{-1}\left(d\hat{g}_{n}\right)-\hat{g}_{n}'\Omega_{n}^{-1}d\Omega_{n}\Omega_{n}^{-1}\hat{g}_{n}\\
 & =2\hat{g}_{n}'\Omega_{n}^{-1}\left(d\hat{g}_{n}\right)-\hat{g}_{n}'\Omega_{n}^{-1}d\Omega_{n}\Omega_{n}^{-1}\hat{g}_{n}.
\end{align*}
Since the LHS is a scalar it follows that 
\begin{align*}
d\left(\hat{g}_{n}'\Omega_{n}^{-1}\hat{g}_{n}\right) & =\left(2d\hat{g}_{n}-\hat{g}_{n}'\Omega_{n}^{-1}d\Omega_{n}\right)'\Omega_{n}^{-1}\hat{g}_{n}\\
 & =\left(\vec\left(2d\hat{g}_{n}-\hat{g}_{n}'\Omega_{n}^{-1}d\Omega_{n}\right)\right)'\Omega_{n}^{-1}\hat{g}_{n}
\end{align*}
where the second quality follows because $d\hat{g}_{n}$, $\hat{g}_{n}'\Omega_{n}^{-1}d\Omega_{n}$,
and $\Omega_{n}^{-1}\hat{g}_{n}$ are vectors. Then, it is easy to
see that 
\[
d\left(\hat{g}_{n}'\Omega_{n}^{-1}\hat{g}_{n}\right)=\hat{g}_{n}'\Omega_{n}^{-1}\left(2d\hat{g}_{n}-\left(I_{d}\otimes\hat{g}_{n}'\Omega_{n}^{-1}\right)d\vec\left(\Omega_{n}\right)\right).
\]
Following \citet{Magnus1988}, p. 173, Definition 1, we obtain the
vector of partial derivatives as 
\[
\frac{\partial\hat{g}_{n}'\Omega_{n}^{-1}\hat{g}_{n}}{\partial\gamma'}=\hat{g}_{n}'\Omega_{n}^{-1}\left(2\frac{\partial\hat{g}_{n}}{\partial\gamma'}-\left(I_{d}\otimes\hat{g}_{n}'\Omega_{n}^{-1}\right)\frac{\partial\vec\left(\Omega_{n}\right)}{\partial\gamma'}\right)
\]
where $\partial\hat{g}_{n}/\partial\gamma'$ is a $d\times d_{\gamma}$
dimensional matrix and $\partial\vec\left(\Omega_{n}\right)/\partial\gamma'$
is a $d^{2}\times d_{\gamma}$ dimensional matrix. Finally, consider
the term $d\Omega_{n}$ which can be evaluated as 
\begin{align*}
d\vec\Omega_{n} & =n^{-1}\sum_{t=1}^{n}\sum_{s=1}^{n}k\left(\frac{t-s}{a_{n}}\right)\left\{ g_{s}\left(\beta,\gamma\right)\otimes\left(dg_{t}\left(\beta,\gamma\right)\right)+\left(dg_{s}\left(\beta,\gamma\right)\right)\otimes g_{t}\left(\beta,\gamma\right)\right\} \\
 & =n^{-1}\sum_{t=1}^{n}\sum_{s=1}^{n}k\left(\frac{t-s}{a_{n}}\right)\left\{ \left(g_{s}\left(\beta,\gamma\right)\otimes I_{d}\right)dg_{t}\left(\beta,\gamma\right)+\left(I_{d}\otimes g_{t}\left(\beta,\gamma\right)\right)dg_{s}\left(\beta,\gamma\right)\right\} 
\end{align*}
where the second equality follows from \citet{Magnus1988}, p. 185,
(11) and therefore the derivative can be written as 
\begin{align*}
\frac{\partial\vec\Omega_{n}}{\partial\gamma'} & =n^{-1}\sum_{t=1}^{n}\sum_{s=1}^{n}k\left(\frac{t-s}{a_{n}}\right)\left\{ \left(g_{s}\left(\beta,\gamma\right)\otimes I_{d}\right)\frac{\partial g_{t}\left(\beta,\gamma\right)}{\partial\gamma'}+\left(I_{d}\otimes g_{t}\left(\beta,\gamma\right)\right)\frac{\partial g_{s}\left(\beta,\gamma\right)}{\partial\gamma'}\right\} \\
 & :=\hat{\Lambda}_{n}^{1}+\hat{\Lambda}_{n}^{2}
\end{align*}
However, note that the derivative in $\hat{g}_{n}'\Omega_{n}^{-1}d\Omega_{n}\Omega_{n}^{-1}\hat{g}_{n}$
satisfies the property 
\begin{align*}
d\Omega_{n} & =n^{-1}\sum_{t=1}^{n}\sum_{s=1}^{n}k\left(\frac{t-s}{a_{n}}\right)\left(\left(dg_{s}\left(\beta,\gamma\right)\right)g_{t}\left(\beta,\gamma\right)'+g_{s}\left(\beta,\gamma\right)\left(dg_{t}\left(\beta,\gamma\right)\right)'\right)\\
 & :=\tilde{d\Omega_{n}}+\tilde{d\Omega_{n}}'
\end{align*}
and 
\begin{align*}
\hat{g}_{n}'\Omega_{n}^{-1}d\Omega_{n}\Omega_{n}^{-1}\hat{g}_{n} & =\hat{g}_{n}'\Omega_{n}^{-1}\left(\tilde{d\Omega_{n}}+\tilde{d\Omega_{n}}'\right)\Omega_{n}^{-1}\hat{g}_{n}\\
 & =2\hat{g}_{n}'\Omega_{n}^{-1}\tilde{d\Omega_{n}}\Omega_{n}^{-1}\hat{g}_{n}
\end{align*}
because the criterion function is a scalar which is invariant to transposes.
The implication is that we can focus 
\[
\frac{\partial\hat{g}_{n}'\Omega_{n}^{-1}\hat{g}_{n}}{\partial\gamma'}=2\hat{g}_{n}'\Omega_{n}^{-1}\left(\frac{\partial\hat{g}_{n}}{\partial\gamma'}-\left(I_{d}\otimes\hat{g}_{n}'\Omega_{n}^{-1}\right)\hat{\Lambda}_{n}\right)
\]
and the factor $2$ can be dropped when evaluating first order conditions.
\end{document}